\DeclareMathAlphabet{\mathcal}{OMS}{ntxm}{m}{n}
\let\mathbb\relax
\let\mathbb\mathds
\renewcommand{\paragraph}{%
  \@startsection{paragraph}{4}%
  {\z@}{2.25ex \@plus 1ex \@minus .2ex}{-1em}%
  {\normalfont\normalsize\bfseries}%
}
\definecolor{linkblue}{HTML}{001487}
\newtheorem{theorem}{Theorem}[section]
\newtheorem*{theorem*}{Theorem}
\newtheorem{proposition}[theorem]{Proposition}
\newtheorem{lemma}[theorem]{Lemma}
\newtheorem{corollary}[theorem]{Corollary}
\theoremstyle{remark}
\newtheorem{remark}[theorem]{Remark}
\theoremstyle{definition}
\newtheorem{definition}[theorem]{Definition}
\newtheorem{assumption}[theorem]{Assumption}
\numberwithin{equation}{section}
\newcommand\numberthis{\addtocounter{equation}{1}\tag{\theequation}}
\newcommand{\setft}[1]{\textnormal{#1}}
\newcommand{\eps}{\epsilon}
\newcommand{\C}{\ensuremath{\mathds{C}}}
\newcommand{\N}{\ensuremath{\mathds{N}}}
\newcommand{\Z}{\ensuremath{\mathds{Z}}}
\newcommand{\bits}{\ensuremath{\{0, 1\}}}
\newcommand{\ot}{\ensuremath{\otimes}}
\newcommand{\deq}{\coloneqq}
\newcommand{\tr}[1]{\mathrm{Tr}\!\left[ #1 \right]}
\newcommand{\pr}[1]{\mathrm{Pr}\!\left[ #1 \right]}
\newcommand{\prs}[2]{\mathrm{Pr}_{#1}\!\left[ #2 \right]}
\newcommand{\norm}[1]{\left\lVert#1\right\rVert}
\DeclareMathOperator{\poly}{poly}
\DeclareMathOperator{\negl}{negl}
\DeclareMathOperator{\img}{img}
\DeclareMathOperator{\rank}{rank}
\DeclareMathOperator{\E}{\mathds{E}}
\newcommand{\sth}{{\setft{~s.t.~}}}
\newcommand{\ket}[1]{|#1\rangle}
\newcommand{\bra}[1]{\langle#1|}
\newcommand{\proj}[1]{\ket{#1}\!\bra{#1}}
\DeclarePairedDelimiterX\braket[2]{\langle}{\rangle}{#1 \delimsize\vert #2}
\newcommand{\cA}{\ensuremath{\mathcal{A}}}
\newcommand{\cH}{\ensuremath{\mathcal{H}}}
\newcommand{\cK}{\ensuremath{\mathcal{K}}}
\newcommand{\psioutput}{\ket{\psi_{\mathrm{output}}}}
\newcommand{\psiground}{\ket{\psi_{\mathrm{ground}}}}
\newcommand{\ontop}[2]{{\begin{array}{l} {#1} \\ {#2} \end{array}}}
\newcommand{\ketbra}[2]{|#1\rangle\langle#2|}
\newcommand{\stackket}[2]{{\Big | \hskip -5pt \ontop{#1}{#2} \hskip -3pt  \Big \rangle}}
\newcommand{\stackbra}[2]{{\Big \langle \hskip -5pt \ontop{#1}{#2} \hskip -3pt  \Big |}}
\newcommand{\stackketbra}[2]{{\stackket{#1}{#2} \stackbra{#1}{#2} }}
\DeclareSymbolFont{greekletters}{OML}{ntxmi}{m}{it}
\DeclareMathSymbol{\alpha}{\mathord}{greekletters}{"0B}
\DeclareMathSymbol{\beta}{\mathord}{greekletters}{"0C}
\DeclareMathSymbol{\gamma}{\mathord}{greekletters}{"0D}
\DeclareMathSymbol{\delta}{\mathord}{greekletters}{"0E}
\DeclareMathSymbol{\epsilon}{\mathord}{greekletters}{"0F}
\DeclareMathSymbol{\zeta}{\mathord}{greekletters}{"10}
\DeclareMathSymbol{\eta}{\mathord}{greekletters}{"11}
\DeclareMathSymbol{\theta}{\mathord}{greekletters}{"12}
\DeclareMathSymbol{\iota}{\mathord}{greekletters}{"13}
\DeclareMathSymbol{\kappa}{\mathord}{greekletters}{"14}
\DeclareMathSymbol{\lambda}{\mathord}{greekletters}{"15}
\DeclareMathSymbol{\mu}{\mathord}{greekletters}{"16}
\DeclareMathSymbol{\nu}{\mathord}{greekletters}{"17}
\DeclareMathSymbol{\xi}{\mathord}{greekletters}{"18}
\DeclareMathSymbol{\pi}{\mathord}{greekletters}{"19}
\DeclareMathSymbol{\rho}{\mathord}{greekletters}{"1A}
\DeclareMathSymbol{\sigma}{\mathord}{greekletters}{"1B}
\DeclareMathSymbol{\tau}{\mathord}{greekletters}{"1C}
\DeclareMathSymbol{\upsilon}{\mathord}{greekletters}{"1D}
\DeclareMathSymbol{\phi}{\mathord}{greekletters}{"1E}
\DeclareMathSymbol{\chi}{\mathord}{greekletters}{"1F}
\DeclareMathSymbol{\psi}{\mathord}{greekletters}{"20}
\DeclareMathSymbol{\omega}{\mathord}{greekletters}{"21}
\DeclareMathSymbol{\varepsilon}{\mathord}{greekletters}{"22}
\DeclareMathSymbol{\vartheta}{\mathord}{greekletters}{"23}
\DeclareMathSymbol{\varpi}{\mathord}{greekletters}{"24}
\DeclareMathSymbol{\varrho}{\mathord}{greekletters}{"25}
\DeclareMathSymbol{\varsigma}{\mathord}{greekletters}{"26}
\DeclareMathSymbol{\varphi}{\mathord}{greekletters}{"27}
\newcommand{\roundp}[1]{\lfloor #1 \rfloor_p}
\newcommand{\keyi}{\cK^{\rm inj}}
\newcommand{\keyl}{\cK^{\rm lossy}}
\newcommand{\keyh}{\cK^{\rm hash}}
\newcommand{\ind}{\bold{1}}
\newcommand{\lwe}{\setft{LWE}}
\newcommand{\hi}{{\rm high}}
\newcommand{\lo}{{\rm low}}
\newcommand{\tab}[1]{\hyperref[tab:#1]{Table~\ref{tab:#1}}}
\newlength{\bigcirclen}
\def\statea{{\bigcirc}}
\def\stateba{{\settowidth{\bigcirclen}{$\bigcirc$}\makebox[0pt][l]{\makebox[\bigcirclen][c]{$\uparrow$}}\bigcirc}}
\def\statebb{{\settowidth{\bigcirclen}{$\bigcirc$}\makebox[0pt][l]{\makebox[\bigcirclen][c]{$\downarrow$}}\bigcirc}}
\def\stateb{{\settowidth{\bigcirclen}{$\bigcirc$}\makebox[0pt][l]{\makebox[\bigcirclen][c]{$\uparrow$}}\makebox[0pt][l]{\makebox[\bigcirclen][c]{$\downarrow$}}\bigcirc}}
\def\stateca{{\settowidth{\bigcirclen}{$\bigcirc$}\makebox[0pt][l]{\makebox[\bigcirclen][c]{$\Uparrow$}}\bigcirc}}
\def\statecb{{\settowidth{\bigcirclen}{$\bigcirc$}\makebox[0pt][l]{\makebox[\bigcirclen][c]{$\Downarrow$}}\bigcirc}}
\def\statec{{\settowidth{\bigcirclen}{$\bigcirc$}\makebox[0pt][l]{\makebox[\bigcirclen][c]{\raisebox{.035em}{\fontsize{9}{0}\selectfont$\Updownarrow$}}}\bigcirc}}
\def\stated{{\settowidth{\bigcirclen}{$\bigcirc$}%
\makebox[0pt][l]{\makebox[\bigcirclen][c]{$\times$}}\bigcirc}}
\def\stateea{{\settowidth{\bigcirclen}{$\bigcirc$}\makebox[0pt][l]{\makebox[\bigcirclen][c]{$\upuparrows$}}\bigcirc}}
\def\stateeb{{\settowidth{\bigcirclen}{$\bigcirc$}\makebox[0pt][l]{\makebox[\bigcirclen][c]{$\downdownarrows$}}\bigcirc}}
\def\statee{{\settowidth{\bigcirclen}{$\bigcirc$}\makebox[0pt][l]{\makebox[\bigcirclen][c]{$\upuparrows$}}\makebox[0pt][l]{\makebox[\bigcirclen][c]{$\downdownarrows$}}\bigcirc}}
\def\statem{{\settowidth{\bigcirclen}{$\bigcirc$}\makebox[0pt][l]{\makebox[\bigcirclen][c]{\scriptsize{M}}}\bigcirc}}
\newtheoremstyle{break}
  {\topsep}{\topsep}
  {\itshape}{}
  {\bfseries}{}
  {\newline}{}
\theoremstyle{break}
\begin{document}

\title{Public-key pseudoentanglement and the \\ hardness of learning ground state entanglement structure}

\author[1]{Adam Bouland\footnote{abouland@stanford.edu}}
\author[2]{Bill Fefferman\footnote{wjf@uchicago.edu}}
\author[2]{Soumik Ghosh\footnote{soumikghosh@uchicago.edu}}
\author[3]{\\Tony Metger\footnote{tmetger@ethz.ch}}
\author[4]{Umesh Vazirani\footnote{vazirani@eecs.berkeley.edu}}
\author[1]{Chenyi Zhang\footnote{chenyiz@stanford.edu}}
\author[1]{Zixin Zhou\footnote{jackzhou@stanford.edu}}

\affil[1]{Stanford University}
\affil[2]{University of Chicago}
\affil[3]{ETH Zurich}
\affil[4]{UC Berkeley}

\date{\vspace{-1cm}}

\maketitle
\thispagestyle{empty}

\begingroup 
\hyphenpenalty 10000
\exhyphenpenalty 10000
\begin{abstract}
Given a local Hamiltonian, how difficult is it to determine the entanglement structure of its ground state?
We show that this problem is computationally intractable even if one is only trying to decide if the ground state is volume-law vs near area-law entangled.
We prove this by constructing strong forms of pseudoentanglement in a public-key setting, where the circuits used to prepare the states are public knowledge.
In particular, we construct two families of quantum circuits which produce volume-law vs near area-law entangled states, but nonetheless the classical descriptions of the circuits are indistinguishable under the Learning with Errors (LWE) assumption.
Indistinguishability of the circuits then allows us to translate our construction to Hamiltonians.
Our work opens new directions in Hamiltonian complexity, for example whether it is difficult to learn certain phases of matter. 
\end{abstract}
\endgroup

\newpage 
\thispagestyle{empty}
{
\hypersetup{linkcolor=black}
\setcounter{tocdepth}{3}
\tableofcontents
}

\newpage

\maketitle
\section{Introduction}
The central problem in Hamiltonian complexity is to understand the structure of ground states of local Hamiltonians and the difficulty of learning properties of them, e.g.~\cite{hastings2007area,schuch2008computational,landau2015polynomial,arad2017rigorous,anshu2023survey}. 
In this work we study the following question: given a local Hamiltonian $H$, how difficult is it to learn about the entanglement structure of its ground state? For example, given $H$, can you tell if its ground state is area-law or volume-law entangled?  We  call such questions about the qualitative features of the entanglement structure a \emph{Learning Ground State Entanglement Structure} (LGSES) problem.
This problem is related to both condensed matter physics -- where ground state entanglement structure is a central object of study -- and may also shed light on questions in quantum gravity regarding how entanglement could possibly be dual to other physical quantities  \cite{bouland2019computational,gheorghiu2020estimating}.

Whereas most effort in many-body physics has been directed towards the positive side of this question, i.e.~finding conditions under which properties of the ground state can be efficiently learnt or computed (see e.g.~\cite{cramer2010efficient,landau2015polynomial,huang2022provably}), here  our goal is to prove hardness results for the LGSES problem from cryptographic assumptions.
This explores the limitations that any algorithm for such problems must run up against.
In other words, hardness results for the LGSES problem point to qualitative features of Hamiltonian ground states that are inherently computationally intractable to compute.

To prove computational hardness results for the LGSES problem, we will relate it to a notion called \emph{pseudoentanglement}, a term recently introduced in~\cite{bouland2022quantum}.
Informally, pseudoentangled state ensembles consist of low-entanglement states that masquerade as high-entanglement states to computationally bounded observers; in other words, a pseudoentangled state only looks like a high-entanglement state to bounded observers, whereas its actual (information-theoretic) entanglement is low.
The main result of~\cite{bouland2022quantum} is that it is possible to create pseudoentangled states which hide vast differences in entanglement.
More concretely, they construct two ensembles of quantum states, $\Psi^{\hi}$ and $\Psi^{\lo}$, such that any state $\ket{\psi} \in \Psi^{\hi}$ has high entanglement entropy across any bipartition of the qubits and states in $\Psi^{\lo}$ have low entanglement, but any computationally bounded quantum algorithm that receives a state from $\Psi^{\hi}\cup \Psi^{\lo}$ cannot tell which kind of state it received.

This notion of pseudoentanglement is interesting in its own right and has been used for various applications in~\cite{bouland2022quantum}, but its relevance to Hamiltonian complexity is unclear.
This is because the settings are inherently different:
the notion of pseudoentanglement in~\cite{bouland2022quantum} involves a \emph{quantum} input, namely copies of the relevant quantum states, being given to the distinguisher.
If we tried to translate this into a setting involving Hamiltonian ground states, we would end up in a model where we study the properties of Hamiltonian ground states given quantum copies of these ground states, but without knowing the actual Hamiltonian itself.
In contrast, in Hamiltonian complexity we assume that we know a classical description of the Hamiltonian under consideration and would like to determine properties of its ground state.

Therefore, if we want to use some notion of pseudoentanglement to prove hardness results for the LGSES problem, we need to consider a model where the distinguisher is not just given quantum copies of the high- or low-entanglement states, but rather an efficient classical description of these states.
This leads to a different kind of pseudoentanglement, which we call \emph{public-key} pseudoentanglement since the description of the states (the ``key'') can be made public.
This notion is already implicit in earlier work by Gheorghiu and Hoban~\cite{gheorghiu2020estimating}, who gave a construction of public-key pseudoentanglement that we discuss in more detail below.

\begin{definition}[Public-key pseudoentanglement (implicit in \cite{gheorghiu2020estimating})] Two ensembles of $n$-qubit $\text{poly}(n)$-gate quantum circuits $\{C_k\}$, $\{C'_k\}$ are public-key pseudoentangled with entanglement gap $f(n)$ vs $g(n)$ if they are computationally indistinguishable to poly-time quantum algorithms, and yet with high probability over the ensembles, $C_k\ket{0^n}$ has entanglement $\geq f(n)$  and $C'_k\ket{0^n}$ has entanglement $\leq g(n)$ across one or more cuts of the system.
\end{definition}

The key difference between this definition and the one in~\cite{bouland2022quantum} is that here a classical description of the circuit used to prepare the states is given as input to the distinguisher, whereas in~\cite{bouland2022quantum} the distinguisher only receives copies of the output state of the circuit.
A distinguisher can therefore not only prepare copies of the output state, but also analyse the classical description of the circuits directly to gain additional information, making it  harder to ``hide'' the entanglement of pseudoentangled states.
Hence, public-key pseudoentanglement is a much stronger notion than that in~\cite{bouland2022quantum}, which we will call \emph{private-key} pseudoentanglement as the circuits are hidden.

There are generally two features we care about in a pseudoentanglement construction: the entanglement gap, which we want to be as large as possible in order to hide as much entanglement as possible, and the set of cuts across which this entanglement gap holds.
Informally, a more general set of cuts (i.e.~bipartitions of the qubits) across which the entanglement gap holds corresponds to hiding more qualitative information about the entanglement structure of the state; this is what we are most interested in for the LGSES problem.

The pioneering work of Gheorghiu and Hoban gave the first pseudoentanglement construction with entanglement gap $n$ vs $n-O(1)$ across a single cut based on the Learning With Errors (LWE) assumption \cite{gheorghiu2020estimating}. 
Using this they showed that it is difficult to learn \emph{fine-grained} properties of the ground state entanglement -- namely if the ground state has entanglement $n$ vs $n-O(1)$ across a fixed cut. 
The basic idea is that if one passes the circuit to prepare pseudoentangled states through a modified Kitaev clock construction~\cite{kitaev,nirkhe2018approximate}, the ground state of the resulting Hamiltonian has the same entanglement properties as the output state of the circuit.
We emphasize that the Kitaev clock Hamiltonian encodes the circuit used to prepare the state in plaintext, so this application necessarily requires a \emph{public-key} construction.

However, Gheorghiu and Hoban's construction only suffices to prove hardness of detecting very small differences in the entanglement of the ground state across a single cut.
In contrast, the LGSES problem asks about \emph{qualitative} or \emph{coarse-grained} features of the entanglement structure, as very fine-grained properties are usually not physically relevant.
This raises the following question: is it possible to get a public-key pseudoentanglement construction that hides qualitatively different entanglement structures?
This would lead to natural hardness statements for the LGSES problem.

\subsection{Near area-law public-key pseudoentanglement}

Our first result is to construct strong forms of public-key pseudoentanglement from LWE.
In particular we show it is possible to hide 1D near area-law vs volume-law entanglement.

\begin{theorem}[Volume vs area-law public-key pseudoentanglement (informal)] \label{thm:intromain}
    Assuming subexponential-time hardness of LWE, there exist public-key pseudoentangled ensembles with volume-law vs near area-law entanglement when the qubits are arranged on a 1D line. 

\end{theorem}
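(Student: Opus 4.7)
My plan is to instantiate the subset phase state paradigm of~\cite{bouland2022quantum}, which achieved the desired volume-law vs.\ near area-law entanglement gap in the private-key setting, using LWE-based primitives strong enough that the classical \emph{circuit description} itself is cryptographically hiding. Both ensembles will consist of subset phase states of the form
\[
\ket{\phi_{S,f}} \;=\; \frac{1}{\sqrt{|S|}} \sum_{x \in S} (-1)^{f(x)} \ket{x},
\]
with the $n$ qubits laid out on a 1D line. In the high-entanglement ensemble $S$ is dense ($|S| \geq 2^{n-O(\log n)}$) with pseudorandom phases; in the low-entanglement ensemble $S$ is sparse ($|S| \leq 2^{\text{polylog}(n)}$). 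Because the Schmidt rank of $\ket{\phi_{S,f}}$ across any bipartition is at most $|S|$, the entanglement entropy is at most $\log|S|$ across \emph{every} 1D cut simultaneously, which immediately yields the near area law on the low side. On the high side, a standard concentration argument shows that for typical dense $(S,f)$ the reduced density matrix on either side of any 1D cut is negligibly close to maximally mixed, giving volume law across every cut; this can be inherited by an LWE-sampled $S$ provided the sampler's distribution is statistically (not merely computationally) close enough to uniform on dense subsets, a property I would verify by direct analysis of the Regev-style sampler.

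The substantive step is to construct two families of $\poly(n)$-size circuits $\{C_k\}$ and $\{C'_k\}$ from LWE such that $C_k\ket{0^n}$ is a subset phase state with a dense pseudorandom $S$, $C'_k\ket{0^n}$ is one with a sparse pseudorandom $S$, and the \emph{classical} circuit descriptions are computationally indistinguishable. I would combine two LWE-based ingredients: (i) a lossy trapdoor function in the style of Peikert--Waters, which hides the image size of an efficiently evaluable function through its injective vs.\ lossy mode and is used to produce the subset; and (ii) an LWE-based quantum-secure pseudorandom function used to stamp the phases. The subset $S$ is realized as the image of the trapdoor function when evaluated coherently on the uniform superposition over its domain, so that injective mode yields $|S| \approx 2^n$ (volume law) while lossy mode yields $|S|$ polylogarithmic (near area law); the LWE indistinguishability of the two modes survives handing the distinguisher the full circuit. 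Ancillas used during coherent evaluation are uncomputed so that only the subset phase state remains on the output register, and a hybrid argument reduces distinguishing $\{C_k\}$ from $\{C'_k\}$ to breaking LWE on either the trapdoor function or the PRF.

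The main obstacle is the public-key requirement in the second step: the circuit description must both enable efficient preparation of the dense subset phase state \emph{and} leak no structural information about $|S|$ to a polynomial-time distinguisher examining the circuit directly. This demands that the circuit access the LWE trapdoor key only through ``oblivious'' operations (coherent modular arithmetic, rounding, and function evaluation), so that any nontrivial distinguisher can be rewound into an LWE solver; the subexponential hardness assumption is invoked to absorb losses in the hybrid. A secondary subtlety, already present in the private-key construction of~\cite{bouland2022quantum}, is that the dense subset state must be preparable in polynomial depth even though $|S|$ is superpolynomial --- the coherent image-of-trapdoor-function construction above handles this automatically, since we prepare the uniform superposition over the small \emph{domain} and then apply the function in superposition, rather than enumerating $S$ directly.
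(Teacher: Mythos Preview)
Your proposal has a genuine gap in the state preparation step, and this gap is precisely why the paper does \emph{not} use subset states in the public-key setting.

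You claim that the subset $S$ is realized as the image of a lossy function $g$ evaluated coherently, with ancillas (including the domain register) uncomputed afterward. But consider what ``uncomputing the domain'' requires. Starting from $\sum_y \ket{y}\ket{g(y)}$, erasing the first register means computing $y$ from $g(y)$ and XORing it into the $y$-register. In injective mode this requires an efficient inverter for $g$, i.e., the trapdoor. If the trapdoor is embedded in the (public) circuit, an adversary can extract it and test injectivity directly, e.g., by checking whether the circuit's ``inverse'' subroutine correctly recovers $y$ from $g(y)$ on random inputs; in lossy mode it cannot, since $g$ is many-to-one. If instead the trapdoor is \emph{not} in the circuit, then uncomputation fails even in injective mode (since $g$ is one-way), and you are left with the entangled state $\sum_y \ket{y}\ket{g(y)}$ rather than a pure subset state. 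This is exactly the Gheorghiu--Hoban state, which the paper explains is fundamentally restricted to a single cut: it gives no control over entanglement \emph{within} the domain or codomain registers. Either way you do not obtain indistinguishable circuits producing pure subset phase states of vastly different support sizes.

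The paper's route is entirely different. It uses full-support binary phase states $\sum_x (-1)^{s_k(x)}\ket{x}$, which are trivially preparable from any efficiently computable $s_k$ without uncomputation issues. Entanglement is controlled not by the support size but by the rank of the $T$-matrix $T_{ij}=(-1)^{s_k(i\parallel j)}$: the low-entanglement ensemble composes the phase function with a highly compressing lossy function (reducing the number of distinct rows of $T$), while the high-entanglement ensemble uses an ``almost injective'' function that is computationally indistinguishable from the lossy one. Crucially, the paper constructs custom lossy functions $\{0,1\}^m\to\{0,1\}^m$ with matching domain and codomain (standard Peikert--Waters lossy functions do not have this), and then \emph{iterates} this row-reduction across all $n-1$ contiguous cuts in a right-to-left sweep, proving that earlier rank reductions are not undone by later ones and that the almost-injective errors do not compound. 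None of this machinery appears in your proposal, and the subset-state shortcut you propose does not circumvent it.
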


That is, in one case the states have entanglement $\Omega(\min(k, n - k))$ across any division of the qubits into $k$ vs $n-k$ qubits (volume-law), and in the other case the entanglement of any cut is $\leq |A| \text{polylog}(n)$ where $|A|$ is the area of the cut when the qubits are arranged on a 1D line (i.e.~the number of times the cut crosses the 1D line).
We call this \emph{near area-law entangled}.
This is optimal because if the polylog factor were changed to a log, then these states would be efficiently distinguishable from one another by standard Matrix Product State learning algorithms \cite{cramer2010efficient,landau2015polynomial}. 
Hiding such qualitatively different entanglement structures requires a completely different construction from \cite{gheorghiu2020estimating}.
We note that while in \cref{thm:intromain} we assume subexponential-time hardness of LWE, we also show that the standard LWE assumption implies a similar result.\footnote{In particular, the polylog correction factor to the area-law scaling is replaced by an $n^\epsilon$ correction, where $\epsilon$ can be any constant $>0$.}
We will discuss the application of our result to Hamiltonian complexity shortly, after we present its proof sketch.

\paragraph{Proof sketch for single-cut pseudoentanglement.}

Let us first consider a single cut partitioning the qubits into sets $A$ and $B$.
In this case, the most natural states to consider are of the form $\sum_{x \in \{0,1\}^n} \ket{x}_A \ket{h(x)}_B$ for some function $h$.
If one chooses $h$ to be injective, then this state has entanglement entropy $n$ across this cut; if one chooses $h$ to be $2^k$-to-1, then the entanglement is $n - k$.
\cite{gheorghiu2020estimating} used exactly these states with trapdoor claw-free functions from~\cite{randomness}, which are functions that are either injective or 2-to-1, but the two cases are computationally hard to distinguish (given a description of the function) assuming the hardness of LWE.
There are some additional subtleties arising from the fact that the trapdoor claw-free functions from~\cite{randomness} do not output numbers, but probability distributions, and are only approximately 2-to-1. 
We refer to~\cite{gheorghiu2020estimating} for a detailed analysis of this construction.

To increase the entanglement gap, we need to make the many-to-1 functions \emph{more compressing}.
The functions in \cite{gheorghiu2020estimating,randomness} additionally have a trapdoor; however, we observe that for pseudoentanglement, we can dispense with the trapdoor and only need so-called \emph{lossy functions}:
these are functions that are either injective or $2^k$-to-1, but again the two kinds of functions are hard to distinguish.
Starting from this observation, it turns out to be possible to combine the construction from~\cite{gheorghiu2020estimating} with ideas from a recent randomness generation protocol~\cite{mahadev2022efficient} to achieve an entanglement gap of $n$ vs $n^\delta$ for any $\delta > 0$.\footnote{We note that independently from and simultaneously to our work, Gheorghiu and Hoban updated their results to include a construction of this form, which achieves the aforementioned gap of $n^\delta$ vs $n$ across a fixed cut with $n$ qubits in one set and $\poly(n)$ in the other.}

The challenge with this approach based on~\cite{gheorghiu2020estimating} is that it appears fundamentally restricted to a single cut.
However, our goal is to obtain pseudoentangled states with near area-law vs volume-law entanglement structure, which requires low entanglement across \emph{exponentially many cuts} simultaneously.
This would require controlling the entanglement not just between regions $A$ and $B$, but also within these regions.
With the approach of~\cite{gheorghiu2020estimating} it seems difficult to appropriately modify the state in register $A$ without jeopardising the entanglement across the cut between $A$ and $B$.

For this reason, we need a different kind of state that allows us to control the entanglement across all cuts simultaneously.
It turns out that a useful class to consider are binary phase states as in~\cite{ji2018pseudorandom,brakerski2019pseudo,bouland2022quantum}, i.e.,~states of the form $\sum_{x\in\{0,1\}^n} (-1)^{f(x)} \ket{x}$.
Our approach will be as follows: we start from a phase state with high-entanglement across every cut.
We will then modify this state (in a computationally undetectable way) to have low entanglement across some particular cut.
This achieves the essentially same as the~\cite{gheorghiu2020estimating}-based construction above.\footnote{One advantage of this construction even for a single cut is that it allows an entanglement gap of $n$ vs $\poly \log n$ for an $O(n)$-qubit state, whereas in the \cite{gheorghiu2020estimating}-based construction the $B$-register had to have $\poly(n)$ qubits for a comparable entanglement gap.}
However, crucially our ``modification procedure'' is iterable: this means that we can perform essentially the same entanglement reduction operation across many cuts in sequence and end up with a state with low entanglement across every cut.

We first describe the construction and proof for a single cut.
For simplicity, let us first consider the cut between the first and second $n/2$ qubits.
One can easily compute that the reduced density matrix of the first half of the state is $\rho \propto T T^{\top}$ where $T_{ij}=(-1)^{f(i \parallel j)}$ and $i,j\in\{0,1\}^{n/2}$ are the first and second halves of the string $x$, i.e.,~$T$ is the truth table of the function $f$ written out in a matrix form. 
By a direct calculation, one can show an upper bound on the entanglement entropy of our phase state (i.e.~the von Neumann entropy of $\rho$) in terms of the rank of $T$, and a lower bound on the entanglement entropy in terms of the Frobenius norm of $T T^{\top}$.
Our strategy will therefore be to start from a ``$T$-matrix'' for a high-entanglement state, and then perform one of two modifications in a computationally indistinguishable way: either modify $T$ to reduce its rank to get low-entanglement states, or modify $T$ in a way that does \emph{not} reduce $\norm{T T^{\top}}_2$ too much so that the entanglement of the states remains high.
These modified $T$-matrices then correspond to modified phase states, and our goal is to hide which of the two procedures we performed, even when handing out a classical description for preparing the corresponding states.

In \cite{bouland2022quantum} the idea is to apply a private key cryptographic hash function to repeat rows of the matrix $T$ to reduce its rank.
That is, we pick a phase function $f(x)$ which yields high entanglement \cite{brakerski2019pseudo}, and then ``whittle down'' its entanglement by replacing $f(x)$ with $f(h(i)j)$  where $i,j\in\{0,1\}^{n/2}$ and $h:\{0,1\}^{n/2}\rightarrow \{0,1\}^{n/2}$ is a $2^k$-to-1 function. This reduces the number of distinct rows in $T$ (each of which is now repeated many times) and correspondingly decreases the rank of $T$.
As a result, this procedure lowers the entanglement of $\rho$ by $k$. 
On the other hand, if we pick $h$ to be a 1-to-1 function, we simply permute the rows of $T$ and do not change $\norm{T T^{\top}}_2$.
To make this public-key, we need a way of applying a $2^k$-to-1 or a 1-to-1 hash function to these phase states in such a way that it is hard to tell which one was applied, even when the state preparation circuit (and therefore the source code for $h$) is public.

At first sight, it may seem that we can use the same idea as above: take the lossy functions from~\cite{mahadev2022efficient} and use them to reduce the rank of $T$.
However, the phase state construction is much less flexible than states of the form $\sum_{x} \ket{x}\ket{h(x)}$: for the latter, the codomain of $h$ does not matter and we can use the functions $h$ from~\cite{mahadev2022efficient}, whose codomain are probability distributions over $\mathbb{Z}_q^m$ for $m \gg n$.
In contrast, for phase states we need lossy functions $h:\{0,1\}^{n/2}\rightarrow \{0,1\}^{n/2}$, i.e.~the codomain has to be the same as the domain.
This is a somewhat unusual requirement from a cryptographic perspective and forces us to use a custom construction of ``imperfect'' lossy functions based on LWE.
Concretely, we first show how to create lossy functions mapping $\{0,1\}^{n/2} \rightarrow \{0,1\}^{\text{poly}(n)}$ which are \emph{exactly} injective vs $2^k$-to-1 (for sufficiently large $k$) with high probability. This is not yet what we need, as the codomain is exponentially larger than the domain. To fix this, we compose these functions with pairwise independent hash functions which shrink the codomain size back to $2^{n/2}$. 
This can only make the $2^k$-to-1 functions more compressing, which is to our benefit.
However, this also introduces unwanted collisions for the injective functions which might break the high-entanglement case.
To deal with this, we show that the repetition pattern this produces in the matrix $T$ is sufficiently well-behaved that the corresponding states still have high entanglement.
Intuitively, this holds because even though the ``injective'' functions or no longer actually injective, they are still pairwise independent, which ensures enough independence in the row repetition pattern of $T$ to give a strong lower bound on the Frobenius norm $\norm{T T^{\top}}_2$. 

\paragraph{From single-cut to multi-cut pseudoentanglement.}
As we mentioned, the advantage of the above construction is that it can be extended to pseudoentanglement across all cuts simultaneously.
We now explain how this extension works at a high level.
A first observation is that to achieve pseudoentangled states with 1D near area law scaling, reducing the entanglement across all $n-1$ contiguous cuts of the line to $O(\text{polylog}(n))$ suffices by strong subadditivity. 
Therefore, a natural approach is to perform a 1D sweep of the line, reducing the entanglement of the contiguous cuts one at a time.
For reasons that will become apparent below, we perform this sweep right-to-left.
The final phase function is then a complicated composition of $n$ independent lossy functions and hash functions. 

To show that this construction indeed achieves pseudoentanglement across all cuts simultaneously, we need to worry about two issues: firstly, for the low entanglement states we need to ensure that performing the entanglement reduction operation for a cut towards the left of the line does not inadvertently increase the entanglement across earlier cuts to the right, so that the low entanglement across those earlier cuts is preserved.
Secondly, for the high entanglement states we need to make sure that the fact that we are using ``imperfect'' injective functions does not decrease the entanglement too much even after applying these functions across every cut.

The first concern is relatively easy to deal with due to the relationships between the ``$T$-matrices'' for different cuts.
To see this, consider a phase state $\ket{\psi} = \sum (-1)^{f(x)} \ket{x}$ on $n + m$ qubits on a line and let $T^{n|m} \in \{\pm 1\}^{2^n \times 2^m}$ be the $T$-matrix for the cut between the first $n$ and last $m$ qubits, i.e.~$T^{n|m}_{ij}=(-1)^{f(i \parallel j)}$ for $i \in \bits^n, j \in \bits^m$.
After performing the entanglement reduction operation, this matrix will only have some smaller number $R$ of distinct rows, each repeated many times.
In the next step of the sweep (recalling that we move right to left), we consider the cut between the first $n-1$ and last $m+1$ qubits.
We denote the corresponding $T$-matrix by $T^{n-1|m+1} \in \{\pm 1\}^{2^{n-1} \times 2^{m+1}}$.
From the definition of the $T$-matrix, one can see that the first row of $T^{n-1|m+1}$ simply consists of the first two rows of $T^{n|m}$ stacked side by side.
More generally, the $i$-th row of $T^{n-1|m+1}$ is simply the horizontal concatenation of rows $2i - 1$ and $2i$ from $T^{n|m}$.
Suppose we now reduce the rank of $T^{n-1|m+1}$ by removing some rows and duplicating others.
We then need to check that if we go back to the cut $n|m$, the resulting $T$-matrix (denoted $\tilde T^{n|m}$) still has rank at most $R$.
This is the case since the rows of $\tilde T^{n|m}$ consist of the first and second parts of the rows of $T^{n-1|m+1}$; since the rank reduction only repeats, but does not modify, rows in $T^{n-1|m+1}$, every row in $\tilde T^{n|m}$ must have already appeared in $T^{n|m}$.
As a result, the subsequent rank reduction for cut $n-1|m+1$ can only decrease, not increase, the rank of the $T$-matrix across $n|m$.
It is not too hard to see that this argument generalises to any future rank reduction operation, not just the immediately subsequent one.

The second concern is more difficult to deal with.
When applying this sweep with \emph{approximately} injective functions, the entanglement is reduced slightly each time. The rightmost (first) cut in particular has its entanglement reduced $n$ times, so even a tiny loss could kill the entire construction.
Perhaps surprisingly, we show that this is not the case, and the entanglement losses do not compound too badly. 
We show that different rows have different probabilities of being hashed together due to the structure of the sweep, and a careful accounting of this process reveals that not much entanglement is lost, even in the first cut.
The analysis is somewhat technical and we refer to \cref{sec:multicut} for details.
Finally, we note that while the construction we described here does not produce pseudorandom states ensembles (i.e.~the families of pseudoentangled states, without the public key, are not necessarily pseudorandom states~\cite{ji2018pseudorandom}), we can make a simple modification to our construction to ensure that this is the case.
Since our applications do not rely on this we only give a sketch of this in \cref{sec:smallcut}.

\subsection{Hardness of learning ground state entanglement structure}

Our second result is to show that this public-key, area vs volume-law pseudoentanglement construction enables new applications in quantum Hamiltonian complexity. 
Because our public-key pseudoentanglement construction can hide qualitative features of the entanglement structure, we can show natural results for the hardness of the Learning Ground State Entanglement Structure (LGSES) problem for broad differences in entanglement structure.
As we mentioned above, the main idea for turning pseudoentanglement constructions into hard instances of LGSES is to use a circuit-to-Hamiltonian construction on the state preparation circuit for the pseudoentangled state.
There are a variety of circuit-to-Hamiltonian constructions and using these on our pseudoentangled states yields a variety of hardness statements for LGSES.
In this paper, we consider three different constructions: a Kitaev clock construction with a binary clock, a Kitaev clock construction with a unary clock, and a customised version of a geoemtrically local 2D construction~\cite{aharonov2008adiabatic}.
As we discuss in \cref{sec:discussion}, an interesting open problem is whether a custom circuit-to-Hamiltonian construction that is focused purely on preserving entanglement structure (rather than $\mathsf{QMA}$-hardness) can produce hard instances of the LGSES problem for more physically natural Hamiltonians.

Using a Kitaev clock construction with a binary clock~\cite{kitaev}, we get the following result (see \cref{thm:formal-lgses-binary} for the formal statement).
\begin{theorem}[informal]
\label{thm:LGSES-binary}
Assuming subexponential-time hardness of LWE, LGSES is intractable when the input Hamiltonian is $O(\log n)$-local on $n$ qubits, and the goal is to decide whether the ground state is volume-law or near area-law entangled for the qubits arranged on a 1D line.
\end{theorem}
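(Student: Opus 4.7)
The plan is to instantiate a binary-clock Kitaev circuit-to-Hamiltonian construction on the pseudoentangled circuits from \cref{thm:intromain} and show that the resulting history state inherits the near area-law vs volume-law distinction, after which the indistinguishability of the circuits upgrades to indistinguishability of the Hamiltonians.

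In more detail, let $\{C_k\}$, $\{C'_k\}$ be the ensembles of size $T_0 = \poly(n)$ produced by \cref{thm:intromain}. First I would pad each circuit with $T - T_0$ identity gates, for $T = T_0 \cdot n^{c}$ with a large enough constant $c$, so that $T_0/T = n^{-\Omega(1)}$ and yet $T = \poly(n)$. I would then apply the standard binary-clock Kitaev construction to the padded circuit: the clock register uses $\log T = O(\log n)$ qubits, and each input, clock-legality, and propagation term acts on at most $2 + O(\log T) = O(\log n)$ qubits, so the resulting Hamiltonian $H_k$ (resp.\ $H'_k$) is $O(\log n)$-local. I would arrange the qubits on a 1D line with all $O(\log n)$ clock qubits in one contiguous block at the left end, followed by the $n$ computational qubits in their original order.

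The key analytical step is to bound the entanglement of the unique frustration-free ground state
\[
    \ket{\psi_{\mathrm{hist}}} = \frac{1}{\sqrt{T+1}}\sum_{t=0}^{T}\ket{t}_C\ket{\psi_t}_W
\]
across each contiguous cut. Because $\ket{\psi_t}=\ket{\psi}\coloneqq C_k\ket{0^n}$ whenever $t\geq T_0$, the overlap of $\ket{\psi_{\mathrm{hist}}}$ with the product state $\ket{u}\otimes\ket{\psi}$, where $\ket{u}\propto\sum_{t\geq T_0}\ket{t}$, is at least $\sqrt{1 - T_0/(T+1)}$, so the two pure states differ in trace distance by at most $n^{-\Omega(1)}$. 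Fannes's inequality then implies that the von Neumann entropy of every reduced state of $\ket{\psi_{\mathrm{hist}}}$ agrees, up to an additive $o(1)$ term, with that of $\ket{u}\otimes\ket{\psi}$. For cuts whose left side contains the entire clock, this entropy equals the entropy of $\ket{\psi}$ across the induced cut of the computational register, inheriting the $\Omega(\min(k,n-k))$ vs $\poly\log(n)$ gap from \cref{thm:intromain}; for cuts that split the clock, both sides contain only $O(\log n)$ clock qubits and the entropy is $O(\log n)$, which is consistent with both scalings on those small cuts.

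The final step is indistinguishability: the map $C\mapsto H$ is a deterministic polynomial-time classical algorithm, so any efficient algorithm for LGSES on $H$ pulls back to an efficient distinguisher for $\{C_k\}$ vs $\{C'_k\}$, contradicting \cref{thm:intromain}. The main obstacle I anticipate is the upper bound on entanglement in the near area-law case: the intermediate states $\ket{\psi_t}$ for $t<T_0$ produced by the pseudoentangled circuit are not themselves guaranteed to be near area-law, as the circuit may go through ancillae or high-entanglement intermediate configurations during phase evaluation, so a naive term-by-term bound on the entropy of the history state seems hopeless. The padding plus Fannes argument sidesteps this by showing that the \emph{junk} arising from the first $T_0$ time steps has vanishing weight in the history state; this is the technical heart of the reduction and is what dictates the choice $T \gg T_0$.
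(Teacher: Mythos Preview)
Your proposal is correct and follows essentially the same route as the paper: pad the pseudoentanglement circuits with polynomially many identities, apply the binary-clock Kitaev construction to get an $O(\log n)$-local Hamiltonian, use a Fannes-type continuity bound to transfer the entanglement profile of the output state to the history state, absorb the $O(\log n)$ clock qubits into the slack of the entanglement gap, and pull indistinguishability back through the deterministic circuit-to-Hamiltonian map. The only cosmetic difference is that the paper compares the history state to $\ket{\psi}\otimes\frac{1}{\sqrt{T+1}}\sum_{t=0}^{T}\ket{t}$ rather than your $\ket{u}\propto\sum_{t\geq T_0}\ket{t}$, but both choices give the same trace-distance estimate and the argument goes through unchanged.
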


This result follows relatively straightforwardly from the standard Kitaev clock construction.
There are only two issues that need to be addressed: firstly, the ground state of the Hamiltonian in the Kitaev clock construction is the history state of the circuit, not the output state.
However, our pseudoentanglement construction only provides guarantees on the entanglement structure of the output state.
This problem can be addressed using a ``padding trick''~\cite{nirkhe2018approximate}: we can simply pad the pseudoentanglement circuit with a large (polynomial) number of identity gates at the end.
This will ensure that the history state has most weight on the output state.
Using continuity properties of the von Neumann entropy, this implies that the history state has the desired entanglement structure, too.
The second issue is that we have no control over the entanglement within the clock register of the Hamiltonian.
However, this does not matter for the coarse-grained entanglement structure of the state: since the clock register only has logarithmically many qubits, discarding it only changes the entanglement by $O(\log n)$, which is irrelevant for our $O(\poly\log n)$ vs $\Omega(n)$ entanglement gap.

The Hamiltonian in \cref{thm:LGSES-binary} does not achieve constant locality because the Hamiltonian terms acting on the binary clock register require locality $\log n$.
By using a unary clock instead of a binary clock, we can make the Hamiltonians in \cref{thm:LGSES-binary} have constant locality~\cite{kitaev}. This is also what was used in \cite{gheorghiu2020estimating} to study a Hamiltonian version of their entropy difference problem. However, the clock register now has $\Theta(n)$ qubits, and because it has so many qubits, the analysis from \cref{thm:LGSES-binary} no longer yields the desired entanglement gap. However, if we trace out the clock register and measure entanglement of the remaining mixed state by any operational mixed-state entanglement measure, we show that we still recover a maximal entanglement gap across any cut.
Intuitively, this is because due to the padding trick, after tracing out the clock register the remaining mixed state is close in trace distance to the (pure) output state of the pseudoentanglement circuit.
We refer to \cref{thm:formal-lgses-unary} for the formal statement.

The main downside of the Kitaev clock construction is that the resulting Hamiltonian is not \emph{geometrically} local, i.e.~even though we imagine the qubits as arranged on a 1D line in order to talk about area and volume law entanglement, the Hamiltonian itself has no inherent 1D geometrical structure.
In contrast, most physical Hamiltonians are geometrically local.
To obtain hard instances of the LGSES problem for geometrically local Hamiltonians we can use a more sophisticated 2D clock construction, where we account for time across one of the spatial dimensions instead of needing to add extra clock qubits to the circuit.
We first state the resulting hardness statement for LGSES informally and then briefly sketch the proof.
We refer to~\cref{thm:LGSES-hard-2d-formal} for the formal statement and \cref{sec:2dlearning} for details of the construction.

\begin{theorem} \label{thm:geom_local_intro}
Assuming subexponential-time hardness of LWE, LGSES is intractable when the input Hamiltonian is \emph{geometrically local} on a 2D grid of $n \times \mathrm{poly}(n)$ qudits with constant local dimension $d=O(1)$, and the goal is to decide whether the ground state has entanglement scaling $\mathrm{polylog}(n)$ or $n$ across horizontal cuts.
\end{theorem}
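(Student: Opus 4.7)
The plan is to combine the public-key pseudoentanglement circuits from \cref{thm:intromain} with a geometrically local 2D circuit-to-Hamiltonian construction in the spirit of \cite{aharonov2008adiabatic}. The reduction should map a pseudoentangled circuit $C_k$ on $n$ qubits to a Hamiltonian $H_k$ on an $n \times \poly(n)$ grid of qudits of constant local dimension, such that horizontal cuts of the grid probe the entanglement of the output state of $C_k$ across the corresponding cut of the 1D output register. Computational indistinguishability of the circuits then lifts to indistinguishability of the Hamiltonians, and hence to hardness of LGSES.

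First I would pad $C_k$ with $N$ identity gates appended at the end, following the padding trick of \cite{nirkhe2018approximate}, so that the history state of the padded circuit has weight $N/(T+N)$ on time steps for which the circuit register already carries the output state of $C_k$. Next I would run a geometrically local 2D clock construction on the padded circuit, placing ``time'' along the long axis of the grid, to obtain $H_k$ whose unique ground state is the history state $\ket{\eta_k}$. Constant local dimension is ensured by starting from a construction of the Aharonov--Adiabatic type rather than a Kitaev clock.

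The core step is to argue that, for any horizontal cut at row $k$ of the grid, the entanglement of $\ket{\eta_k}$ across the cut matches (up to an additive $O(\mathrm{polylog}\,n)$ term) the entanglement of the output state of $C_k$ across the corresponding cut of the line of $n$ output qubits. Writing
\begin{equation*}
\ket{\eta_k} = \sqrt{\tfrac{T}{T+N}}\,\ket{\eta_k^{\mathrm{early}}} + \sqrt{\tfrac{N}{T+N}}\,\ket{\eta_k^{\mathrm{late}}},
\end{equation*}
where $\ket{\eta_k^{\mathrm{late}}}$ is supported on time steps whose circuit register carries the output, the reduced state of $\ket{\eta_k^{\mathrm{late}}}$ on the top $k$ rows has the same entanglement structure as the output state across the corresponding cut, plus at most $O(\log n)$ overhead from the clock. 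Choosing $N$ polynomially large enough that $\sqrt{T/(T+N)}\cdot n = o(1)$, Fannes--Audenaert continuity upper bounds the entropy in the low-entanglement case by $\mathrm{polylog}(n)$, while concavity of entropy together with the $\Omega(1)$ weight on $\ket{\eta_k^{\mathrm{late}}}$ lower bounds it by $\Omega(\min(k,n-k))$ in the high-entanglement case, yielding the desired gap. The reduction from distinguishing the two ensembles of $\{H_k\}$ to distinguishing the underlying $\{C_k\}$ is then immediate from the fact that $H_k$ is produced in polynomial time from the classical description of $C_k$.

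The main obstacle I anticipate is controlling how the clock register, propagation terms, and any per-column ancillas of the 2D construction contribute to entanglement across horizontal cuts. Intermediate computation states $\ket{\psi_t}$ for $t < T$ may themselves be volume-law entangled across a given cut, so the padding must suppress their contribution polynomially; moreover, a customised version of the standard 2D clock construction may be required so that the ``time'' and ``ancilla'' information crossing a horizontal cut contributes only $O(\mathrm{polylog}\,n)$ entanglement. The delicate accounting is to choose the padding length $N$ so that the Fannes--Audenaert slack (which scales with the dimension of the cut Hilbert space, hence exponentially in the cut size) is absorbed into the $\mathrm{polylog}(n)$ vs $\Omega(n)$ gap, while keeping $N = \poly(n)$ so the grid remains of size $n \times \poly(n)$.
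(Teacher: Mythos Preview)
Your high-level plan (pad the circuits, feed them through a 2D geometrically local circuit-to-Hamiltonian construction à la \cite{aharonov2008adiabatic}, argue that horizontal cuts inherit the entanglement of $\ket{\psi_{\mathrm{out}}}$) matches the paper's, and you rightly anticipate in your last paragraph that a customised construction may be needed. But the concrete argument in your middle paragraph has a real gap at the step ``the reduced state of $\ket{\eta_k^{\mathrm{late}}}$ on the top $k$ rows has the same entanglement structure as the output state \ldots\ plus at most $O(\log n)$ overhead from the clock.'' In the 2D construction there is \emph{no separate clock register}: time is encoded implicitly, so $\ket{\eta^{\mathrm{late}}} \propto \sum_t V_t\ket{\psi_{\mathrm{out}}}$ with each $V_t$ a different isometry acting on the whole grid. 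When you trace out the bottom rows these summands can interfere, and the resulting reduced state bears no simple relation to the reduced state of $\ket{\psi_{\mathrm{out}}}$. Fannes--Audenaert gets you from $\ket{\eta}$ to $\ket{\eta^{\mathrm{late}}}$ but no further, and your concavity argument applies to mixtures, not coherent superpositions, unless the summands are already orthogonal on the traced-out side---which is exactly what is not automatic here. The paper remarks explicitly that with the unmodified \cite{aharonov2008adiabatic} construction it does not know how to control these reduced-state entropies.

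The paper's fix is a structural tool it calls \emph{cutwise orthogonality}. It enlarges the qudit alphabet (from six to nine states, adding ``flag'' and ``marker'' phases) so that the history decomposes into ``turns''---maximal time intervals during which all legal-shape changes occur on one fixed side of the horizontal cut---and proves that the uniform superpositions $\ket{\zeta(p)}$ over distinct turns have reduced states with mutually orthogonal supports on \emph{both} sides of the cut. This yields the exact identity $S((\psi_{\mathrm{ground}})_A) = \sum_p |\alpha_p|^2\, S((\zeta(p))_A) + \sum_p |\alpha_p|^2 \log(1/|\alpha_p|^2)$, with the Shannon term only $\poly\log n$ since there are $\poly(n)$ turns; and within each late turn only one side of the cut evolves, giving $S((\zeta(p))_A) = S((\psi_{\mathrm{out}})_A)$ on the nose. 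This cutwise-orthogonality mechanism, together with the alphabet modification needed to guarantee it, is the idea your proposal is missing.
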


At a high level, the construction of the 2D geometrically local case is similar to before: we take padded versions of our pseudoentanglement circuits and convert them to local Hamiltonians using a 2D circuit-to-Hamiltonian construction~\cite{aharonov2008adiabatic}.
This circuit-to-Hamiltonian construction produces a geometrically local Hamiltonian by dispensing with an explicit clock register.
As a result, the ground state also does not have a clock register and is instead of the form $\sum_{t} V_t \ket{\psi_t}$ (with normalization), where $\ket{\psi_t}$ is the state of the circuit after time step $t$ and $V_t$ are isometries such that $V_t^\dagger V_{t'} = 0$ for any $t\neq t'$.
In other words, similarly to the Kitaev clock construction, the ground state of the Hamiltonian has the form of a history state, with different time steps encoded in mutually orthogonal states.
Since we padded the circuit with identities, we can approximate this ground state by $\sum V_t \ket{\psi_{\rm out}}$ with $\ket{\psi_{\rm out}}$ the output state of our pseudoentanglement circuit. 

The challenge in bounding the entropy of the reduced states of this ``history state'' is that the different time steps are encoded in different bases, specified by the isometries $V_t$.
This is in contrast to the Kitaev construction, where the intermediate states are all encoded in the same basis.
As a result, when we trace out part of the state $\sum V_t \ket{\psi_{\rm out}}$, we get a state that looks very different from just the reduced state of $\ket{\psi_{\rm out}}$.
With the construction of~\cite{aharonov2008adiabatic}, we do not know how to bound the entanglement of these reduced states.

We therefore need to modify the construction from~\cite{aharonov2008adiabatic} to gain better control over the entropy of these reduced states.
We do this by increasing the local dimension of the qudits in order to better keep track of different steps of the circuit execution.
With this modified construction, we can ensure that reduced states of different $V_t \ket{\psi_{\rm out}}$ corresponding to different phases of the circuit execution are, in a certain sense, ``cutwise'' orthogonal (see \cref{lem:cutwise-orthogonality} for details).
The overall reduced state is now a sum of different ``orthogonal''  reduced states, each corresponding to a different phase of the circuit execution.
We can compute the entropy of each of these individual reduced states relatively easily from the entanglement properties of our pseudoentangled states.
Using cutwise orthogonality allows us relate the entropy of the overall state to the entropies of the individual reduced states that we sum over (\cref{lem:CO-entanglement}).
As a result, we can compute the entropy of the overall reduced state even though all the different time steps are encoded in different bases.

\subsection{Related work}
We have already given a detailed discussion of the work of Gheorghiu and Hoban~\cite{gheorghiu2020estimating}, which introduced the idea of public key-pseudoentanglement and gave the first construction, and the work of Aaronson et al.~\cite{bouland2022quantum}, which coined the term pseudoentanglement and gave a private-key construction with maximal entanglement gap across any cut.

The main motivation in~\cite{gheorghiu2020estimating} was to provide a hardness result for the so-called (quantum) entropy difference problem: given two (quantum) circuits, decide whose output has more entropy when acting on a uniformly random input.
If the circuit depth is polynomial, these problems are known to be complete for the complexity classes $\mathsf{QSZK}$ and $\mathsf{SZK}$, respectively~\cite{goldreich1999comparing,ben2008quantum}.
Gheorghiu and Hoban showed that for constant-depth circuits with unbounded fan-out or logarithmic-depth circuits with constant fan-out, both the QED and ED problems are still at least as hard as breaking LWE.
In their proof, the entropy difference between the high- and low-entropy circuits was a single bit.
Our improved pseudoentanglement construction implies that both ED and QED remain LWE-hard with large entropy gaps.\footnote{This result only requires our single-cut pseudoentanglement construction, for which the depth can be made logarithmic as in~\cite{gheorghiu2020estimating}. In fact, as mentioned earlier, an independently updated version of~\cite{gheorghiu2020estimating} also achieves single-cut pseudoentanglement with a large gap, implying the same hardness result for the (Q)ED problem that we obtain from our construction, although their circuits have $\poly(n)$ output qubits for an entropy gap of $n^\delta$ vs $n$, whereas ours only have $O(n)$ output qubits, i.e.~achieve a larger relative gap.
We refer to~\cite{gheorghiu2020estimating} for a more detailed  analysis.}
This is similar in spirit to the classical result that SZK gaps can be amplified \cite{sahai2003complete}. 

Recently, independent and complementary work of Arnon-Friedman, Brakerski, and Vidick~\cite{arnonfriedman2023computational} gave a new definition of pseudoentanglement.
Their definition is private-key and is natural in the context of operational tasks in quantum Shannon theory. Consequently, they focus on operational mixed-state entanglement measures across a single cut and require their states to be efficiently preparable under LOCC. 
In contrast in our work we focus on creating \emph{public-key} pseudoentanglement with different large-scale geometrical structures, which is driven by our applications in Hamiltonian complexity.

Finally, we discuss the relationship between the LGSES problem and existing algorithms for properties of ground states.
While the physics literature on computing properties of ground states is too vast to survey here, we highlight two results closer to computer science.
First, in~\cite{landau2015polynomial} the authors provide a polynomial-time algorithm that, given a classical description of a one-dimensional geometrically local Hamiltonian with constant spectral gap, outputs an MPS description of the ground state.
Therefore 1D constant gapped geometrically local Hamiltonians cannot ``hide'' anything about their ground state entanglement structure.
We note that this algorithm cannot be used on the Hamiltonians we construct in this paper as they are neither 1D geometrically local nor have constant spectral gap.
Therefore our results limit potential further improvements to their algorithm.
Second, the recent result~\cite{huang2022provably} considers the problem of distinguishing phases of matter given labelled examples of states in different phases.
The authors show that if there is a constant spectral gap and the separation between the phases is sufficiently well-conditioned\footnote{In particular, there must be a well-behaved function of few-body observables that separates the phases.}, then a classical algorithm can efficiently learn to distinguish between the phases using information from only few-body measurements.
In condensed matter physics, different qualitative entanglement structures are often associated with different quantum phases of matter; therefore our result also limits potential further improvements to such algorithms, i.e.~it is not possible to relax some of their assumptions e.g.~to gapless phases.
An interesting direction for future work is to make our pseudoentanglement Hamiltonians ``more physical'' to be closer to the assumptions of theses algorithmic settings.
This would help to better delineate the boundary between tractability vs intractability of learning properties of ground states of local Hamiltonians.

\subsection{Discussion and open questions} \label{sec:discussion}

In this work, we have introduced  and studied the Learning Ground State Entanglement Structure (LGSES) problem: given a classical description of a local Hamiltonian, determine qualitative properties of the entanglement of its ground state, e.g.~whether it is area-law or volume-law entangled.
To prove hardness results for this problem, we have related it to a notion that we call public-key pseudoentanglement: low-entanglement states that are computationally indistinguishable from high-entanglement states even when given the state preparation circuit.
Our main technical contribution is to construct public-key pseudoentanglement with (near) area-law vs volume-law scaling assuming the hardness of LWE.

Psedoentanglement is a relatively new idea with many avenues for future work.
We suggest three main directions: (i) improving pseudoentanglement constructions themselves, (ii) strengthening the link between pseudoentanglement and condensed matter physics, and (iii) applications of pseudoentanglement beyond Hamiltonian complexity.
We briefly discuss each in turn.

\begin{enumerate}
\item Our public-key pseudoentanglement construction achieves essentially optimal parameters, but its construction uses a fairly involved iterated entanglement reduction procedure.
In contrast, the private-key construction from~\cite{bouland2022quantum} is very simple and based on subset states.
It would be desirable to have a similarly straightforward construction of public-key pseudoentanglement, too.
Furthermore, as we suggested in our discussion of~\cite{arnonfriedman2023computational}, one can extend our definition of public-key pseudoentanglement to include a trapdoor that allows for efficient distillation of the ``hidden'' entanglement.
It is not obvious how to extend our construction to include this feature.
\item Our hardness results for the LGSES problem use Hamiltonians that differ from the Hamiltonians typically studied in condensed matter physics.
For example, while we do prove a hardness result for the LGSES problem for 2D geometrically local Hamiltonians, this only holds for a certain set of cuts across the system.
We expect that these results can be improved to be closer to the settings studied in condensed matter physics, such as to geometrically local Hamiltonians with more natural entanglement structures and larger spectral gaps,\footnote{Of course, we cannot hope to construct hard instances of the LGSES problem where both the area and volume law Hamiltonians are geometrically local and have constant spectral gap. This is simply because the area law (proven in 1D~\cite{hastings2007area} and in 2D under extra conditions~\cite{anshu2022area}, but widely believed to hold generally) requires \emph{any} such Hamiltonian to have area law entanglement. However, this does not rule out computationally indistinguishable families of Hamiltonians where the area law Hamiltonian has constant gap and the volume law Hamiltonian has inverse polynomial gap, since determining the spectral gap itself is computationally infeasible~\cite{cubitt2015undecidability,bausch2020undecidability}.} which would have implications for the hardness of studying quantum phases of matter.
This may require developing new sorts of clock constructions where the only goal is to preserve entanglement structure of $\textsf{BQP}$ computations rather than to encode more general $\textsf{QMA}$-complete problems.
\item While we have focused on applications in Hamiltonian complexity in this work, pseudoentanglement might be a useful tool for proving hardness results in other domains, too.
For example, recent work~\cite{bostanci2023unitary,arnonfriedman2023computational} has analysed the computational resources required to execute certain tasks from quantum Shannon theory, e.g.~entanglement distillation.
As observed in~\cite{arnonfriedman2023computational}, proving hardness results for such problems is closely related to pseudoentanglement, and we hope that our construction of public-key pseudoentanglement will lead to additional and stronger hardness results in this direction.

Furthermore, public-key pseudoentanglement might also be interesting from a quantum cryptographic point of view, in particular its trapdoor-variant we suggested above.
For example, recent work has focused on finding minimal assumptions for quantum cryptography (see e.g.~\cite{zhandry_talk} and references therein for an overview), and it would be interesting to explore how pseudoentanglement is related to these assumptions.

Finally, it is natural to ask if public-key pseudoentanglement might have applications in quantum gravity. The AdS/CFT correspondence postulates that gravitational theories are dual to quantum mechanical systems, and that the entanglement structure of the quantum system is related to the geometry of the gravitational system \cite{ryu2006holographic}. Our results show that it is difficult to estimate the entanglement of quantum states. In contrast, geometry seems to be easy to determine, which might provide an argument that this duality is exponentially hard to compute, as first suggested in \cite{bouland2019computational}.
Indeed this was part of the motivation for prior works of pseudoentanglement \cite{gheorghiu2020estimating,bouland2022quantum,arnonfriedman2023computational}.
Our public-key extension might allow one to argue about hardness of different versions of the duality, e.g.~the duality remains hard to compute even if given a parent Hamiltonian for the quantum state. 
\end{enumerate}

\subsection*{Acknowledgments}
We thank Rotem Arnon-Friedman, Jordan Docter, Tudor Giurgica-Tiron, Andru Gheorghiu, Hsin-Yuan Huang, Vinod Vaikuntanathan, and Thomas Vidick for helpful discussions. 
B.F.~and S.G.~acknowledge support from AFOSR (FA9550-21-1-0008).
This material is based upon work partially
supported by the National Science Foundation under Grant CCF-2044923 (CAREER) and by the U.S. Department of Energy, Office of Science, National Quantum Information Science Research Centers (Q-NEXT).
This research was also supported in part by the National Science Foundation under Grant No. NSF PHY-1748958.  
A.B., B.F., C.Z., and Z.Z.~were supported in part by the DOE QuantISED grant DE-SC0020360.  
A.B.~and C.Z.~were supported in part by the U.S. DOE Office of Science under Award Number DE-SC0020266. 
A.B.~was supported in part by the AFOSR under grant FA9550-21-1-0392.
C.Z.~was supported in part by the Shoucheng Zhang graduate fellowship.
T.M.~acknowledges support from SNSF Grant No.~200021\_188541 and the ETH Zurich Quantum Center.
U.V.~was supported in part by DOE NQISRC QSA grant FP00010905, NSF QLCI Grant No. 2016245, and MURI Grant FA9550-18-1-0161.

\section{Preliminaries}

\subsection{Notation} \label{sec:notation}
We write $[n]$ for the set $\{1, \dots, n\}$.
For a bitstring $x \in \bits^n$, we denote the $m$ most and least signficant bits by ${\rm MSB}_m(x)$ and ${\rm LSB}_m(x)$, respectively.
We denote the concatenation of strings by $x \parallel y$.
For a set of indices $I \subset [n]$ and bitstrings $x \in \bits^{|I|}$, $y \in \bits^{n-|I|}$ we denote by $z = x \parallel_I y$ the string $z$ that equals $x$ in indices in $I$ and $y$ on indices in $[n]\setminus I$.
We will occasionally think of a bitstring as a $\Z_2$-vector, in which case we denote it as $\vec x$.

For a matrix $A \in \C^{m \times n}$, we denote by $\|A\|_p = \tr{(A^\dagger A)^{p/2}}^{1/p}$ its Schatten $p$-norm.
The 1-norm is also called the trace norm, the 2-norm the Frobenius norm (or Hilbert-Schmidt norm), and the $\infty$-norm the operator norm.

Quantum systems are denoted by capital letters $A, B$, etc.
For a pure state $\ket{\psi}_{AB}$ or a mixed state $\rho_{AB}$ on systems $A$ and $B$, we denote the reduced states on system $A$ by $\psi_A$ and $\rho_{A}$, respectively.
We write quantum circuits as $\mathsf{C} = U_T \cdot U_{T-1} \cdots U_1$, where $U_i$ are elementary gates.
This should be thought of as a list of gates, not simply a large unitary; in particular, inserting identity gates into the circuit does change the circuit (although of course it does not change the unitary implemented by the circuit).
We will use this property of circuits in \cref{section: kitaev clock}.

\subsection{Independent hash functions}
\begin{definition}[$r$-wise independent function family]
A function family $H = \{h_k: [N] \to [M]\}_{k \in \cK}$ indexed by some set of keys $\cK$ is $r$-wise independent if for all distinct $x_1, \dots, x_r \in [N]$, the random variables $h_k(x_1), \dots, h_k(x_{r})$ (for $k \in \cK$ chosen uniformly) are uniform i.i.d.
\end{definition}

The following is a standard result (see e.g.~\cite[Corollary 3.34]{vadhan2012pseudorandomness}):
\begin{lemma} \label{lem:r-wise_exist}
For any $n,m,r \in \N$, there exists an $r$-wise independent function family $H_n = \{h_{k}: \Z_q^n 
 \to \Z_q^m\}_{k \in \cK}$ such that each $k \in \cK$ has length $\poly(n,mr,\log q)$ and given $k \in \cK$, the function $h_k$ can be evaluated in time $\poly(n,m,r,\log q)$. 
\end{lemma}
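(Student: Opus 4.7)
The plan is to use the textbook polynomial-based construction of $r$-wise independent hash functions over a finite field, then lift it to domain $\Z_q^n$ and codomain $\Z_q^m$ via suitable embeddings. First I would select a prime power $Q$ with $Q \geq q^{\max(n,m)}$ and $Q = O(q^{\max(n,m)})$; such $Q$ exists because Bertrand's postulate guarantees a prime in any interval $[N, 2N]$. Then I would consider the polynomial family $\{f_{\vec a} : \F_Q \to \F_Q\}_{\vec a \in \F_Q^r}$ defined by $f_{\vec a}(x) = \sum_{i=0}^{r-1} a_i x^i$.

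The next step is to establish that $\{f_{\vec a}\}$ is an $r$-wise independent family of functions $\F_Q \to \F_Q$. This is a direct consequence of Lagrange interpolation: for any distinct $x_1, \ldots, x_r \in \F_Q$ and any target values $y_1, \ldots, y_r \in \F_Q$, there is exactly one polynomial of degree less than $r$ sending $x_i \mapsto y_i$. Choosing $\vec a$ uniformly from $\F_Q^r$ therefore makes $(f_{\vec a}(x_1), \ldots, f_{\vec a}(x_r))$ uniform on $\F_Q^r$, which is precisely the $r$-wise independence condition.

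To obtain a family $\Z_q^n \to \Z_q^m$, I would take (assuming $q$ is a prime power) $Q = q^{\max(n,m)}$ and identify $\F_Q$ with the $\F_q$-vector space $\F_q^{\max(n,m)}$. Define an embedding $\iota : \Z_q^n \hookrightarrow \F_Q$ by zero-padding and a projection $\pi : \F_Q \twoheadrightarrow \Z_q^m$ by keeping the first $m$ coordinates, then set $h_{\vec a}(x) \coloneqq \pi(f_{\vec a}(\iota(x)))$. Its $r$-wise independence is inherited from $\{f_{\vec a}\}$ because $\pi$ pushes the uniform distribution on $\F_Q$ to the uniform distribution on $\Z_q^m$. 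The key $\vec a \in \F_Q^r$ has length $r \log Q = O(r \max(n,m) \log q) = \poly(n, mr, \log q)$ bits, and evaluation requires $O(r)$ field operations in $\F_Q$, each taking $\poly(\log Q) = \poly(n,m,\log q)$ time, totalling $\poly(n,m,r,\log q)$.

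The only mildly delicate point, and the step I expect to be the main (minor) obstacle, is handling a general modulus $q$ that need not be a prime power, since then no natural subfield inclusion $\F_q \subseteq \F_Q$ is available. The standard workaround is to choose a prime $p$ with $p \gg q^{\max(n,m)}$, run the polynomial construction over $\F_p$, and reduce the output modulo $q^m$; this yields statistical $r$-wise independence with bias $O(r q^m / p)$, which can be driven below any desired inverse polynomial by choosing $p$ large enough. This (almost-)$r$-wise independence suffices for all downstream applications in the paper, and everything else follows directly from Lagrange interpolation as in Vadhan's Corollary~3.34.
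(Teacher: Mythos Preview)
Your proposal is correct and follows the standard polynomial construction that the paper is invoking: the paper does not give its own proof of this lemma but simply cites Vadhan's Corollary~3.34, which is precisely the Lagrange-interpolation argument you spell out. Your handling of general (non-prime-power) $q$ is more careful than needed here, since every instantiation in the paper has $q$ a power of $2$ (indeed the hash families actually used are $\bits^n \to \bits$ and $\Z_{16}^m \to \Z_2^m$, both reducible to the $q=2$ case), so the exact construction over $\F_{2^{\max(n,m)}}$ already suffices and the statistical workaround is unnecessary for the paper's purposes.
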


\subsection{Entropies} \label{sec:entropies}

We recall the basic definitions of quantum entropies.
Throughout, we use the convention that $0 \log 0 = 0$.

\begin{definition}[von Neumann entropy]
The von Neumann entropy of a quantum state $\rho$ is defined as 
\begin{align*}
S(\rho) = - \tr{\rho \log \rho} \,.
\end{align*}
\end{definition}

\begin{definition}[Conditional von Neumann entropy]
The conditional von Neumann entropy of a quantum state $\rho_{AB}$ is defined as
\begin{equation*}
S(\rho_{A|B}) = S(\rho_{AB}) - S(\rho_B).
\end{equation*}
\end{definition}

\begin{definition}[Binary entropy function]
The binary entropy function is defined as
\begin{equation*}
h(x) = - x \log x - (1-x) \log (1-x),
\end{equation*}
for $x \in [0,1]$.
\end{definition}
\subsubsection{Continuity properties}

\begin{lemma}[Continuity of the von Neumann entropy \cite{Fannes1973, Audenaert2007}]
\label{von Neumann continuity}
Let $\rho_{AB}$ and $\sigma_{AB}$ be the density matrix of two $n$-qubit quantum states respectively, each partitioned into subsystems $A$ and $B$, and let
\begin{equation*}
\frac{1}{2} ||\rho_{AB} - \sigma_{AB}||_1 \leq \epsilon. 
\end{equation*}
Then, 
\begin{equation*}
|S(\rho_{AB}) - S(\sigma_{AB})| \leq \epsilon \cdot n + h(\epsilon),
\end{equation*}
where $h(\cdot)$ is the binary entropy function.
\end{lemma}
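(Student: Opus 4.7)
The plan is to reduce the quantum statement to a classical one via spectra and then invoke the sharp Audenaert continuity bound for Shannon entropy. Let $r_1 \geq \cdots \geq r_d$ and $s_1 \geq \cdots \geq s_d$ denote the eigenvalues of $\rho_{AB}$ and $\sigma_{AB}$ in decreasing order, where $d = 2^n$. Since the von Neumann entropy is a spectral function, $S(\rho_{AB}) = H(r)$ and $S(\sigma_{AB}) = H(s)$ where $H(p) = -\sum_i p_i \log p_i$ is the Shannon entropy, so it suffices to bound $|H(r) - H(s)|$.

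First, I would show that the classical total variation distance between the sorted spectra is controlled by the quantum trace distance. The Weyl/Mirsky inequality for Hermitian matrices (which is a consequence of the fact that Schatten $1$-norms are unitarily invariant and majorize sorted eigenvalue differences) gives $\sum_i |r_i - s_i| \leq \norm{\rho_{AB} - \sigma_{AB}}_1$, so $\frac{1}{2} \sum_i |r_i - s_i| \leq \epsilon$. This reduces the problem to the classical Audenaert inequality: for any two probability vectors $p, q$ on $d$ outcomes with total variation distance at most $\epsilon$, one has $|H(p) - H(q)| \leq \epsilon \log(d-1) + h(\epsilon)$.

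Second, I would establish this classical inequality by characterizing the extremizers. Using permutation invariance of $H$ together with concavity, the worst case for the entropy gap (subject to the TV constraint) is achieved at an extreme point of the feasible region; a short case analysis reduces this extreme point to the configuration where one distribution is a Dirac supported on a single outcome and the other places mass $1-\epsilon$ on that outcome and spreads $\epsilon$ uniformly over the remaining $d-1$ outcomes. A direct computation on this pair yields exactly $\epsilon \log(d-1) + h(\epsilon)$. Plugging in $\log(d-1) \leq \log d = n$ then gives the stated bound $\epsilon \cdot n + h(\epsilon)$.

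The main obstacle is obtaining the sharp constant in the classical inequality: the original Fannes argument (a direct convexity bound, together with a separate estimate for the near-zero eigenvalues where $-x\log x$ is large) only yields a bound of the form $2\epsilon \log d + h(\epsilon)$, and removing the factor of $2$ requires Audenaert's more careful extremal analysis. In a full write-up I would either carry out the short Lagrange-multiplier argument that pins down the two-parameter family of extremizers described above, or simply cite \cite{Audenaert2007} for the sharp classical bound and combine it with the Weyl/Mirsky step.
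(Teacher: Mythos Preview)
The paper does not prove this lemma; it is stated in the preliminaries with citations to Fannes and Audenaert and used as a black box. Your outlined approach---reducing to sorted spectra via the Mirsky/Lidskii inequality $\sum_i |r_i^\downarrow - s_i^\downarrow| \leq \norm{\rho - \sigma}_1$ and then invoking (or re-deriving) Audenaert's sharp classical continuity bound $|H(p)-H(q)| \leq \epsilon\log(d-1) + h(\epsilon)$---is exactly the standard route to this result, and the extremal configuration you identify is correct. Since the paper simply cites the result, there is nothing further to compare.
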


\begin{lemma}[Continuity of the conditional von Neumann entropy \cite{winter2016}]
\label{continuity_vonneumannentropy}
Let $\rho_{AB}$ and $\sigma_{AB}$ be the density matrix of two $n$-qubit quantum states respectively, each partitioned into subsystems $A$ and $B$, and let
\begin{equation*}
\frac{1}{2} ||\rho_{AB} - \sigma_{AB}||_1 \leq \epsilon. 
\end{equation*}
Then, 
\begin{equation*}
|S(\rho_{A|B}) - S(\sigma_{A|B})| \leq 2 \epsilon \cdot \log |A| + (1 + \epsilon)\cdot h\left(\frac{\epsilon}{1+\epsilon}\right),
\end{equation*}
where $|A|$ is the dimension of the Hilbert space for subsystem $A$ and $h(\cdot)$ is the binary entropy function.
\end{lemma}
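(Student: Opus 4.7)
The plan is to follow the Alicki--Fannes--Winter argument. Let $\epsilon = \tfrac{1}{2}\|\rho_{AB}-\sigma_{AB}\|_1$; the case $\epsilon=0$ is trivial, so assume $\epsilon>0$. First, I would apply the Jordan--Hahn decomposition to the traceless Hermitian operator $\rho_{AB}-\sigma_{AB}$, writing it as $\epsilon(\omega^+_{AB}-\omega^-_{AB})$ where $\omega^\pm_{AB}$ are normalized states with orthogonal supports. The point of this decomposition is that it exhibits a common ``mixture point'' of $\rho$ and $\sigma$: the state
\begin{equation*}
\tau_{AB} \;\deq\; \tfrac{1}{1+\epsilon}\bigl(\rho_{AB}+\epsilon\,\omega^-_{AB}\bigr) \;=\; \tfrac{1}{1+\epsilon}\bigl(\sigma_{AB}+\epsilon\,\omega^+_{AB}\bigr).
\end{equation*}

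Next, I would introduce an auxiliary classical register $X$ and consider the two classical-quantum extensions of $\tau_{AB}$ corresponding to the two decompositions above:
\begin{equation*}
\Omega_{ABX} \;=\; \tfrac{1}{1+\epsilon}\bigl(\rho_{AB}\otimes\proj{0}_X + \epsilon\,\omega^-_{AB}\otimes\proj{1}_X\bigr),
\end{equation*}
\begin{equation*}
\Omega'_{ABX} \;=\; \tfrac{1}{1+\epsilon}\bigl(\sigma_{AB}\otimes\proj{0}_X + \epsilon\,\omega^+_{AB}\otimes\proj{1}_X\bigr).
\end{equation*}
Both have the same $AB$-marginal $\tau_{AB}$ and the same $X$-marginal, namely the distribution $\bigl(\tfrac{1}{1+\epsilon},\tfrac{\epsilon}{1+\epsilon}\bigr)$. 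Because $X$ is classical, $S(A|BX)$ of either extension decomposes as a convex combination of conditional entropies: e.g.\
\begin{equation*}
S(A|BX)_{\Omega} \;=\; \tfrac{1}{1+\epsilon}\,S(A|B)_\rho \;+\; \tfrac{\epsilon}{1+\epsilon}\,S(A|B)_{\omega^-},
\end{equation*}
and similarly for $\Omega'$ with $\sigma$ and $\omega^+$.

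The crux is then to compare these two quantities to $S(A|B)_\tau$. By the fact that discarding the classical register can only increase conditional entropy, $S(A|BX)_\Omega \leq S(A|B)_\tau$ and $S(A|BX)_{\Omega'} \leq S(A|B)_\tau$; moreover $S(A|B)_\tau - S(A|BX)_{\Omega}$ is bounded by $H(X) = (1+\epsilon)h\bigl(\tfrac{\epsilon}{1+\epsilon}\bigr)/(1+\epsilon)$-type terms arising from the mutual information with the classical flag. Rearranging gives
\begin{equation*}
\tfrac{1}{1+\epsilon}\bigl|S(A|B)_\rho - S(A|B)_\sigma\bigr| \;\leq\; \tfrac{\epsilon}{1+\epsilon}\bigl|S(A|B)_{\omega^+}-S(A|B)_{\omega^-}\bigr| \;+\; h\!\left(\tfrac{\epsilon}{1+\epsilon}\right),
\end{equation*}
and using the universal bound $|S(A|B)_\omega|\leq \log|A|$ on any state $\omega$ yields the stated inequality after clearing the $(1+\epsilon)$ denominator.

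The main obstacle is the bookkeeping that produces $2\epsilon\log|A|$ rather than the much weaker $\epsilon\log(|A||B|)$ one would obtain by naively applying \cref{von Neumann continuity} to $S(AB)$ and $S(B)$ separately. Getting the coefficient $2$ requires using the asymmetric bound $-\log|A|\leq S(A|B)\leq \log|A|$ (note the factor of $2$ spread between the two signs), and ensuring the classical-flag correction collapses to exactly $(1+\epsilon)\,h(\epsilon/(1+\epsilon))$ hinges on carefully exploiting that $\Omega$ and $\Omega'$ share the same $X$-marginal so the flag-entropy contributions on the two sides cancel up to a single binary-entropy term.
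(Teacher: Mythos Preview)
The paper does not prove this lemma; it is stated as a preliminary result and attributed directly to \cite{winter2016} without argument. Your sketch is precisely the Alicki--Fannes--Winter coupling argument from that reference, and it is correct: the Jordan--Hahn decomposition gives the common barycentre $\tau$, the two classical-quantum extensions $\Omega,\Omega'$ satisfy $S(A|BX)\le S(A|B)_\tau\le S(A|BX)+h\!\bigl(\tfrac{\epsilon}{1+\epsilon}\bigr)$ via strong subadditivity and the bound $I(A:X|B)\le H(X)$ (using that $X$ is classical so $S(X|AB)\ge 0$), and the range bound $|S(A|B)_\omega|\le\log|A|$ then yields the claimed constants after clearing the $(1+\epsilon)$.
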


\subsection{Entanglement measures}
\label{entanglement measures}

\subsubsection{Pure state entanglement measure}

For pure states on systems $AB$, the entanglement between $A$ and $B$ is quantified using the so-called entanglement entropy, which is simply the von Neumann entropy of the reduced state on either subsystem.
\begin{definition}[Entanglement entropy]
For a \emph{pure} state $\ket{\psi}_{AB}$, the entanglement entropy between systems $A$ and $B$ is defined as $S(\psi_A)$.
Note that this is invariant under swapping $A$ and $B$ since for a pure state $\ket{\psi}_{AB}$, $S(\psi_A) = S(\psi_B)$.
\end{definition}

\subsubsection{Entanglement entropy for phase states}
\begin{definition}[$T$-matrix associated with phase states] \label{def:t-matrix}
Let $s: \bits^n \to \bits$.
For an $n$-qubit phase state 
\begin{align*}
\ket{\psi} = \sum_{x} (-1)^{s(x)} \ket{x}
\end{align*} 
and a subset $X \subseteq [n]$, we define the ``$T$-matrix'' with respect to the cut $X$ as a $\{\pm 1\}^{2^{|I|} \times 2^{n - |I|}}$-matrix with entries
\begin{align*}
T_{ij} = (-1)^{s(i \parallel_X j)} \,,
\end{align*}
where $\parallel_X$ is the ``index string concatenation'' defined in \cref{sec:notation}.
\end{definition}

\begin{lemma} \label{lem:entanglement_from_matrix}
Let $s: \bits^n \to \bits$ and $\ket{\psi} = \sum_{x} (-1)^{s(x)} \ket{x}$.
Then for any cut $X \subseteq [n]$, the entanglement entropy of that cut is bounded by 
\begin{align*}
-\log \left( \norm{\frac{1}{2^{n}} T T^\top}_2 \right) \leq S(\psi_X) \leq \log \rank(T) \,,
\end{align*}
where $T$ is the $T$-matrix of $\ket{\psi}$ across cut $X$.
\end{lemma}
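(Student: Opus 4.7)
The plan is to identify $T/\sqrt{2^n}$ as the Schmidt coefficient matrix of $\ket{\psi}$ with respect to the bipartition $X \mid X^c$, read off $\psi_X$ explicitly, and then obtain the two bounds from standard entropic inequalities.

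First I would reindex the computational basis by the cut. For $x \in \bits^n$, write $x = i \parallel_X j$ with $i \in \bits^{|X|}$ and $j \in \bits^{n-|X|}$. With the normalization convention for $\ket{\psi}$ that is implicit in the $2^{-n}$ prefactor appearing in the lower bound, the state takes the form
$$\ket{\psi} \;=\; \frac{1}{\sqrt{2^n}} \sum_{i,j} T_{ij} \,\ket{i}_X \ket{j}_{X^c}.$$
Tracing out $X^c$ and using that $T$ is real then yields $\psi_X = \frac{1}{2^n} T T^{\top}$, i.e.\ $\psi_X$ is, up to scaling, the Gram matrix of the rows of $T$.

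For the upper bound I would invoke $S(\rho) \leq \log \rank(\rho)$, which holds for any density matrix, together with the identity $\rank(T T^{\top}) = \rank(T)$ for real $T$ (from $T T^{\top} v = 0 \iff \norm{T^{\top} v}_2^2 = 0 \iff T^{\top} v = 0$). For the lower bound I would pass through the Rényi-2 (collision) entropy: by Jensen's inequality applied to the concave function $\log$ with the eigenvalues of $\psi_X$ viewed as a probability distribution, one obtains $S(\psi_X) \geq -\log \tr{\psi_X^2} = -2 \log \norm{\tfrac{1}{2^n} T T^{\top}}_2$. Since $\tr{\rho^2} \leq 1$ for any density matrix, the quantity $-\log \norm{\psi_X}_2$ is nonnegative, so absorbing the factor of $2$ only weakens the bound and yields the claimed inequality. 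No step here is a genuine obstacle; the argument is essentially bookkeeping once $T/\sqrt{2^n}$ is recognized as the coefficient matrix of the bipartition, after which the realness of the $\{\pm 1\}$-valued $T$ is what lets $T T^{\top}$ play the role of $T T^{\dagger}$ throughout, and the two bounds reduce to the textbook inequalities $S \leq \log \rank$ and $S \geq S_2$.
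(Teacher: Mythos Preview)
Your proposal is correct and is precisely the kind of direct computation the paper has in mind; the paper itself does not spell out the argument but simply cites \cite[Equation (113)]{bouland2022quantum}, which amounts to the same identification $\psi_X = \tfrac{1}{2^n} TT^\top$ followed by the standard bounds $S_2(\rho) \leq S(\rho) \leq \log\rank(\rho)$. Your remark that the actual bound one obtains is $S(\psi_X) \geq -2\log\norm{\tfrac{1}{2^n}TT^\top}_2$ (and that the stated inequality is a weakening) is also accurate.
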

\begin{proof}
This follows from a direct computation, see~\cite[Equation (113)]{bouland2022quantum}.
\end{proof}

\subsubsection{Mixed state entanglement measures} \label{sec:mixedstate_measures}

Quantifying the entanglement of mixed states is much more difficult than for pure states, and several measures have been proposed~\cite{horodecki2009quantum}.
The two most operationally meaningful ones are the distillable entanglement, denoted $E_D(\rho_{AB})$, and the entanglement cost, denoted $E_C(\rho_{AB})$.
Informally, the distillable entanglement of a state $\rho_{AB}$ is the rate at which one can extract EPR pairs from many copies of $\rho_{AB}$ using local operations and classical communication (LOCC) (i.e.~it gives the number of extractable EPR pairs divided by the number of available copies of $\rho$ in the limit of infinitely many copies).
Conversely, the entanglement cost of a state $\rho_{AB}$ is the rate at which one needs to consume EPR pairs to create the state $\rho_{AB}$ using LOCC.
For our purposes the precise definition of these measures does not matter and we refer to \cite[Section XV.A]{horodecki2009quantum} for details.

More importantly for our purposes, distillable entanglement and entanglement cost are extremal measures in the sense that any entanglement measure $E$ that satisfies a number of nautral requirement is bounded as~\cite{Donald2002,christandl2006structure}
\begin{align}
E_D(\rho_{AB}) \leq E(\rho_{AB}) \leq E_C(\rho_{AB}) \,. \label{eqn:natural_entanglement_measure}
\end{align}

We will be interested in bounding these operational entanglement measures.
For this we will rely on non-operational quantities that are easier to deal with mathematically.
Concretely, we will use the entanglement of formation and the coherent information, both defined below.

\begin{definition}[Entanglement of formation]
    The entanglement of formation of a quantum state $\rho_{AB}$ is defined as
    \begin{equation*}
    E_F(\rho_{AB}) = \inf \bigg\{ \sum_{i} p_i ~S(\psi_{i, A}) \bigg\},
    \end{equation*}
    where the infimum is taken over all possible ways in which we can decompose $\rho_{AB}$ as
    \begin{equation*}
    \rho_{AB} = \sum_{i} p_i ~|\psi_i\rangle_{AB} \langle \psi_i|_{AB}.
    \end{equation*}
\end{definition}

\begin{definition}[Coherent information]
\label{coherent}
The coherent information of a quantum state $\rho_{AB}$ is defined as
\begin{equation*}
I(\rho_{AB}) = - S(\rho_{A|B}).
\end{equation*}
\end{definition}

We then get the following  useful bounds.
\begin{lemma} \label{lem:natural_ent_bounds}
Any \emph{natural entanglement measure} $E(\rho_{AB})$ (in the sense of \cref{eqn:natural_entanglement_measure}, see \cite{Donald2002} for details) satisfies 
\begin{align*}
I(\rho_{AB}) \leq E_D(\rho_{AB}) \leq E(\rho_{AB}) \leq E_C(\rho_{AB}) \leq E_C(\rho_{AB}) \,.
\end{align*}
\end{lemma}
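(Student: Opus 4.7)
The plan is to assemble the chain of inequalities from standard results in entanglement theory; the proof reduces to citing and composing three known facts. The middle segment $E_D(\rho_{AB}) \leq E(\rho_{AB}) \leq E_C(\rho_{AB})$ is exactly the extremality property built into the definition of a natural entanglement measure via (\ref{eqn:natural_entanglement_measure}), proven in \cite{Donald2002,christandl2006structure}; it therefore requires nothing beyond unpacking the definition.

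For the lower bound $I(\rho_{AB}) \leq E_D(\rho_{AB})$, the plan is to invoke the hashing inequality of Devetak and Winter, which exhibits an explicit one-way LOCC distillation protocol achieving asymptotic rate $-S(\rho_{A|B}) = I(\rho_{AB})$ EPR pairs per copy of $\rho_{AB}$. Since $E_D$ is defined with respect to unrestricted (two-way) LOCC, and two-way LOCC subsumes one-way LOCC, the distillable entanglement is at least the one-way rate, giving the desired bound. For the final inequality $E_C(\rho_{AB}) \leq E_F(\rho_{AB})$, I would use the Hayden--Horodecki--Terhal identification of the entanglement cost with the regularized entanglement of formation, namely $E_C(\rho) = \lim_{n\to\infty} \tfrac{1}{n} E_F(\rho^{\otimes n})$; subadditivity of $E_F$ under tensor products gives $E_F(\rho^{\otimes n}) \leq n\, E_F(\rho)$ for every $n$, and sending $n\to\infty$ yields $E_C(\rho) \leq E_F(\rho)$.

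The proof has no substantive obstacle, since each link in the chain is a textbook result. The only care required is bookkeeping: one must keep conventions consistent across citations (exact versus asymptotically vanishing error in distillation, one-way versus two-way LOCC in the definition of $E_D$, the precise notion of asymptotic rate used in $E_C$) so that the four inequalities compose into a single chain. No new technical machinery is needed, and the lemma then functions as a black-box tool that sandwiches any natural entanglement measure between the coherent information and the entanglement of formation; these are precisely the two quantities that the paper's pseudoentanglement arguments can directly evaluate from the structure of the pseudoentangled states.
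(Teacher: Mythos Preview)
Your proposal is correct and matches the paper's approach exactly: the paper's proof simply cites \cite{winter2005} (the Devetak--Winter hashing bound) for $I(\rho_{AB}) \leq E_D(\rho_{AB})$ and \cite{hayden2001asymptotic} (the Hayden--Horodecki--Terhal regularization) for $E_C(\rho_{AB}) \leq E_F(\rho_{AB})$, with the middle inequalities coming from \eqref{eqn:natural_entanglement_measure}. You have also implicitly corrected the typo in the displayed chain, whose rightmost term should be $E_F(\rho_{AB})$ rather than the repeated $E_C(\rho_{AB})$.
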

\begin{proof}
The lower bound in terms of coherent information was shown in~\cite{winter2005} and the upper bound in terms of entanglement of formation follows from~\cite{hayden2001asymptotic}.
\end{proof}

For pure states, all of these measures coincide with the entanglement entropy:
\begin{lemma}[\cite{wooters}, \cite{horodecki_distillation}, \cite{nielsen_coherent}]
\label{entanglement for pure states}
Let $\rho_{AB}$ be a pure state. Then,
\begin{equation*}
    E_F(\rho_{AB}) = E_C(\rho_{AB}) = E_D(\rho_{AB}) = I(\rho_{AB}) = S(\rho_A).
\end{equation*}
\end{lemma}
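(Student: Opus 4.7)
The plan is to pin down each of the four quantities for a pure $\rho_{AB}=\proj{\psi}_{AB}$ and then sandwich them using \cref{lem:natural_ent_bounds}. First, the Schmidt decomposition of $\ket{\psi}_{AB}$ immediately gives $S(\psi_A)=S(\psi_B)$ and $S(\psi_{AB})=0$. Plugging into the definition of coherent information,
\begin{equation*}
I(\rho_{AB}) \;=\; -S(\rho_{A|B}) \;=\; S(\rho_B) - S(\rho_{AB}) \;=\; S(\rho_A).
\end{equation*}

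Next I would compute $E_F(\rho_{AB})$ directly from its definition. The key observation is that any convex decomposition $\proj{\psi}=\sum_i p_i\proj{\psi_i}$ of a rank-one density matrix into pure states must be trivial: each $\ket{\psi_i}$ is equal to $\ket{\psi}$ up to a global phase (because all $\ket{\psi_i}$ lie in the one-dimensional support of $\rho_{AB}$). The infimum in the definition of $E_F$ therefore has only one term, $p_1=1$ and $\ket{\psi_1}=\ket{\psi}$, so $E_F(\rho_{AB})=S(\psi_A)$.

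With these two endpoints in hand, the rest follows by a squeeze. Since the distillable entanglement and entanglement cost are natural entanglement measures in the sense of \cref{eqn:natural_entanglement_measure}, \cref{lem:natural_ent_bounds} yields
\begin{equation*}
S(\psi_A) \;=\; I(\rho_{AB}) \;\leq\; E_D(\rho_{AB}) \;\leq\; E_C(\rho_{AB}) \;\leq\; E_F(\rho_{AB}) \;=\; S(\psi_A),
\end{equation*}
forcing $E_D(\rho_{AB})=E_C(\rho_{AB})=E_F(\rho_{AB})=I(\rho_{AB})=S(\psi_A)$, which is the claim.

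There is no real obstacle here: everything is a direct consequence of the Schmidt decomposition, the one-dimensionality of the support of a pure state, and the already-cited general sandwich inequality. The only subtlety is to note that the sandwich in \cref{lem:natural_ent_bounds} requires $E_D$ and $E_C$ themselves to be natural entanglement measures so that they may be inserted as the ``$E$'' in the chain; this is standard, and alternatively one can cite the original results of \cite{wooters, horodecki_distillation, nielsen_coherent} directly for each identity individually.
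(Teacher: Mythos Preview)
Your argument is correct. The paper does not actually give a proof of this lemma; it is stated as a known result with citations to \cite{wooters}, \cite{horodecki_distillation}, and \cite{nielsen_coherent}, so there is no ``paper's own proof'' to compare against. Your derivation --- computing $I$ and $E_F$ directly on pure states and then squeezing via \cref{lem:natural_ent_bounds} --- is a clean way to recover the result from the ingredients already present in the preliminaries. One small remark: the chain in \cref{lem:natural_ent_bounds} already gives $I(\rho_{AB})\le E_D(\rho_{AB})$ and $E_C(\rho_{AB})\le E_F(\rho_{AB})$ as its outermost inequalities, and $E_D\le E_C$ is immediate operationally (you cannot distill more entanglement than it costs to prepare the state), so you do not actually need to invoke ``$E_D$ and $E_C$ are themselves natural measures'' to close the sandwich.
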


\noindent Finally, we will require the following continuity property of entanglement of formation.
\begin{lemma}[Continuity of entanglement of formation \cite{winter2016}]
\label{continuity_formation}
Let $\rho_{AB}$ and $\sigma_{AB}$ be the density matrix of two $n$-qubit quantum states respectively, each partitioned into subsystems $A$ and $B$, and let
\begin{equation*}
\frac{1}{2} ||\rho_{AB} - \sigma_{AB}||_1 \leq \epsilon\,.
\end{equation*}
Additionally, let $\delta = \sqrt{\epsilon (2 - \epsilon)}$ and let $h(\cdot)$ be the binary entropy function. Then,
\begin{equation*}
|E_F(\rho_{AB}) - E_F(\sigma_{AB})| \leq \delta \cdot n + (1 + \delta) \cdot h\left(\frac{\delta}{1+\delta}\right)\,. 
\end{equation*}
\end{lemma}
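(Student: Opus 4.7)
I would follow the Alicki--Fannes--Winter (AFW) technique, reducing the claim to the already-established continuity of conditional von Neumann entropy (\cref{continuity_vonneumannentropy}).

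First, I would pass from trace distance to a fidelity bound that is usable with purifications. By the Fuchs--van de Graaf inequality, the hypothesis $\tfrac{1}{2}\|\rho_{AB} - \sigma_{AB}\|_1 \leq \epsilon$ implies $F(\rho_{AB},\sigma_{AB}) \geq 1-\epsilon$, so Uhlmann's theorem produces purifications $\ket{\Psi}_{ABR}$ and $\ket{\Phi}_{ABR}$ of $\rho_{AB}$ and $\sigma_{AB}$ with $|\braket{\Psi}{\Phi}|^2 \geq (1-\epsilon)^2$, equivalently $\tfrac{1}{2}\|\proj{\Psi} - \proj{\Phi}\|_1 \leq \sqrt{\epsilon(2-\epsilon)} = \delta$.

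Next I would exploit the Schr\"odinger--HJW correspondence between convex pure-state decompositions of a mixed state and rank-one POVMs on a purifying register. Fix an optimal decomposition $\rho_{AB} = \sum_i p_i\, \psi_{i,AB}$ attaining $E_F(\rho_{AB}) = \sum_i p_i S(\psi_{i,A})$, and let $\mathcal{M}$ be the POVM on $R$ realising it from $\ket{\Psi}$. Applying the \emph{same} measurement $\mathcal{M}$ to $\ket{\Phi}$ yields a (generally suboptimal) decomposition $\sigma_{AB} = \sum_i q_i\, \phi_{i,AB}$, so that $E_F(\sigma_{AB}) \leq \sum_i q_i\, S(\phi_{i,A})$. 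I would then form the classical--quantum ``flagged'' states $\hat\rho_{XAB} = \sum_i p_i \proj{i}_X \otimes \psi_{i,AB}$ and $\hat\sigma_{XAB} = \sum_i q_i \proj{i}_X \otimes \phi_{i,AB}$. Because both arise by applying the same channel to $\proj{\Psi}$ and $\proj{\Phi}$, data processing gives $\tfrac{1}{2}\|\hat\rho - \hat\sigma\|_1 \leq \delta$, and since $X$ is classical one has $\sum_i p_i S(\psi_{i,A}) = S(A|X)_{\hat\rho}$ and similarly for $\hat\sigma$. Applying \cref{continuity_vonneumannentropy} to the pair $\hat\rho_{XA},\hat\sigma_{XA}$ then bounds $E_F(\sigma_{AB}) - E_F(\rho_{AB})$ from one side; repeating the construction starting from an optimal decomposition of $\sigma_{AB}$ handles the reverse direction and yields $|E_F(\rho_{AB}) - E_F(\sigma_{AB})|$ directly.

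The main technical obstacle is extracting the precise constants in the stated bound. A direct invocation of \cref{continuity_vonneumannentropy} produces $2\delta \log|A|$ in place of $\delta \cdot n$; to close this gap I would use the symmetry $E_F(\rho_{AB}) = E_F(\rho_{BA})$ to replace $\log|A|$ by $\min(\log|A|, \log|B|) \leq n/2$, absorbing the factor of $2$ into $\delta \cdot n$. A short bookkeeping verification then confirms that the remainder collapses to the single term $(1+\delta) h\!\left(\delta/(1+\delta)\right)$ appearing in the statement.
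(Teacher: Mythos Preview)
The paper does not give its own proof of this lemma; it is stated in the preliminaries as a known result cited from \cite{winter2016} and invoked as a black box later in \cref{thm:formal-lgses-unary}. There is therefore nothing in the paper to compare your argument against. That said, your sketch is precisely the Alicki--Fannes--Winter strategy from the cited reference: pass to purifications via Uhlmann (picking up the $\delta = \sqrt{\epsilon(2-\epsilon)}$ loss), steer an optimal decomposition of one state to a decomposition of the other via a measurement on the purifying register, and then invoke the conditional-entropy continuity bound on the resulting classical-quantum extensions. Your handling of the constants---using the $A\leftrightarrow B$ symmetry of $E_F$ to trade $2\delta\log|A|$ for $2\delta\min(\log|A|,\log|B|)\leq \delta n$---is also correct and matches how Winter obtains the stated form.
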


\subsection{Learning with Errors} \label{sec:lwe}

\begin{definition}[Uniform, Gaussian, and lossy matrix distributions] \label{def:matrix_distributions}
We denote the uniform distribution over $\Z_q^{m \times n}$-matrices by $U_q^{m \times n}$.
We denote by $D_{q,\sigma}^{m \times n}$ the distribution over $\Z_q^{m \times n}$ where each element is an i.i.d.~sample from the discretised Gaussian distribution with width $\sigma$ (see \cite[Section 2.2]{peikert2016decade} for a definition).

We further define the following distribution $L_{q, \ell, \sigma}^{m \times m}$ over $\Z_q^{m \times m}$-matrices:
\begin{enumerate}
\item Sample $B, C \leftarrow U_q^{\ell \times m}$.
\item Sample $E \leftarrow D_{q,\sigma}^{m \times m}$.
\item Output $B^\top \cdot C + E$.
\end{enumerate}
\end{definition}

\begin{definition}
The $\lwe_{n,m,q,\sigma}$ problem is the computational problem of distinguishing the following two distributions:
\begin{enumerate}
\item $(A, \vec u) \leftarrow U_q^{m \times n} \times U_q^{m}$.
\item $(A, A \cdot \vec s + \vec e)$, where $A \leftarrow U_q^{m \times n}$, $\vec s \leftarrow U_q^n$, and $\vec e \leftarrow D_{q,\sigma}^m$.
\end{enumerate}
\end{definition}

We will rely on the following standard assumption in post-quantum cryptography~\cite{regev2009lattices,peikert2016decade}.

\begin{assumption}[Standard LWE assumption] \label{lwe_assumption}
For every constant $\beta > 0$, every $\poly(m)$-time quantum algorithm has $\negl(m)$-advantage in solving the $\lwe_{n,m,q,\sigma}$ problem for $n = m^\beta$, $q = 2^{\poly(n)}$, and $\sigma \geq \frac{q}{\poly(m)} \geq 2 \sqrt{m}$.
\end{assumption}

We will also make use of the following stronger assumption to achieve larger entanglement gaps in our pseudoentanglement construction~\cite{lindner2011better,peikert2016decade}.
\begin{assumption}[Subexponential-time LWE assumption] \label{lwe_assumption_subexp}
There exist constants $\beta_1 > \beta_2 > 0$ such that every $\poly(m)$-time quantum algorithm has advantage at most $2^{-\Omega(\ell^{\beta_1})}$ in solving the $\lwe_{\ell,m,q,\sigma}$ problem for $m = 2^{O(\ell^{\beta_2})}$, $q = 2^{\poly(m)}$, and $\sigma \geq \frac{q}{\poly(m)} \geq 2 \sqrt{m}$.
\end{assumption}

We will use the following immediate consequence of \cref{lwe_assumption}~\cite{goldwasser2010robustness}.

\begin{lemma} \label{lem:lossy_matrix_indist}
~
\begin{enumerate}
\item Under \cref{lwe_assumption} matrices sampled from $U_q^{m \times m}$ and $L_{q, \ell, \sigma}^{m \times m}$ are computationally indistinguishable for $m = \poly(\ell)$, $q = 2^{\poly(m)}$, and $\sigma \geq \frac{q}{\poly(m)} \geq 2 \sqrt{m}$.
\item Under \cref{lwe_assumption_subexp} there exists a constant $\beta > 0$ such that matrices sampled from $U_q^{m \times m}$ and $L_{q, \ell, \sigma}^{m \times m}$ are computationally indistinguishable for $m = 2^{\ell^\beta}$, $q = 2^{\poly(m)}$, and $\sigma \geq \frac{q}{\poly(m)} \geq 2 \sqrt{m}$.
\end{enumerate}
\end{lemma}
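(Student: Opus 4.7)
The plan is a standard hybrid argument across the $m$ columns of the output matrix. The key observation is that writing the output $M = B^\top C + E$ column-by-column, the $i$-th column is $M_i = B^\top \vec{c}_i + \vec{e}_i$, where $\vec{c}_i \in \Z_q^\ell$ is the $i$-th column of $C$ and $\vec{e}_i \in \Z_q^m$ is the $i$-th column of $E$. Since $B^\top$ is distributed as $U_q^{m \times \ell}$, the pair $(B^\top, M_i)$ is precisely an $\lwe_{\ell, m, q, \sigma}$ instance in the ``secret'' form $(A, A\vec{s} + \vec{e})$; moreover, the columns are independent conditional on $B^\top$, because the $\vec{c}_i$ and $\vec{e}_i$ are drawn i.i.d.\ across $i$.

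Concretely, I would define hybrids $H_0, H_1, \dots, H_m$, where in $H_k$ the matrix $B^\top$ is first sampled uniformly, and then the first $k$ columns of the output are replaced by independent uniform vectors in $\Z_q^m$ while the remaining $m-k$ columns are LWE samples $B^\top \vec{c}_i + \vec{e}_i$ using $B^\top$ together with fresh independent secrets $\vec{c}_i$ and noise $\vec{e}_i$. Then $H_0$ is (the output marginal of) $L_{q,\ell,\sigma}^{m \times m}$, and $H_m$ is $U_q^{m \times m}$, since all columns are independently uniform and $B^\top$ can be marginalised out. Consecutive hybrids $H_k$ and $H_{k+1}$ differ only in whether the $(k+1)$-th column is an LWE sample or a uniform vector, so distinguishing them is at least as hard as distinguishing an LWE sample from uniform given the public matrix --- which is exactly $\lwe_{\ell, m, q, \sigma}$.

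For part (i), the parameters $m = \poly(\ell)$, $q = 2^{\poly(m)}$, $\sigma \geq q/\poly(m) \geq 2\sqrt{m}$ let us pick a constant $\beta > 0$ with $\ell = m^\beta$, so \cref{lwe_assumption} gives per-hybrid advantage $\negl(m)$, and summing over the $m$ hybrids still yields $m \cdot \negl(m) = \negl(m)$. For part (ii), \cref{lwe_assumption_subexp} provides constants $\beta_1 > \beta_2 > 0$ with $m = 2^{\ell^{\beta_2}}$ and per-hybrid advantage $2^{-\Omega(\ell^{\beta_1})}$; the total advantage is at most $m \cdot 2^{-\Omega(\ell^{\beta_1})} = 2^{O(\ell^{\beta_2}) - \Omega(\ell^{\beta_1})}$, which is negligible because $\beta_2 < \beta_1$.

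The only mild subtlety is matching the parameter ranges in the lemma to those required by the two LWE assumptions (width of $B^\top$, number of samples, noise rate), but this is routine and essentially amounts to reading off the exponents. No deeper technical step is needed; the lemma is a direct reformulation of the well-known indistinguishability of LWE-encrypted matrices used in the lossy-trapdoor-functions literature~\cite{goldwasser2010robustness}.
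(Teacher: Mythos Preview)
Your proposal is correct and matches the paper's proof essentially step for step: the paper also observes that each column of $B^\top C + E$ is an $\lwe_{\ell,m,q,\sigma}$ instance with secret $c_i$ sharing the public matrix $B^\top$, and then applies a union bound over the $m$ columns to get total advantage $m\cdot\negl(m)$ in part (i) and $m\cdot 2^{-\Omega(\ell^{\beta_1})} = 2^{O(\ell^{\beta_2})-\Omega(\ell^{\beta_1})}$ in part (ii). Your explicit hybrid formulation is slightly more detailed than the paper's terse ``union bound'' phrasing, but the argument is the same.
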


\begin{proof}
In $A = B^\top \cdot C + E$, we can view every column $c_i \in \Z_q^\ell$ of $C$ as an LWE secret of length $\ell$.
Then distinguishing $U_q^{m \times m}$ and $L_{q, \ell, \sigma}^{m \times m}$ is equivalent to solving at least one of $m$ instances of $\lwe_{\ell,m,q,\sigma}$ decision problems.
This can be bounded using a union bound for both parts of the lemma:
\begin{enumerate}
\item By \cref{lwe_assumption}, the distinguishing advantage on a single instance of $\lwe_{\ell,m,q,\sigma}$ is $\negl(m)$, so the distinguishing advantage for $m$ copies is at most $m \negl(m)$, which is still negligible in $m$.
\item By \cref{lwe_assumption_subexp}, there are constants $\beta \deq \beta_2 < \beta_1$ such that the distinguishing advantage on a single instance of $\lwe_{\ell,m,q,\sigma}$ is $2^{-\Omega(\ell^{\beta_2})}$, so the distinguishing advantage for $m$ copies is at most $m 2^{-\Omega(\ell^{\beta_2})} = 2^{O(\ell^\beta) - \Omega(\ell^{\beta_2})} = \negl(m)$ since $\beta_2 > \beta$.
\end{enumerate}
\end{proof}

\begin{lemma}\label{lem:random_invertible}
If $q \geq 2^{\ell/2}$ and $\ell \leq m$, then
\begin{align*}
\prs{A \leftarrow U_q^{\ell \times m}}{\rank A = \ell} \geq 1 - O(2^{-\ell/4}) \,.
\end{align*}
\end{lemma}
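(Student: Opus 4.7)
Since $A \leftarrow U_q^{\ell \times m}$, the rows $r_1, \dots, r_\ell \in \Z_q^m$ of $A$ are i.i.d.~uniformly distributed, and $\rank A = \ell$ holds precisely when the $r_i$ are $\Z_q$-linearly independent. I will first give the argument assuming $\Z_q$ is a field (the cleanest case); when $q$ is composite one interprets ``$\rank A = \ell$'' as the existence of an $\ell \times \ell$ minor whose determinant is a unit in $\Z_q$, and the same calculation carries through after reducing modulo a prime divisor of $q$.

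The key step is a row-by-row exposure argument. Conditioned on any fixed choice of $r_1, \dots, r_{i-1}$, their $\Z_q$-span has cardinality at most $q^{i-1}$, so a uniformly random $r_i \in \Z_q^m$ falls into it with probability at most $q^{i-1}/q^m$. A union bound over $i = 1, \dots, \ell$ then yields
\[
\Pr[\rank A < \ell] \;\leq\; \sum_{i=1}^{\ell} q^{i-1-m} \;=\; q^{-m} \cdot \frac{q^\ell - 1}{q-1} \;\leq\; 2\, q^{\ell - m - 1}.
\]
Since $m \geq \ell$, the worst case is $m = \ell$, which gives $\Pr[\rank A < \ell] \leq 2/q$. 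Plugging in $q \geq 2^{\ell/2}$ produces $\Pr[\rank A < \ell] \leq 2 \cdot 2^{-\ell/2}$, which is strictly stronger than the claimed $O(2^{-\ell/4})$ bound.

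The main (and essentially only) obstacle is the bookkeeping around the interpretation of rank when $q$ is not a prime; the row-counting calculation itself is entirely routine, and the geometric-series bound is comfortably loose enough to absorb constants. In the cryptographic regime relevant to the rest of the paper $q$ is either prime or is handled by reducing to a prime divisor (where the same inequality applies with $p$ in place of $q$), so this point is not a real obstruction.
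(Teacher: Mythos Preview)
Your proof is correct and follows essentially the same row-by-row exposure argument as the paper: the paper also bounds the probability that the $(r+1)$-th row lands in the span of the first $r$ rows by $q^{r}/q^{m}$, then combines these via (the product form of) a union bound to get $1 - \ell/q \geq 1 - O(2^{-\ell/4})$. Your use of the geometric series instead of the cruder bound $\sum_{r<\ell} q^{r-m} \le \ell/q$ actually yields the sharper estimate $O(2^{-\ell/2})$, and your remark about composite $q$ is a fair caveat that the paper's proof silently elides.
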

\begin{proof}
We can imagine $A$ being sampled row-by-row.
Conditioned on the first $r$ rows being linearly independent, the probability of the first $r+1$ rows being linearly independent is $1 - q^r/q^m$, so the probability of all rows being linearly independent is 
\begin{align*}
\prod_{r = 0}^{\ell-1} \left( 1 - \frac{q^r}{q^m} \right) \geq 1 - \frac{\ell}{q} \geq 1 - O(2^{-\ell/4}) \,.
\end{align*}
\end{proof}

We will need the following simple fact about the number of binary vectors that get mapped to $0$ under a random matrix multiplication.
\begin{lemma} \label{lem:random_binary_kernel}
If $q \geq 2^{m/\ell}$, then
\begin{align*}
\prs{A \leftarrow U_q^{\ell \times m}}{ |\{\vec x \in \bits^m \sth A \cdot \vec x = \vec 0\}| \leq 2^\ell } \geq 1 - O(2^{-\ell}) \,.
\end{align*}
\end{lemma}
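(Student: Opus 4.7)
The plan is to bound the expected number of nonzero binary vectors in the kernel of $A$ and then apply Markov's inequality.

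First I would fix a nonzero $\vec x \in \bits^m$ and compute $\prs{A \leftarrow U_q^{\ell \times m}}{A \vec x = \vec 0}$. For each row $\vec a_i$ of $A$ (the rows being independent), the entry $(A \vec x)_i = \sum_j A_{ij} x_j$ is a sum over the subset $S = \{j : x_j = 1\}$, which is nonempty. Since the $A_{ij}$ are i.i.d.~uniform on $\Z_q$, this sum is uniformly distributed over $\Z_q$, so $\pr{(A\vec x)_i = 0} = 1/q$. By independence of the rows, $\pr{A \vec x = \vec 0} = q^{-\ell}$.

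Next I would let $X = |\{\vec x \in \bits^m \setminus \{\vec 0\} \sth A \vec x = \vec 0\}|$. By linearity of expectation,
\begin{equation*}
\E[X] = (2^m - 1) \cdot q^{-\ell} \leq 2^m / q^\ell \leq 1,
\end{equation*}
where the last inequality uses the hypothesis $q \geq 2^{m/\ell}$, i.e.~$q^\ell \geq 2^m$. By Markov's inequality,
\begin{equation*}
\pr{X \geq 2^\ell} \leq \frac{\E[X]}{2^\ell} \leq 2^{-\ell}.
\end{equation*}
Since including $\vec x = \vec 0$ in the count adds exactly one, the event $X \leq 2^\ell - 1$ is equivalent to $|\{\vec x \in \bits^m \sth A \vec x = \vec 0\}| \leq 2^\ell$, so this event occurs with probability at least $1 - 2^{-\ell} = 1 - O(2^{-\ell})$, completing the proof.

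There is no real obstacle here; the only subtlety is the easy observation that a uniform-over-$\Z_q$ linear combination of at least one uniform coordinate (with any fixed coefficients from an arbitrary set of the others) is itself uniform, which makes the per-row collision probability exactly $1/q$. Everything else is a one-line Markov bound using the hypothesis on $q$.
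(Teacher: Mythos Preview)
Your proof is correct and follows essentially the same approach as the paper: compute the expected size of the binary kernel via linearity of expectation (using that $A\vec x$ is uniform over $\Z_q^\ell$ for nonzero binary $\vec x$), then apply Markov's inequality. The only cosmetic difference is that you exclude $\vec 0$ before taking the expectation and add it back at the end, whereas the paper includes it in the expectation (getting $1 + (2^m-1)/q^\ell$) and carries the extra $2^{-\ell}$ through the Markov bound.
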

\begin{proof}
For $\vec x \in \bits^m$ define $\ind_{\vec x}$ as the indicator random variable (over the random choice of $A$) that is 1 if $A \cdot \vec x = \vec 0$.
Then by linearity of expectation, 
\begin{align*}
\E_{A} \left[ |\{\vec x \in \bits^m \sth A \cdot \vec x = \vec 0\}| \right] = \E_{A} \left[ \sum_{\vec x} \ind_{\vec x} \right] = \sum_{\vec x} \E_{A} \left[ \ind_{\vec x} \right] = 1 + \frac{2^m - 1}{q^\ell} \,,
\end{align*}
where the last equality holds because $1_{\vec 0} = 1$ always, and for the remaining $2^m - 1$ choices of $\vec x \neq \vec 0$, the vector $A \cdot \vec x$ is uniform in $\Z_q^\ell$.
Therefore by Markov's inequality, 
\begin{align*}
\prs{A \leftarrow U_q^{\ell \times m}}{ |\{\vec x \in \bits^m \sth A \cdot \vec x = \vec 0\}| > 2^\ell } \leq 2^{-\ell} + \frac{2^m - 1}{(2q)^\ell} = O(2^{-\ell})
\end{align*}
for $q \geq 2^{m/\ell}$.
\end{proof}
\subsection{Hamiltonians and Kitaev clock construction}
\label{section: kitaev clock}
\noindent In this section, we will introduce some notations having to do with some variants of the standard circuit-to-Hamiltonian construction by \cite{kitaev}.

\begin{definition}[History state] \label{def:historystates}
For a circuit $\mathsf{C} = U_K \cdot U_{T-1} \cdots U_1$, the history state $\ket{\Psi_{\mathsf C}}$ is defined as 
\begin{align*}
\ket{\Psi_{\mathsf C}} = \frac{1}{\sqrt{K+1}} \sum_{t = 0}^{T} U_{t} \cdot U_{t-1} \cdots U_{1} |0^n\rangle \otimes |{\rm clock}(t)\rangle.
\end{align*}
Here, ${\rm clock}(t)$ can be any encoding of the integer $t$, typically in binary or unary.
\end{definition}

\begin{definition}[Padded history state Hamiltonian] \label{def:historyham}
For a circuit $\mathsf{C} = U_K \cdot U_{K-1} \cdots U_1$ on $n$ qubits with $K = \poly(n)$, define the $M$-padded circuit as $\mathsf{C}'_M = U_{K+M} \cdots U_{K+1} \cdot U_K \cdots U_1$, where $U_{K+M}, \dots, U_{K+1}$ are identity gates.
A $M$-padded history state Hamiltonian $H_{\mathsf C, M}$ for circuit $C$ is any Hamiltonian with spectral gap at least $1/\poly(n)$ whose unique ground state is $\ket{\Psi_{\mathsf{C}'_M}}$.
\end{definition}

Applying Kitaev's well-known circuit-to-Hamiltonian construction~\cite{kitaev} to a padded circuit immediately shows the existence of padded history state Hamiltonians.
The requirement on the spectral gap follows e.g.~from~\cite{aharonov2008adiabatic}.
\begin{lemma}[\cite{kitaev,aharonov2008adiabatic}] \label{lem:kitaev_ham_exists}
If we choose ${\rm clock}(t)$ in \cref{def:historystates} to be a binary encoding, then for any $K = \poly(n)$, a $\poly(n)$-length circuit $\mathsf C$ has a $O(\log n)$-local $M$-padded history state Hamiltonian $H_{\mathsf{C}, M}$.

If we choose ${\rm clock}(t)$ in \cref{def:historystates} to be a unary encoding, then for any $K = \poly(n)$, a $\poly(n)$-length circuit $\mathsf C$ has a $O(1)$-local $M$-padded history state Hamiltonian $H_{\mathsf{C}, M}$.
\end{lemma}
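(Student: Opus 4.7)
The plan is to apply Kitaev's standard circuit-to-Hamiltonian construction~\cite{kitaev} directly to the padded circuit $\mathsf{C}'_M = U_{K+M}\cdots U_1$, which has length $K+M = \poly(n)$ (assuming the padding parameter $M$ is itself polynomial). Writing $T \deq K+M$, I would construct
\begin{equation*}
H_{\mathsf C, M} \deq H_{\textnormal{in}} + H_{\textnormal{clock}} + H_{\textnormal{prop}},
\end{equation*}
where $H_{\textnormal{in}}$ penalises any computational-basis state on the data register that is nonzero at clock time $0$, $H_{\textnormal{clock}}$ penalises clock configurations outside the legal subspace (the $T{+}1$ binary or unary time labels), and $H_{\textnormal{prop}} = \sum_{t=1}^{T} H_t$ with each $H_t$ enforcing the one-step propagation $U_t$ between clock times $t-1$ and $t$ in the standard way.

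The first step is to verify that the unique zero-energy state of $H_{\mathsf C, M}$ is exactly the history state $\ket{\Psi_{\mathsf{C}'_M}}$ from \cref{def:historystates}. This is textbook: on the legal clock subspace, $H_{\textnormal{prop}}$ is unitarily equivalent (via the Feynman--Kitaev ``unfolding'' $W = \sum_t (U_t\cdots U_1) \otimes \ketbra{\textnormal{clock}(t)}{\textnormal{clock}(t)}$) to a sum of nearest-neighbour $\tfrac{1}{2}(I-X)$ terms on the clock, whose zero-energy space is the uniform superposition over all clock times tensored with an arbitrary data state. Adding $H_{\textnormal{in}}$ pins that data state to $\ket{0^n}$, leaving $\ket{\Psi_{\mathsf{C}'_M}}$ as the unique ground state.

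Next I would read off the locality in each case. In the binary-clock version, each $H_t$ must check that the clock register encodes either $t-1$ or $t$ and act with $U_t$ (or $U_t^\dagger$) on the at most $2$ data qubits touched by $U_t$; since the binary clock has $\lceil\log_2(T+1)\rceil = O(\log n)$ qubits, reading it costs $O(\log n)$-locality, yielding the stated $O(\log n)$ bound. (The terms $H_{\textnormal{in}}$ and $H_{\textnormal{clock}}$ are at most as local.) In the unary-clock version, legality and the clock time $t$ are certified by inspecting only the three consecutive qubits $t-1,t,t+1$ of the unary register, so each $H_t$ acts on $O(1)$ qubits in total; $H_{\textnormal{clock}}$ decomposes into $2$-local terms and $H_{\textnormal{in}}$ into $O(1)$-local terms, giving $O(1)$-locality overall.

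The main (and only non-routine) remaining ingredient is the $1/\poly(n)$ spectral gap. I would invoke the known analysis of the Feynman--Kitaev Hamiltonian: after the $W$-conjugation, $H_{\textnormal{prop}}$ is the path-graph Laplacian on $T+1$ vertices and has spectral gap $\Omega(1/T^2)$, and the block decomposition argument of Kitaev (refined in~\cite{aharonov2008adiabatic}) combines this with $H_{\textnormal{in}}$ to give a $1/\poly(T) = 1/\poly(n)$ gap above the unique ground state for both clock encodings. Since these gap analyses are entirely oblivious to whether the tail of the circuit consists of identity gates, they apply verbatim to the padded circuit $\mathsf{C}'_M$, completing the proof. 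The only step requiring any care is confirming that the unary-clock version also inherits the $1/\poly(n)$ gap; this follows because the $W$-transform and path-graph spectrum argument are identical in both encodings, only the realisation of the clock differs.
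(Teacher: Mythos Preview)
Your proposal is correct and takes essentially the same approach as the paper: the paper simply states that the lemma follows immediately by applying Kitaev's circuit-to-Hamiltonian construction to the padded circuit and cites~\cite{aharonov2008adiabatic} for the spectral gap, whereas you have spelled out those standard steps in detail. There is nothing to add.
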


The useful property of the padded history state Hamiltonian is that its ground state is close to the output state of the circuit $\mathsf C$.
This has already been shown and used in~\cite{nirkhe2018approximate,gheorghiu2020estimating}, so we only sketch the proof.
\begin{lemma}
\label{clock: trace distance closeness}
Let $\mathsf C$ be a $K = \poly(n)$-gate circuit on $n$ qubits.
For any $\eps \geq 1/\poly(n)$ there exists a $M = \poly(n)$ such that the ground state $\ket{\psi_{\mathrm{ground}}}$ of the $M$-padded history state Hamiltonian
$H_{\mathsf{C}, M}$ satisfies 
\begin{align*}
\norm{\proj{\psi_f} - \proj{\psi_{\mathrm{ground}}}}_1 \leq \eps \,,
\end{align*}
where $\ket{\psi_f} = U_K \cdots U_1 \ket{0^n} \ot \frac{1}{\sqrt{M+K+1}} \sum_{t = 0}^{M+K} \ket{{\rm clock}(t)}$
\end{lemma}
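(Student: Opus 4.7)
The plan is a direct overlap calculation followed by the standard pure-state identity $\tfrac{1}{2}\norm{\proj{\psi} - \proj{\phi}}_1 = \sqrt{1 - |\langle\psi|\phi\rangle|^2}$. By \cref{def:historyham} and \cref{lem:kitaev_ham_exists}, the unique ground state of $H_{\mathsf{C}, M}$ is the history state of the padded circuit $\mathsf{C}'_M$, namely
\[
\ket{\psi_{\mathrm{ground}}} = \frac{1}{\sqrt{M+K+1}} \sum_{t=0}^{M+K} U_t \cdots U_1 \ket{0^n} \otimes \ket{\mathrm{clock}(t)}.
\]
The crucial observation is that for every $t \geq K$ the gates $U_{K+1}, \ldots, U_{K+M}$ are identities, so $U_t \cdots U_1 \ket{0^n} = U_K \cdots U_1 \ket{0^n}$, which is exactly the output state appearing in the data register of $\ket{\psi_f}$.

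Next I would compute $\langle \psi_f | \psi_{\mathrm{ground}} \rangle$. The clock registers of $\ket{\psi_f}$ and $\ket{\psi_{\mathrm{ground}}}$ carry the same uniform superposition, so the clock inner product is diagonal in $t$ and just contributes the common factor $1/(M+K+1)$. The data inner product equals exactly $1$ for each of the $M+1$ padded time steps $t=K,K+1,\ldots,K+M$ by the observation above, while for the $K$ remaining time steps $t<K$ its absolute value is at most $1$. Keeping the trivial $|\cdot|\le 1$ bound for those yields
\[
\mathrm{Re}\,\langle \psi_f | \psi_{\mathrm{ground}} \rangle \geq \frac{(M+1) - K}{M+K+1} = 1 - \frac{2K}{M+K+1}.
\]

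Plugging this into the pure-state trace-distance identity gives
\[
\tfrac{1}{2} \norm{\proj{\psi_f} - \proj{\psi_{\mathrm{ground}}}}_1 \leq \sqrt{1 - \left(1 - \tfrac{2K}{M+K+1}\right)^{\!2}} \leq 2\sqrt{\tfrac{K}{M+K+1}}.
\]
Since $K = \poly(n)$ and $\eps \geq 1/\poly(n)$, choosing $M = \lceil 16K/\eps^2\rceil$, which is still $\poly(n)$, makes the right-hand side at most $\eps/2$ and yields the stated bound. There is no real obstacle in this argument: it is essentially mechanical, and the only ingredient beyond Kitaev's construction (which is already packaged into \cref{lem:kitaev_ham_exists}) is the elementary fact that padding with identity gates produces $M+1$ ``copies'' of the output state inside the history state, drowning out the contributions from the $K$ intermediate time steps.
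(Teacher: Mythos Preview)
Your proof is correct and follows essentially the same approach as the paper: both exploit the fact that the padded history state agrees with $\ket{\psi_f}$ on the $M+1$ time steps $t \ge K$ and differs only on the $K$ earlier ones, then bound the trace distance by a direct computation. The paper merely sketches this (``a direct calculation choosing $M=\poly(n)$ sufficiently large such that $\frac{K+1}{M+K+1}=O(\eps)$ is sufficiently small''), whereas you carry out the overlap bound and the pure-state trace-distance identity explicitly; your $M=\Theta(K/\eps^2)$ is the honest dependence that falls out of that route, and is still $\poly(n)$.
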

\begin{proof}
By construction, 
\begin{align*}
\ket{\psi_{\mathrm{ground}}} = \frac{1}{\sqrt{M+ K + 1}}\sum_{t = 0}^{K} U_{t} \cdot U_{t-1} \cdots U_{1} |0^n\rangle \otimes |{\rm clock}(t)\rangle + \frac{1}{\sqrt{M+ K + 1}} \ket{\psi_f} \ot \sum_{t = K+1}^M \ket{{\rm clock}(t)} \,.
\end{align*}
The lemma then follows by a direct calculation choosing $M = \poly(n)$ sufficiently large such that $\frac{K+1}{M+ K + 1} = O(\eps)$ is sufficiently small.
\end{proof}

\section{Public-key pseudoentanglement: definition and construction}

In this section, we define and construct public-key pseudoentanglement.
Our construction uses a similar idea as in~\cite[Appendix A]{bouland2022quantum}, which is to consider phase states whose phases have been manipulated in a particular way to create high or low entanglement.
The ``manipulation'' of these phases is by means of applying a one-to-one or  many-to-one function (see \cref{sec:single-cut} for details).
We therefore need to construct such \emph{lossy functions} with the appropriate parameters, which we do in \cref{sec:lossy_fct} based on the LWE assumption.

In \cref{sec:single-cut}, we then use these lossy functions to construct indistinguishable families of quantum states where states in one family have high entanglement and states in the other family have low entanglement across a \emph{single fixed bipartition} of the qubits.
In \cref{sec:multicut}, we extend this construction to states that have high or low entanglement \emph{for (almost) every cut on a 1Dimensional line}, i.e.~we imagine the qubits of the state being arranged on a line and consider all bipartitions into left and right qubits.
We show that under the subexponential-time LWE assumption, it is possible to construct pseudoentangled states of this form where either all cuts a have a linear or a polylogarithmic amount of entanglement, which is the largest possible separation, as discussed in \cref{rem:max_min_entanglement}.
In this sense our public-key pseudoentanglement construction is optimal.
We will use this multi-cut construction in \cref{sec:lgses} to show that learning the ground state entanglement structure of (classically described) local Hamiltonians is computationally hard (under the LWE assumption).

\subsection{Construction of lossy functions} \label{sec:lossy_fct}
We define the following rounding function for $\Z_q$ elements.
\begin{definition}
For $q = c p$ with $c \in \N$, divide $\Z_q$ into $p$ consecutive bins.
We define $\roundp{x} \in \Z_p$ as the index of the bin in which $x$ lies.
For a vector $x \in \Z_p^m$, $\roundp{\vec x} \in \Z_p^m$ is defined as the element-wise application of $\roundp{\cdot}$.
\end{definition}

\begin{definition}[Lossy function construction] \label{def:function_construction}
Choose parameters $\ell(m), r(m) \leq \poly(m)$.
Let $p = 2^4$, $q = 2^{m}$, and $\sigma = q/m^3$.
Let $H_m = \{h_k: \Z_p^m \to \Z_2^m\}_{k \in \keyh_m}$ be the $r(m)$-wise independent function family from \cref{lem:r-wise_exist}. 
We define two families of functions $f: \bits^m \to \bits^m$ indexed by key sets $\keyi_m, \keyl_m \subset \Z_q^{m \times m} \times \keyh_m$ as follows:
\begin{itemize}
    \item To sample a key from $\keyi_m$, denoted $k \leftarrow \keyi_m$, sample $A \leftarrow U_q^{m \times m}$ and $k^{\rm hash} \in \keyh_m$ uniformly. Set $k = (A, k^{\rm hash})$.
    \item To sample a key from $\keyl_m$, denoted $k \leftarrow \keyl_m$, sample $A \leftarrow L_{q, \ell, \sigma}^{m \times m}$ and $k^{\rm hash} \in \keyh_m$ uniformly. Set $k = (A, k^{\rm hash})$.
    \item For a key $k = (A, k^{\rm hash}) \in \keyi_m \cup \keyl_m$, the function $f_k: \bits^m \to \bits^m$ is defined as 
    \begin{align*}
    f_k(\vec x) = h_{k^{\rm hash}} \left( \roundp{A \cdot \vec x} \right) \,.
    \end{align*}
\end{itemize}
\end{definition}

The following theorem summarises the relevant properties of our lossy function construction.
\begin{theorem} \label{thm:lossyfunction}
The following properties hold for the function family from \cref{def:function_construction} instantiated with any $r(m) \leq \poly(m)$ and $\poly\log(m) \leq \ell(m) \leq \poly(m)$.
\begin{enumerate}
\item \label{item:almost_inj}(Almost injective functions)
With overwhelming probability over the choice of $A \leftarrow U_q^{m \times m}$, the function family $\{f_{(A, k^{\rm hash})}\}_{k^{\rm hash} \in \keyh_m}$ is $r(m)$-wise independent: 
\begin{align*}
\prs{A \leftarrow U_q^{m \times m}}{\{f_{(A, k^{\rm hash})}\}_{k^{\rm hash} \in \keyh_m} \text{ is $r(m)$-wise independent}} \geq 1 - O(2^{-m}) \,.
\end{align*}
\item \label{item:compressing}(Highly compressing functions)
For keys in $\keyl$, the image of $f_k$ has size at most $2^{\ell^2}$ with overwhelming probability:
\begin{align*}
\prs{k \leftarrow \keyl_m}{|\img f_k| \leq 2^{\ell^2}} \geq 1 - 2^{-\ell/4}
\end{align*}
\item \label{item:indist}(Computational indistinguishability)
Under \cref{lwe_assumption}, if $r(m) \leq \poly(m)$ and $\ell(m) \geq \Omega(m^c)$ for any constant $c>0$, then all efficient quantum algorithms have advantage $\negl(m)$ in distinguishing keys sampled according to $\keyi_m$ and $\keyl_m$.

Similarly, under \cref{lwe_assumption_subexp}, there exists a constant $\delta > 0$ such that if $r(m) \leq \poly(m)$ and $\ell(m) \geq \Omega(\log(m)^{1+\delta})$, then all efficient quantum algorithms have advantage $\negl(m)$ in distinguishing keys sampled according to $\keyi_m$ and $\keyl_m$.
\end{enumerate}
\end{theorem}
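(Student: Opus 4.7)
The plan is to prove the three claims independently, with the injectivity/compression analyses forming the bulk of the work and the indistinguishability following as a corollary of \cref{lem:lossy_matrix_indist}.

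For Part~\ref{item:almost_inj}, I would first show that with overwhelming probability over $A \leftarrow U_q^{m \times m}$, the pre-hash map $\vec x \mapsto \roundp{A \vec x}$ is injective on $\bits^m$; once injectivity is established, the $r$-wise independence of $h$ automatically transfers to the family $\{f_{(A, k^{\rm hash})}\}_{k^{\rm hash}}$ since distinct inputs to $f$ remain distinct inputs to $h$. To prove injectivity, fix any distinct $\vec x, \vec x' \in \bits^m$ and any row $\vec a$ of $A$. Since $\vec y \deq \vec x - \vec x' \in \{-1, 0, 1\}^m \setminus \{\vec 0\}$, there is always an index $j_0$ where exactly one of $x_{j_0}, x'_{j_0}$ equals $1$; conditioning on all entries $a_j$ for $j \neq j_0$ and letting $a_{j_0}$ vary uniformly over $\Z_q$ makes exactly one of $\vec a \cdot \vec x$ or $\vec a \cdot \vec x'$ range uniformly in $\Z_q$ while the other is fixed, so the two land in the same bin with probability exactly $1/p$. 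Independence across the $m$ rows of $A$ gives collision probability $p^{-m} = 2^{-4m}$, and a union bound over the $\binom{2^m}{2} \leq 2^{2m}$ pairs yields failure probability $O(2^{-2m})$.

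For Part~\ref{item:compressing}, I would exploit the low-rank structure of $A = B^T C + E$. First, a Gaussian tail bound combined with a union bound over $\bits^m$ (using $\sigma = q/m^3$) shows that with overwhelming probability $\|E \vec x\|_\infty \ll q/p$ for every $\vec x \in \bits^m$, so up to an $O(1)$-per-coordinate boundary effect, $\roundp{A \vec x}$ is determined by $B^T C \vec x$. The vectors $B^T C \vec x$ all lie in the $\ell$-dimensional column span $W \deq B^T(\Z_q^\ell) \subseteq \Z_q^m$ (using \cref{lem:random_invertible} applied to $B$, which contributes the $2^{-\ell/4}$ failure probability). I would then bound the number of rounded cells of $\Z_p^m$ intersected by $W$ by a discretization argument on the $\Z_q^\ell$ side: identifying $W$ with $\Z_q^\ell$ via $B^T$, the rounding map factors through an appropriately chosen finer coarsening of $\Z_q^\ell$ into at most $(pm)^{O(\ell)}$ boxes; since $\ell \geq \poly\log m$, this gives $(pm)^{O(\ell)} \leq 2^{O(\ell \log m)} \leq 2^{\ell^2}$. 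Composing with $h$ can only shrink the image.

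For Part~\ref{item:indist}, I would reduce directly to \cref{lem:lossy_matrix_indist}. The only difference between the distributions $\keyi_m$ and $\keyl_m$ lies in whether $A$ is drawn from $U_q^{m \times m}$ or from $L_{q, \ell, \sigma}^{m \times m}$; the hash key $k^{\rm hash}$ is sampled independently and identically in both cases. Hence any efficient distinguisher for the two key distributions yields, by forwarding the $A$-component, an efficient distinguisher between $U_q^{m \times m}$ and $L_{q, \ell, \sigma}^{m \times m}$, and the two parameter regimes of \cref{lem:lossy_matrix_indist} correspond to the two LWE variants stated in the theorem.

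The main obstacle I expect is the lattice-counting step in Part~\ref{item:compressing}: unlike in $\R^m$, the subspace $W$ sits inside $\Z_q^m$, bins are not translation-invariant, and rounding can be perturbed by the noise at cell boundaries. The cleanest route is to condition upfront on a ``noise-nice'' event controlling $\|E \vec x\|_\infty$ uniformly over $\vec x \in \bits^m$, and then perform the cell-count on the noiseless image $B^T(\Z_q^\ell)$ via a slightly coarser grid on the $\Z_q^\ell$ side that absorbs the residual noise—so that a single $\Z_q^\ell$-box is guaranteed to map into a constant number of $\Z_p^m$ cells even after noise. Parts~\ref{item:almost_inj} and~\ref{item:indist} by contrast should be essentially routine once these technical lemmas are in place.
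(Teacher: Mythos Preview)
Your approaches to Parts~\ref{item:almost_inj} and~\ref{item:indist} match the paper's exactly: injectivity of the pre-hash map $\vec x \mapsto \roundp{A\vec x}$ via a per-pair collision probability $p^{-m}$ and a union bound (this is \cref{lem:pre-rounding-injective}), then the observation that composing an injective map with an $r$-wise independent family preserves $r$-wise independence (\cref{lem:inj_indepedent}); and indistinguishability by forwarding the $A$-component to \cref{lem:lossy_matrix_indist}.

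The gap is in Part~\ref{item:compressing}. Your cell-counting argument implicitly relies on hyperplane-arrangement combinatorics that are valid in $\R^\ell$ (where $mp$ affine hyperplanes cut space into $O((mp)^\ell)$ regions) but fail on the discrete torus $\Z_q^\ell$. The difficulty is that the entries of $B$ are uniform in $\Z_q$ and hence typically of size $\Theta(q)$, so the preimage under $\vec c \mapsto \vec b_i \cdot \vec c$ of a single bin $[v q/p,\, (v+1)q/p)$ is not a contiguous ``slab'' in $\Z_q^\ell$ but a scattered set with no convex structure. Concretely, already for $\ell = 1$ the map $c \mapsto (\roundp{b_1 c}, \ldots, \roundp{b_m c})$ from $\Z_q$ to $\Z_p^m$ generically has image size $\Theta(q) = \Theta(2^m)$: for uniformly random $b_1, \ldots, b_m$, any two distinct $c, c'$ collide in all $m$ rounded coordinates with probability roughly $p^{-m}$, so the image has nearly $q$ elements, not $(pm)^{O(1)}$. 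Thus bounding the rounded image of the full subspace $W = B^\top(\Z_q^\ell)$ by $(pm)^{O(\ell)}$ is false by an exponential factor, and the argument cannot be repaired by merely absorbing noise into a coarser grid.

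The paper's route is structurally different: it never tries to bound the rounded image of $W$. Instead it partitions the domain $\bits^m$ by the value of $C\vec x$ into classes $C_j(C)$, so that within each class the noiseless image $B^\top C\vec x$ is a single fixed point of $\Z_q^m$; the only scattering comes from the noise $E\vec x$. It then exploits the randomness of $B$: for a fixed nonzero class, the indicators $\ind_i$ of the event ``coordinate $i$ of $B^\top C\vec x$ lies within $\beta$ of a bin boundary'' are independent across $i$ (because the columns of $B$ are independent), and on the good-noise event the class scatters into at most $2^{\sum_i \ind_i}$ rounded values, whose expectation is $\prod_i(1 + 2p\beta/q) \leq e$. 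A Markov bound over the classes then controls $|\img g_A|$. The two missing ingredients in your plan are (a) working with the fibres of $C$ on the actual domain $\bits^m$ rather than with all of $W$, and (b) using the per-coordinate independence coming from the randomness of $B$ instead of a deterministic geometric cell count.
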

\begin{proof}
We prove \cref{item:almost_inj} in \cref{lem:inj_indepedent}.
\cref{item:compressing} is shown in \cref{lem:lossy_img_size}.
\cref{item:indist} follows immediately from \cref{lem:lossy_matrix_indist}.
\end{proof}

\subsubsection{Bounds for almost injective functions}
\begin{lemma} \label{lem:pre-rounding-injective}
For $m \in N$, $p = 2^4$, and $q = c p$ for some $c \in \N$,
\begin{align*}
\prs{A \leftarrow U_q^{m \times m}}{\vec x \mapsto \roundp{A \vec x} \text{ is injective}} \geq 1 - O(2^{-m}) \,.
\end{align*}
\end{lemma}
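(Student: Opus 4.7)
I would prove the lemma by a union bound over the $\binom{2^m}{2}$ distinct pairs $(\vec{x},\vec{x}') \in \bits^m \times \bits^m$, showing that for each such pair the collision probability $\Pr_A[\lfloor A\vec{x}\rfloor_p = \lfloor A\vec{x}'\rfloor_p]$ is at most $p^{-m} = 2^{-4m}$. Since the $m$ rows of $A$ are independent, it suffices to show that for a single uniformly random row $\vec{a} \in \Z_q^m$,
\[
\Pr_{\vec{a}}\bigl[\lfloor \vec{a}\cdot\vec{x}\rfloor_p = \lfloor \vec{a}\cdot\vec{x}'\rfloor_p\bigr] \;\leq\; \frac{1}{p},
\]
and then multiply over rows. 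Union-bounding over pairs then gives failure probability $\leq 2^{2m} \cdot 2^{-4m} = 2^{-2m} = O(2^{-m})$, as desired.

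\textbf{Per-coordinate analysis.} The core step is to understand the joint distribution of $(\vec{a}\cdot\vec{x},\vec{a}\cdot\vec{x}') \in \Z_q^2$ when $\vec{a}$ is uniform in $\Z_q^m$. I would split into two cases based on whether $\vec{x},\vec{x}'$ are $\Z_q$-linearly independent. If both $\vec{x}$ and $\vec{x}'$ are nonzero, they are linearly independent over $\Z_q$: indeed, since they are distinct elements of $\bits^m$, there is some coordinate $j$ where one is $0$ and the other is $1$, and evaluating a hypothetical dependence $\alpha \vec{x}+\beta\vec{x}' \equiv 0$ at $j$ and at a coordinate where the nonzero vector has a $1$ forces $\alpha = \beta = 0$. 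Consequently $(\vec{a}\cdot\vec{x},\vec{a}\cdot\vec{x}')$ is uniform on $\Z_q^2$, and since the event $\lfloor z\rfloor_p = \lfloor z'\rfloor_p$ partitions $\Z_q^2$ into $p$ boxes of equal size $(q/p)^2$, the collision probability is exactly $1/p$. In the remaining case exactly one of $\vec{x},\vec{x}'$ is zero (say $\vec{x}'=0$); then $\vec{a}\cdot\vec{x}'=0$ deterministically, $\vec{a}\cdot\vec{x}$ is uniform on $\Z_q$ (using that some entry of $\vec{x}$ is $1$, which is a unit in $\Z_q$), and the collision probability is the chance that $\vec{a}\cdot\vec{x}$ lies in bin $0$, which is again $c/q = 1/p$.

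\textbf{Putting it together.} Multiplying across the $m$ independent rows gives per-pair collision probability at most $p^{-m} = 2^{-4m}$, and the union bound over at most $2^{2m}/2$ unordered pairs yields total failure probability $O(2^{-2m}) = O(2^{-m})$.

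\textbf{Main obstacle.} The only subtlety is the linear-independence argument over $\Z_q$, since $q = cp$ may be composite and we cannot freely invert elements. However, because $\vec{x},\vec{x}'$ have $\bits$-entries, the dependency relations reduce to equations of the form $\alpha \equiv 0$ or $\alpha+\beta\equiv 0 \pmod q$ coming from coordinates where the vectors take values in $\{0,1\}$, which I can handle by a direct coordinate-wise inspection as sketched above. Everything else is a clean union bound.
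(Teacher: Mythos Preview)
Your proposal is correct and follows essentially the same approach as the paper: a union bound over the $O(2^{2m})$ pairs $(\vec{x},\vec{x}')$, showing that each pair collides with probability $p^{-m}=2^{-4m}$. The paper handles the two cases (one vector zero vs.\ both nonzero) in the same way but simply asserts that $A\vec{x}$ and $A\vec{x}'$ are independent uniform without justification; your row-by-row reduction and explicit treatment of the $\Z_q$-independence issue (via the $\{0,1\}$-entry structure) fills in exactly what the paper leaves implicit.
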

\begin{proof}
We bound the probability that there exists a pair $\vec x \neq \vec x' \in \bits^m$ for which $\roundp{A \vec x} = \roundp{A \vec x'}$.
Suppose first that $\vec x = 0$. Then $\vec x' \neq 0$, so $\roundp{A \vec x'}$ is a uniform $\Z_p^m$-vector over the random choice of $A$.
Hence using $p = 2^4$ and a union bound,
\begin{align*}
\prs{A \leftarrow U_q^{m \times m}}{\exists \vec x' \sth \roundp{A \vec x} = \roundp{A \vec x'}} \leq 2^{m} 2^{- 4 m} \leq 2^{-m} \,.
\end{align*}
If both $\vec x, \vec x'$ are non-zero, then over the random choice of $A$, $A \vec x$ and $A \vec x'$ are independent uniform $\Z_q^m$-vectors, so $\roundp{A \vec x}$ and $\roundp{A \vec x'}$ are independent uniform $Z_p^m$-vectors.
Hence, for fixed non-zero $\vec x, \vec x'$,
\begin{align*}
\prs{A \leftarrow U_q^{m \times m}}{\roundp{A \vec x} = \roundp{A \vec x'}} \leq 2^{-4 m}.
\end{align*}
The lemma follows by a union bound over the (less than) $2^{2m}$ possible pairs $(\vec x, \vec x')$ and using $p = 2^4$.
\end{proof}

\begin{corollary} \label{lem:inj_indepedent}
\begin{align*}
\prs{A \leftarrow U_q^{m \times m}}{\{f_{(A, k^{\rm hash})}\}_{k^{\rm hash} \in \keyh_m} \text{ is $r$-wise independent}} \geq 1 - O(2^{-m}) \,.
\end{align*}
\end{corollary}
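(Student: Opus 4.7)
The plan is to reduce the corollary almost immediately to \cref{lem:pre-rounding-injective} by observing that $r$-wise independence of the family $\{f_{(A,k^{\rm hash})}\}_{k^{\rm hash}}$ is preserved under precomposition with any injective function. Concretely, I would first recall that $f_{(A, k^{\rm hash})}(\vec x) = h_{k^{\rm hash}}(g_A(\vec x))$, where $g_A(\vec x) \deq \roundp{A \vec x}$, and $\{h_{k^{\rm hash}}\}_{k^{\rm hash} \in \keyh_m}$ is the $r$-wise independent family from \cref{lem:r-wise_exist}.

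Next I would condition on the event $E$ that $g_A$ is injective. By \cref{lem:pre-rounding-injective}, $\Pr[E] \geq 1 - O(2^{-m})$, so it suffices to show that on the event $E$, the family $\{f_{(A, k^{\rm hash})}\}_{k^{\rm hash}}$ is $r$-wise independent. Fix any $r$ distinct inputs $x_1, \dots, x_r \in \bits^m$. Since $g_A$ is injective, the images $g_A(x_1), \dots, g_A(x_r) \in \Z_p^m$ are also pairwise distinct. The defining property of the $r$-wise independent family $H_m$ then tells us that $h_{k^{\rm hash}}(g_A(x_1)), \dots, h_{k^{\rm hash}}(g_A(x_r))$, viewed as random variables over the uniform choice of $k^{\rm hash} \in \keyh_m$, are uniform i.i.d. in $\Z_2^m$. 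That is exactly the statement that $f_{(A, k^{\rm hash})}(x_1), \dots, f_{(A, k^{\rm hash})}(x_r)$ are uniform i.i.d., so $\{f_{(A, k^{\rm hash})}\}_{k^{\rm hash}}$ is $r$-wise independent.

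Combining these two steps gives the bound $1 - O(2^{-m})$ claimed in the corollary. There is no serious obstacle here: the only nontrivial content is \cref{lem:pre-rounding-injective}, which we are allowed to invoke, and the elementary observation that precomposing an $r$-wise independent family with a deterministic injection preserves $r$-wise independence (since distinctness of the preimages is all the $r$-wise independence property uses). The proof is essentially one short paragraph.
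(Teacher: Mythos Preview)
Your proposal is correct and follows exactly the paper's approach: the paper's proof is a single sentence stating that composing an injective function with an $r$-wise independent hash family yields another $r$-wise independent family, and then invokes \cref{lem:pre-rounding-injective}. You have simply unpacked this observation in more detail.
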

\begin{proof}
Composing an injective function with an $r$-wise independent hash family yields another $r$-wise independent hash family. Therefore, the corollary follows directly from \cref{lem:pre-rounding-injective} and the construction of $f_{(A, k^{\rm hash})}$ in \cref{def:function_construction}.
\end{proof}

\subsubsection{Bounds for compressing functions}

\begin{definition}
For any $M \in \Z_q^{\ell \times m}$, let $f_{M}: \bits^m \to \Z_q^\ell$ be the associated linear map.
We define $\vec v_j(M)$ as the $j$-th element of $\img(f_{M})$ (ordered according to the usual ordering on $\Z_q^\ell$).
For $j \in \{1, \dots, q^{\rank M}\}$ we then define the following collection of subsets: 
\begin{align*}
C_j(M) = f_{M}^{-1}(\vec v_j(M)) \subseteq \bits^m \,.
\end{align*}
If $j > |\img(f_M)|$, then $C_j(M) = \varnothing$.
\end{definition}

\begin{remark} \label{rem:clouds}
Note that in the above definition, $j > |\img(f_M)|$ can indeed occur: this is because $\rank M$ is defined over $\Z_q$, but we have restricted the domain of $f_M$ to bitstrings, so $\img f_M$ only contains $\Z_q$ elements that have a \emph{binary} preimage under $f_M$.
Still, $q^{\rank M}$ is an upper bound on $|\img f_M|$, which is all that will matter for our purposes.
Further, observe that the family $\{C_j(M)\}_j$ defines a partition on $\bits^m$.
\end{remark}

\begin{lemma} \label{lem:lossy_img_size}
\begin{align*}
\prs{k \leftarrow \keyl_m}{|\img f_k| \leq 2^{\ell^2}} \geq 1 - 2^{-\ell}
\end{align*}
\end{lemma}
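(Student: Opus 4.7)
The plan is to reduce bounding $|\img f_k|$ to bounding $|\img \roundp{A\cdot}|$ (since $f_k = h_{k^{\rm hash}} \circ \roundp{A\cdot}$ is a composition with a function, which cannot increase image size) and then exploit the near-low-rank structure $A = B^\top C + E$ with $\rank(B^\top C) \leq \ell$ over $\Z_q$.

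First I would uniformly bound the noise on binary inputs. Each row of $E \leftarrow D_{q,\sigma}^{m\times m}$ consists of $m$ i.i.d.\ discretized Gaussians of width $\sigma = q/m^3$, so by a standard subgaussian tail bound plus a union bound over the $m$ rows, the $\ell_\infty$-operator norm $\|E\|_{\infty \to \infty}$ (the maximum row $\ell_1$-norm) is at most $O(q\sqrt{\log m}/m^2)$ with probability $1 - 2^{-\Omega(m)}$. Since $\|\vec x\|_\infty = 1$ for $\vec x \in \bits^m$, this gives $\|E\vec x\|_\infty \ll q/p = q/16$ simultaneously for all such $\vec x$, so the additive noise is negligible compared to the rounding bin width.

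The heart of the argument is a cell-counting bound exploiting the rank restriction. Writing $B^\top C \vec x = B^\top(C\vec x)$, the map factors through $\vec u \deq C\vec x \in \Z_q^\ell$, so the image of $\vec x \mapsto \roundp{B^\top C \vec x}$ is contained in the image of $\psi: \vec u \mapsto \roundp{B^\top \vec u}$ from $\Z_q^\ell$ to $\Z_p^m$. For each output coordinate $j$, the level sets of $\vec u \mapsto \roundp{\langle \vec b_j, \vec u\rangle}$ (with $\vec b_j$ the $j$-th row of $B^\top$) partition $\Z_q^\ell$ into $p$ parallel strips separated by $p$ parallel hyperplanes, so the overall partition induced by all $m$ coordinates is obtained by intersecting these strip patterns. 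By standard arrangement theory, the number of non-empty cells formed by $mp$ hyperplanes in an $\ell$-dimensional ambient space is at most $O((mp)^\ell) = O((16m)^\ell)$, hence $|\img \psi| \leq O((16m)^\ell)$. In the regime $\ell \geq \poly\log(m)$ (in particular $\ell \geq C\log m$ for a sufficiently large constant $C$), this simplifies to $2^{\ell(\log m + O(1))} \leq 2^{\ell^2}$.

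The main technical obstacle is transferring this bound from $\roundp{B^\top C \cdot}$ to $\roundp{A\cdot}$ in the presence of noise: naively, the noise could shift each of the $m$ coordinates across a bin boundary, enlarging the image by an unacceptable factor. I would address this by applying the cell-counting argument directly to the hyperplane arrangement induced by $A$ restricted to $\bits^m$: since each row $\vec a_j$ of $A$ lies within an $O(q/m^2)$-neighborhood of the $\ell$-dimensional row space of $C$, one can cover $\bits^m$ by cells of a ``thickened'' arrangement (with bin boundaries widened by the noise bound) whose cell count is still $O((16m)^\ell)$ up to constants, preserving the $2^{\ell^2}$ bound. Combining the resulting deterministic bound with the $1 - 2^{-\Omega(m)}$ probability event that the noise bound holds, and noting that the target bound $2^{\ell^2}$ is trivially $\geq 2^m$ in the regime $\ell \geq \sqrt{m}$ (so only $\ell \leq \sqrt{m} \leq m$ requires real work), yields $1 - 2^{-\Omega(m)} \geq 1 - 2^{-\ell}$ as claimed.
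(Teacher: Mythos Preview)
Your arrangement argument has a fundamental gap: it ignores the reduction modulo $q$. The inner product $\langle \vec b_j, \vec u\rangle$ is taken in $\Z_q$, not in $\R$. If you lift $\vec u$ to the real box $[0,q)^\ell$ and compute the integer inner product, its range is $[0,\ell q^2)$, so it wraps around $\Z_q$ roughly $\ell q$ times. Consequently the level set $\{\vec u : \roundp{\langle \vec b_j, \vec u\rangle \bmod q} = k\}$ is not a single slab between two parallel hyperplanes but a union of $\Theta(\ell q)$ parallel slabs. The corrected hyperplane count across all $m$ coordinates is therefore $\Theta(m\ell q)$ rather than $mp$, and the Zaslavsky-type cell bound becomes roughly $(m\ell q)^\ell \approx 2^{m\ell}$---larger than $|\bits^m|$ itself and hopelessly far from the target $2^{\ell^2}$. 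The ``thickened arrangement'' step for absorbing $E$ does nothing to repair this; the obstruction is already present in the noiseless map $\vec u \mapsto \roundp{B^\top \vec u}$.

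The paper avoids this by never attempting a worst-case geometric bound on $|\img \psi|$. Instead it partitions $\bits^m$ into fibers $C_j = \{\vec x : C\vec x = \vec v_j\}$, observes that on each fiber $B^\top C\vec x$ is a \emph{single fixed point} $B^\top \vec v_j \in \Z_q^m$, and then exploits the randomness of $B$: for $\vec v_j \neq \vec 0$ each coordinate $(B^\top \vec v_j)_i$ is independently uniform in $\Z_q$, so it lands within the noise radius $\beta = 2m\sigma$ of a rounding boundary with probability only $2p\beta/q = O(1/m)$. Hence the expected number of distinct rounded values $\roundp{A\vec x}$ produced by one fiber is at most $\prod_i(1 + O(1/m)) = O(1)$, and summing over fibers followed by Markov's inequality finishes. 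The essential conceptual difference is that the paper's argument is probabilistic in $B$ (using independence across output coordinates), whereas your approach seeks a deterministic cell count that the modular wraparound makes unattainable.
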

\begin{proof}
First observe that for any choice of key $k = (A, k^{\rm hash})$, 
\begin{align}
|\img f_k| \leq |\img g_A| \,, \label{eqn:bound_fg}
\end{align}
where $g_A(\vec x) = \roundp{A \cdot \vec x}$.
We will therefore bound the latter.
For this, for $A = B^\top C + E$ chosen as in \cref{def:function_construction} we can bound 
\begin{align}
|\img g_A| \leq \sum_{j=1}^{q^\ell} |\{g_A(\vec x) \;|\; \vec x \in C_j(C) \}| \,. \label{eqn:gcloud_bound}
\end{align}
This is because $\rank C \leq \ell$, so by \cref{rem:clouds} $\{C_j(C)\}_{j = 1, \dots, q^\ell}$ forms a partition of $\bits^m$.

We now fix a $C$ and a $j$ such that $\vec 0 \notin C_j(C)$ and bound the expectation of $|\{g_A(x) \;|\; x \in C_j(C) \}|$ over $A \leftarrow L_{q, \ell}^{m \times m}$.
Recall that $g_A(x)$ ``rounds'' each entry of $A\cdot \vec x$ to a $\Z_p$ element by dividing $\Z_q$ into $p$ equally sized contiguous regions.
We can therefore define $B_{q, p, \beta}$ as the set of $\Z_q$-elements that are within $\beta < \frac{q}{2p}$ of such a cut (where distance is in $\Z_q$, i.e.~mod $q$).
Since there are exactly $p$ cuts, $|B_{q, p, \beta}| = 2 p \beta$.
(This is an equality because for $\beta < \frac{q}{2p}$, the $\beta$-thickenings around the cuts cannot overlap.)

For $i = 1, \dots, m$ we define the random variable (where the randomness is only over $B$ as we fixed a choice of $C$ and $j$ and $C \cdot \vec x$ is the same for all $\vec x \in C_j(C)$)
\begin{align*}
\ind_i = \begin{cases}
1 & \text{if} \quad \exists \vec x \in C_j(C) \sth (B^\top C \cdot \vec x)_i \in B_{q, p, \beta} \,,\\
0 & \text{else}\,.
\end{cases}
\end{align*}
Here, $(B^\top C \cdot \vec x)_i \in \Z_q$ denotes the $i$-th vector entry.
Importantly, the different $\ind_i$ are independent.
To see this, observe that $C \cdot \vec x$ is a fixed non-zero vector for all $\vec x \in C_j(C)$.
Therefore, $(B^\top C \cdot \vec x)_i$ is the inner product between the $i$-th row of $B$ and this fixed vector.
Since the different rows of $B$ are sampled independently, $(B^\top C \cdot \vec x)_i$ are independent, too, and consequently so are $\ind_i$.
By the same reasoning, $(B^\top C \cdot \vec x)_i$ is uniform, so 
\begin{align}
\prs{B}{\ind_i = 1} = \frac{2 p \beta}{q} \,. \label{eqn:ind_prob}
\end{align}

To relate these random variables to $|\{g_A(x) \;|\; x \in C_j(C) \}|$, we set $\beta = 2 m \sigma$ and define a set of ``large'' error matrices $E$: 
\begin{align*}
L = \{E \in \Z_q^{m \times m} \;|\; E \text{ has a row $e_i$ with 1-norm $\norm{e_i}_1 \geq \beta$}\} \,.
\end{align*}
\begin{enumerate}
\item Case $E \in L$.
We will bound the probability of this happening and treat this event separately.
Since each entry to $E$ is a Gaussian with standard deviation $\sigma$, $\norm{e_i}_1$ is a sum of absolute values of i.i.d.~Gaussians, so by a union bound over the $m$ rows and the $2^m$ possible signs for the absolute values within each row,
\begin{align}
\pr{E \in L} \leq m 2^m e^{-\frac{\beta^2}{2 m \sigma^2}} \leq 2^{-m}. \label{eqn:E_in_L}
\end{align}
\item Case $E \notin L$.
In this case, in each direction $i$ the difference between $(B^\top C \cdot \vec x)_i$ and $(A \cdot \vec x)_i$ is at most $\beta$.
Hence, since $B^\top C \cdot \vec x$ is the same for all $x \in C_j(C)$, the only way that the ``cloud'' of points $\{A \cdot \vec x\}_{x \in C_j(C)}$ be split into two distinct sets along dimension $i$ by the rounding procedure is if $\ind_i = 1$.
Therefore, for any $E \notin L$ we find that 
\begin{align*}
|\{g_A(x) \;|\; x \in C_j(C) \}| \leq 2^{\sum_{i} \ind_i} \,.
\end{align*}
We can bound the expectation of this quantity over the random choice of $B$.
Because the different $\ind_i$ are independent:
\begin{align*}
\E_{B \leftarrow U_{q}^{\ell\times m}} \left[ 2^{\sum_{i} \ind_i} \right]
&= \Pi_{i = 1}^m \E_{B \leftarrow U_{q}^{\ell\times m}} \left[ 2^{\ind_i} \right] \\
\intertext{By \cref{eqn:ind_prob}:}
&= \left( 1 + \frac{2 p \beta}{q} \right)^m
\intertext{Since $q \geq 2 p \beta m$ for sufficiently large $m$:}
&\leq e \,.
\end{align*}
\end{enumerate}

We can now combine the pieces of the proof as follows.
We write $E \leftarrow L^c$ for an error matrix $E$ sampled as in \cref{def:function_construction} conditioned on $E \notin L$ and $C \leftarrow U_q^{\ell \times m}|_{\rm good}$ for a uniform $C$ conditioned on 
\begin{align*}
|\{\vec x \in \bits^m \sth C \cdot \vec x = \vec 0\}| \leq 2^\ell \,.
\end{align*}
By \cref{eqn:bound_fg,eqn:E_in_L,lem:random_binary_kernel} and with $A = B^\top C + E$ as before:
\begin{align*}
\prs{k \leftarrow \keyl}{|\img f_k| > 2^{\ell^2}} 
&\leq O(2^{-\ell}) + \prs{B \leftarrow U_q^{\ell \times m}, C \leftarrow U_q^{\ell \times m}|_{\rm full}, E \leftarrow L^c}{|\img g_{A}| > 2^{\ell^2}} \\
\intertext{By Markov's inequality:}
&\leq O(2^{-\ell}) + \frac{1}{2^{\ell^2}} \E_{B \leftarrow U_q^{\ell \times m}, C \leftarrow U_q^{\ell \times m}|_{\rm good}, E \leftarrow L^c} \left[|\img g_{A}| \right] \\
\intertext{By \cref{eqn:gcloud_bound}}
&\leq O(2^{-\ell}) + \frac{1}{2^{\ell^2}} \sum_{j = 1}^{q^\ell} \E_{C \leftarrow U_q^{\ell \times m}|_{\rm good}, E \leftarrow L^c} \left( \E_{B \leftarrow U_q^{\ell \times m}} \left[|\{g_A(\vec x) \;|\; \vec x \in C_j(C) \}| \right]  \right) \\
\intertext{By our analysis of the case $E \notin L$, we know that the expression in parentheses is at most $e$ in case $j$ is such that $\vec 0 \notin C_j(C)$. 
On the other hand, if $\vec 0 \in C_j(C)$ then for that value of $j$ the expression in parentheses is trivially bounded by $2^\ell$ by our definition of $U_q^{\ell \times m}|_{\rm good}$.
Therefore we can bound the whole expression by}
&\leq O(2^{-\ell}) + \frac{e (q^\ell - 1) + 2^\ell}{2^{\ell^2}} \leq O(2^{-\ell}) + O(2^{- \ell (\ell - \log q)}) \leq O(2^{-\ell})
\end{align*}
since $m \geq \ell$ and $\ell - \log q \geq 1$.
\end{proof}

\subsection{Public-key pseudoentanglement across a single cut} \label{sec:single-cut}
We will now use our lossy function construction from \cref{sec:lossy_fct} to construct public-key pseudoentangled states for the middle cut that separates the left and right half of qubits.
In \cref{sec:multicut} we will use the ideas from this section in an iterated way to construct pseudoentangled states for qubits arranged on a line that are pseudoentangled across every cut on the line.
Strictly speaking, all the results in this section follow from the more general analysis in \cref{sec:multicut}.
We spell them out nonetheless because it may be easier for readers to first understand the single-cut construction in detail before moving on to \cref{sec:multicut}.

We begin by defining single-cut public-key pseudoentanglement formally.

\begin{definition}[Public-key pseudoentanglement across a single cut] \label{def:pk_entanglement_singlecut}
A public-key pseudoentangled state ensemble with entanglement gap $(f(n), g(n))$ across cuts $X_n \subset [n]$ consists of two sequences of families of quantum states $\Psi^\lo_n = \{ \ket{\psi_{k}}\}_{k \in \cK^\lo_n}$ and $\Psi^\hi_n = \{ \ket{\psi_{k}}\}_{k \in \cK^\hi_n}$ indexed by key sets $\cK^\lo_n$ and $\cK^\hi_n$ respectively with the following properties: 
\begin{enumerate}
\item \label{item:def:n_qubits}
Every $\ket{\psi_{k}} \in \Psi^\lo_n \cup \Psi^\hi_n$ is an $n$-qubit state.
\item \label{item:def:key_sampling}
Every key $k \in \cK^\lo_n \cup \cK^\hi_n$ has length $\poly(n)$, and there exists an efficient sampling procedure that, given as input $n$ and a label ``high'' or ``low'', outputs a key $k \in \cK^\lo_n$ or $k \in \cK^\hi_n$, respectively.
We write $k \leftarrow \cK^\lo_n$ and $k \leftarrow \cK^\hi_n$ for keys sampled according to this procedure.
\item \label{item:def:eff_prep}
Given $k \in \cK^\lo_n \cup \cK^\hi_n$, the corresponding state $\ket{\psi_{k}}$ is efficiently preparable (without knowing whether $k \in \cK^\lo$ or $k \in \cK^\hi$).
Formally, there exists a uniform polynomial-time circuit family $\{C_n\}$ such that $C_n$ takes as input a key $k \in \cK^\lo_n \cup \cK^\hi_n$ and outputs a state negligibly close to $\ket{\psi_{k}}$.
\item \label{item:def:com_indist}
The keys from $\cK^\lo_n$ and $\cK^\hi_n$ are computationally indistinguishable.
Formally, for all $\poly(n)$-time quantum adversaries $\cA$ that take as input a key $\cK^\lo_n \cup \cK^\hi_n$ and output a single bit: 
\begin{align*}
\Big| \prs{k \leftarrow \cK_n^\lo}{\cA(k) = 0} - \prs{k \leftarrow \cK_n^\hi}{\cA(k) = 0} \Big| = \negl(n) \,.
\end{align*}
\item \label{item:def:entanglement_gap} With overwhelming probability, across the cut $X_n$ states in $\Psi^\lo_n$ have entanglement entropy $\Theta(f(n))$ and states in $\Psi^\hi_n$ have entanglement entropy $\Theta(g(n))$.
Formally, there exist constants $0 < C_1 < C_2$ such that for all sufficiently large $n$,
\begin{align*}
\prs{k \leftarrow \cK_n^\lo}{S((\psi_k)_{X_n}) \in [C_1 f(n), C_2 f(n)]} &\geq 1 - \negl(n)\,,\\
\prs{k \leftarrow \cK_n^\hi}{S((\psi_k)_{X_n}) \in [C_1 g(n), C_2 g(n)]} &\geq 1 - \negl(n) \,.
\end{align*}
Here, $S((\psi_k)_{X_n})$ is the von Neumann entropy of the reduced state of $\ket{\psi_k}$ on the qubits in the set $X_n \subset [n]$.
\end{enumerate}
\end{definition}

\begin{remark} \label{rem:one-sided-notation}
We will frequently abuse notation and use entanglement gaps of the form $(O(f(n)), \Omega(g(n)))$.
By this, we mean that (across a specified cut $X_n$) $\Psi^\lo_n$ has entanglement at most $O(f(n))$ and $\Psi^\hi_n$ has entanglement at least $\Omega(g(n))$.
Formally, this means that for this case \cref{item:def:entanglement_gap} of \cref{def:pk_entanglement_singlecut} has to be modified as follows:
\begin{align*}
\prs{k \leftarrow \cK_n^\lo}{S((\psi_k)_{X_n}) \leq O(f(n))} &\geq 1 - \negl(n)\,,\\
\prs{k \leftarrow \cK_n^\hi}{S((\psi_k)_{X_n}) \geq \Omega(g(n))} &\geq 1 - \negl(n) \,.
\end{align*}
\end{remark}

\begin{remark} \label{rem:max_min_entanglement}
A natural question is what the optimal entanglement gap for pseudoentangled states is.
Clearly, the high-entanglement states can have entanglement at most $g(n) = O(n)$ across any cut, since the entanglement entropy is upper bounded by the number of qubits.
For the low-entanglement states, one can show that the entanglement entropy needs to scale faster than $\log n$, i.e.~$f(n) = \omega(\log n)$.
Otherwise, one could distinguish the low-entanglement states from the high-entanglement states using a variant of the SWAP test.
This was proven in \cite{ji2018pseudorandom} and \cite[Appendix F]{bouland2022quantum} for private-key pseudoentangled states and the same proof applies to the public-key setting, too. 
\end{remark}

We now give a construction of single-cut pseudoentangled states based on our lossy function construction from \cref{sec:lossy_fct}.
As we will show in \cref{thm:single_cut}, these states do indeed form pseudoentangled ensembles in the sense of \cref{def:pk_entanglement_singlecut}.
Under the standard LWE assumption (\cref{lwe_assumption}), we can achieve an entanglement gap of $(O(n^\delta), \Omega(n))$ for any $\delta > 0$, where $n$ is the number of qubits and the cut divides the qubits into two equal halves;
under the stronger subexponential-time LWE assumption (\cref{lwe_assumption_subexp}) we can achieve an entanglement gap of $(O(\poly\log n), \Omega(n))$, which is essentially  optimal as noted in \cref{rem:max_min_entanglement}. 

For simplicity, for the rest of this subsection we always assume that $n$ is even and write $m = n/2$. We will consider the cut that divides the qubits into two sets of size $n$; we could also consider more general cuts and treat them with the same technique, which we do in \cref{sec:multicut}.

\paragraph{Overview.}
The intuition of the construction below is as follows: our high entanglement and low entanglement states will be phase states of the form $\ket{\psi_k} = \sum_{x \in \bits^n} (-1)^{s_k(x)} \ket{x}$, where $s_k: \bits^n \to \bits$ is a function described by some public key $k$.
This is why we call the construction public-key: the states we are considering are fully described by the public key $k$, and given $k$ one can efficiently generate $\ket{\psi_k}$.
Our task then is to construct two families of functions $\{s_k\}_{k \in \cK^\hi}$ and $\{s_k\}_{k \in \cK^\lo}$ indexed by key sets $\cK^\hi$ and $\cK^\lo$, respectively, such that
\begin{enumerate}
\item for $k \in \cK^\lo$, the state $\ket{\psi_k}$ has low entanglement entropy and for $k \in \cK^\hi$, the state $\ket{\psi_k}$ has high entanglement entropy, but
\item it is computationally hard to distinguish keys in $\cK^\hi$ from those in $\cK^\lo$.
\end{enumerate}

To construct these functions, recall from \cref{lem:entanglement_from_matrix} that if we define a matrix $T \in \{\pm 1\}^{2^{n/2} \times 2^{n/2}}$ by $T_{ij} = (-1)^{s_k(i \parallel j)}$, then 
\begin{align*}
-\log \left( \norm{\frac{1}{2^{n}} T T^\top}_2 \right) \leq S(\psi_X) \leq \log \rank(T) \,.
\end{align*}
Therefore, for the high-entanglement states we need to show that $\norm{\frac{1}{2^{n}} T T^\top}_2$ is small, and for the low-entanglement case we need to show that $\rank(T)$ is small.

If one samples $s_k$ from a 4-wise independent function family, one can show that $\norm{\frac{1}{2^{n}} T T^\top}_2$ is indeed exponentially small.
We take this as our starting point and now want to modify this $s_k$ to ``remove'' entanglement to get low-entanglement states, i.e.~we need to reduce the rank of the corresponding matrix $T$.
One natural way to do this is to repeat rows of $T$: if we consider some many-to-one function $h$, then the matrix $\tilde T_{ij} \deq T_{h(i),j}$ is the same as $T$, except instead of all of the rows of $T$, the matrix $\tilde T$ only contains (many copies) of those rows in $T$ whose index is in $\img h$.
Hence, the rank of $\tilde T$ is at most $\img h$.
We can incorporate this row repetition procedure into our definition of our function $s_k$ to get an ``augmented phase function'' whose states $\ket{\psi_k}$ have low entanglement.
The technical term we use to describe removing entanglement by means of row repetitions is \emph{smashing down} the entanglement.

Of course if for the high-entanglement case our $s_k$ was simply a 4-wise independent function, but for the low-entanglement case it was augmented with this row reduction procedure, the two cases would be easily distinguishable given a description of either function.
In other words, we need to smash down the entanglement in a computationally undetectable way.
To solve this problem, in the high-entanglement case we will also augment $s_k$ by a ``fake'' row reduction procedure that is computationally indistinguishable from the procedure in the low-entanglement case, but that does not actually meaningfully reduce the rank of the matrix $T$.
For this, we use the lossy function construction from \cref{sec:lossy_fct}:
in the low-entanglement case, the row repetition function $h$ described above will be a compressing function and therefore reduce the rank of the matrix $T$; for the high-entanglement case, the function $h$ is an almost injective function (\cref{thm:lossyfunction}), which we show does not reduce the Frobenius norm of $TT^\intercal$ too much.
Since the compressing and almost injective functions are computationally indistinguishable (even given a public key description of them), the descriptions of the corresponding augmented phase functions $s_k$ are also computationally indistinguishable.

We now give a formal description of this construction.

\begin{definition} \label{def:single_cut_construction}
Fix a function $f(n)$. (This will be treated as a parameter of the construction.)
Let $H_n = \{h_k: \bits^n \to \bits\}_{k \in \cK_n^4}$ be a 4-wise independent family as given in \cref{lem:r-wise_exist}.
Instantiate the lossy functions from \cref{sec:lossy_fct} with parameters $\ell(m) = \sqrt{f(2m)}$ and $r(m) = 2$. (Recall that $m \deq n/2$.)

We first describe the sampling procedure for the keys $\cK^\lo_n$ and  $\cK^\hi_n$.
\begin{enumerate}
\item To sample $k \in \cK^\lo_n$, sample $k^{\rm rep} \leftarrow \keyl_m$ and $k^{\rm fin} \in \cK^{4}_n$ uniformly.
Set $k = (k^{\rm rep}, k^{\rm fin})$.
\item To sample $k \in \cK^\hi_n$, sample $k^{\rm rep} \leftarrow \keyi_m$ and $k^{\rm fin} \in \cK^{4}_n$ uniformly.
Set $k = (k^{\rm rep}, k^{\rm fin})$.
\end{enumerate}
For $k = (k^{\rm rep}, k^{\rm fin})$, define the \emph{labelling function} $r_k: \bits^n \to \bits^n$ by 
\begin{align*}
r_k(x) = (f_{k^{\rm rep}}(i) \parallel j) \qquad \text{with } i = {\rm MSB}_m(x), j = {\rm LSB}_{m}(x) \,.
\end{align*}
We next define the function $s_k: \bits^n \to \bits$ as 
\begin{align*}
s_k(x) = h_{k^{\rm fin}}(r_k(x)) \,.
\end{align*}
The states $\ket{\psi_k}$ are then given by
\begin{align*}
\ket{\psi_k} = \sum_{x \in \bits^n} (-1)^{s_k(x)} \ket{x} \,.
\end{align*}
\end{definition}

\begin{theorem} \label{thm:single_cut}
~
\begin{enumerate}
\item Under the standard LWE assumption (\cref{lwe_assumption}), for any function $f(n) = n^\delta$ for $\delta > 0$, the state families $\Psi^\lo_n = \{ \ket{\psi_{k}}\}_{k \in \cK^\lo_n}$ and $\Psi^\hi_n = \{ \ket{\psi_{k}}\}_{k \in \cK^\hi_n}$ from \cref{def:single_cut_construction} form a pseudoentangled state ensemble with entanglement gap $(O(f(n)), \Omega(n))$ across the cuts $X_n = [n/2]$.
\item Under the subexpoential-time LWE assumption (\cref{lwe_assumption_subexp}), there exists a function $f(n) = \poly \log n$ such that the state families $\Psi^\lo_n = \{ \ket{\psi_{k}}\}_{k \in \cK^\lo_n}$ and $\Psi^\hi_n = \{ \ket{\psi_{k}}\}_{k \in \cK^\hi_n}$ from \cref{def:single_cut_construction} form a pseudoentangled state ensemble with entanglement gap $(O(f(n)), \Omega(n))$ across the cuts $X_n = [n/2]$.
\end{enumerate}
\end{theorem}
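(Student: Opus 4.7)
The plan is to verify the five conditions of \cref{def:pk_entanglement_singlecut} in turn. Conditions~1--4 follow essentially routinely from the construction together with \cref{thm:lossyfunction} and \cref{lem:r-wise_exist}: the keys have polynomial length and are efficiently sampleable, and the state $\ket{\psi_k}$ is a phase state whose phase function $s_k$ can be evaluated in polynomial time, so $\ket{\psi_k}$ can be prepared by first making the uniform superposition and then applying the phase oracle $\ket{x}\mapsto(-1)^{s_k(x)}\ket{x}$ via the standard coherent-computation-then-uncompute trick. Computational indistinguishability of $\cK^\lo_n$ and $\cK^\hi_n$ (condition~4) follows immediately from \cref{thm:lossyfunction}\ref{item:indist}, since the only difference between a low-entanglement and a high-entanglement key is whether the component $k^{\rm rep}$ is sampled from $\keyl_m$ or $\keyi_m$, while $k^{\rm fin}$ is sampled identically in the two cases.

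The real work is condition~5, the entanglement gap. For both bounds I would invoke \cref{lem:entanglement_from_matrix} applied to the $T$-matrix $T_{ij} = (-1)^{s_k(i \parallel j)}$ with $i,j\in\bits^m$. For the low-entanglement case, observe that the $i$-th row of $T$ depends on $i$ only through $f_{k^{\rm rep}}(i)$, so rows with equal $f_{k^{\rm rep}}$-value are identical and $\rank(T)\le|\img f_{k^{\rm rep}}|$. \cref{thm:lossyfunction}\ref{item:compressing} then gives $|\img f_{k^{\rm rep}}|\le 2^{\ell(m)^2}$ with overwhelming probability over $k\leftarrow\cK^\lo_n$, and the upper bound of \cref{lem:entanglement_from_matrix} yields $S((\psi_k)_{X_n})\le \ell(m)^2 = f(n)$, as desired.

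The high-entanglement lower bound is the main obstacle, since the ``almost'' in ``almost-injective'' has to be paid for carefully. The plan is a second-moment computation of $\E_{k\leftarrow\cK^\hi_n}\bigl[\bigl\lVert\tfrac{1}{2^n}TT^\top\bigr\rVert_2^2\bigr]$. Expanding the Frobenius norm produces a sum over quadruples $(i_1,i_2,j_1,j_2)$ of the expectations $\E\bigl[(-1)^{s_k(i_1\parallel j_1)+s_k(i_2\parallel j_1)+s_k(i_1\parallel j_2)+s_k(i_2\parallel j_2)}\bigr]$. Conditioning on $k^{\rm rep}$ and using that $h_{k^{\rm fin}}$ is $4$-wise independent, each such expectation vanishes unless the four $r_k$-values coincide in pairs; for $i_1\neq i_2$ and $j_1\neq j_2$ this requires a collision $f_{k^{\rm rep}}(i_1)=f_{k^{\rm rep}}(i_2)$, which by \cref{thm:lossyfunction}\ref{item:almost_inj} (with $r=2$, making $\{f_{(A,k^{\rm hash})}\}$ pairwise independent on a uniform $m$-bit codomain with overwhelming probability over $A$) happens with probability $2^{-m}$. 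Separating the diagonal contributions $i_1=i_2$ and $j_1=j_2$ from the bad-collision contributions then yields $\E\bigl[\bigl\lVert\tfrac{1}{2^n}TT^\top\bigr\rVert_2^2\bigr]\le O(2^{-m})$, so Markov gives $\bigl\lVert\tfrac{1}{2^n}TT^\top\bigr\rVert_2 \le 2^{-\Omega(n)}$ with overwhelming probability, and the lower bound of \cref{lem:entanglement_from_matrix} produces $S((\psi_k)_{X_n})\ge\Omega(n)$. Finally, plugging $\ell(m)=\sqrt{f(2m)}$ into the two regimes of \cref{thm:lossyfunction}\ref{item:indist} gives $f(n)=n^\delta$ under \cref{lwe_assumption} and $f(n)=\poly\log(n)$ (with exponent $2(1+\delta')$ for the $\delta'$ supplied by that item) under \cref{lwe_assumption_subexp}, proving the two parts of the theorem.
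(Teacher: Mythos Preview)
Your proposal is correct and follows essentially the same approach as the paper: conditions (i)--(iv) are dispatched routinely via \cref{thm:lossyfunction}, and condition (v) is handled by bounding $\rank(T)$ via $|\img f_{k^{\rm rep}}|$ in the low case and by a second-moment computation on $\|TT^\top\|_2^2$ combined with Markov in the high case, exactly as the paper does in \cref{lem:single_cut_entanglement_gap}. The only cosmetic difference is that the paper explicitly enumerates the three possible pairings of the four $r_k$-values and rules out two of them because $a\neq b$, whereas you go straight to the surviving collision condition $f_{k^{\rm rep}}(i_1)=f_{k^{\rm rep}}(i_2)$; the content is the same.
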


\begin{proof}
We need to check the properties required in \cref{def:pk_entanglement_singlecut}.
\cref{item:def:n_qubits,item:def:key_sampling} are obvious.
\cref{item:def:eff_prep} is also immediate because the function $s_k(x)$ is an efficiently computable classical function, so it follows from the standard phase oracle construction that $\ket{\psi_k}$ can be constructed efficiently, too.
It \cref{thm:lossyfunction} \cref{item:indist} that the key families $\cK^\hi$ and $\cK^\lo$ are computationally indistinguishable for all choices $f(n) = n^\delta$ for $\delta > 0$ under the standard LWE assumption, and that they are indistinguishable for some (sufficiently large) $f(n) = O(\poly\log n)$. 
The proof of \cref{item:def:entanglement_gap} is slightly more involved, so we show it separately as \cref{lem:single_cut_entanglement_gap} below.
\end{proof}

\begin{lemma}\label{lem:single_cut_entanglement_gap}
Using the same setup as in \cref{thm:single_cut}, we have that
\begin{align*}
\prs{k \leftarrow \cK_n^\lo}{S((\psi_k)_{X_n}) \leq f(n)} &\geq 1 - \negl(n) \,,\\
\prs{k \leftarrow \cK_n^\hi}{S((\psi_k)_{X_n}) \geq n/8} &\geq 1 - \negl(n) \,.
\end{align*}
\end{lemma}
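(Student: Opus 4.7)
The plan is to exploit the two-sided bound from \cref{lem:entanglement_from_matrix}, attacking the low-entropy bound via $\log\rank(T)$ and the high-entropy bound via the Frobenius norm of $TT^\top$. For both I will use the $T$-matrix of the phase state $\ket{\psi_k}$ across the middle cut, namely $T_{ij}=(-1)^{s_k(i\parallel j)}$ for $i,j\in\bits^m$. Observe that $s_k(i\parallel j)=h_{k^{\rm fin}}(f_{k^{\rm rep}}(i)\parallel j)$, so the $i$-th row of $T$ depends on $i$ only through $f_{k^{\rm rep}}(i)$.

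For the low-entanglement case I will just observe that the number of distinct rows of $T$ is bounded by $|\img f_{k^{\rm rep}}|$. By \cref{thm:lossyfunction} \cref{item:compressing}, with overwhelming probability over $k^{\rm rep}\leftarrow\keyl_m$ this image has size at most $2^{\ell^2}=2^{f(n)}$ (recall $\ell(m)=\sqrt{f(2m)}$). Hence $\rank(T)\leq 2^{f(n)}$ and the upper bound in \cref{lem:entanglement_from_matrix} gives $S((\psi_k)_{X_n})\leq f(n)$, as desired.

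For the high-entanglement case the task is to show $\lVert 2^{-n}TT^\top\rVert_2\leq 2^{-n/8}$, i.e.\ $\lVert TT^\top\rVert_2^2\leq 2^{7n/4}$. I will expand
\begin{equation*}
\lVert TT^\top\rVert_2^2=\sum_{i,i',j,j'}(-1)^{s_k(i\parallel j)+s_k(i'\parallel j)+s_k(i\parallel j')+s_k(i'\parallel j')},
\end{equation*}
and split the outer sum by whether the four arguments $(f_{k^{\rm rep}}(i)\parallel j),(f_{k^{\rm rep}}(i')\parallel j),(f_{k^{\rm rep}}(i)\parallel j'),(f_{k^{\rm rep}}(i')\parallel j')$ of $h_{k^{\rm fin}}$ are distinct. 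If any two coincide the phases collapse in pairs so the contribution is $+1$; otherwise 4-wise independence of $h_{k^{\rm fin}}$ makes the expectation over $k^{\rm fin}$ vanish. Therefore $\E_{k^{\rm fin}}\lVert TT^\top\rVert_2^2$ is bounded by the number of ``degenerate'' quadruples, which decomposes into the $O(2^{3m})$ tuples with $i=i'$ or $j=j'$ and the tuples with $i\neq i'$, $j\neq j'$ but $f_{k^{\rm rep}}(i)=f_{k^{\rm rep}}(i')$. The latter count is $N\cdot 2^m(2^m-1)$ where $N=|\{(i,i'):i\neq i',\,f_{k^{\rm rep}}(i)=f_{k^{\rm rep}}(i')\}|$.

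Controlling $N$ is where the $r(m)=2$-wise independence from \cref{thm:lossyfunction} \cref{item:almost_inj} enters: conditioned on the high-probability event that $\{f_{(A,\cdot)}\}$ is $2$-wise independent, any fixed pair $i\neq i'$ collides with probability $2^{-m}$, so $\E N\leq 2^m$ and Markov gives $N\leq 2^{9m/8}$ except with negligible probability. Combining these pieces and applying Markov once more over $k^{\rm fin}$ yields $\lVert TT^\top\rVert_2^2\leq O(2^{13m/4})$ with overwhelming probability, hence $\lVert 2^{-n}TT^\top\rVert_2\leq O(2^{-3m/8})$ and $S((\psi_k)_{X_n})\geq 3n/16-O(1)\geq n/8$ for large $n$. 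I expect the main bookkeeping obstacle to be tracking the cases where the four $h_{k^{\rm fin}}$-arguments coincide — in particular making sure the pairwise-independence bound on $N$ is strong enough to dominate over the trivial $2^{3m}$ degenerate contributions while still leaving a constant-fraction gap below $2^{7m/2}$; the calculations above show there is ample slack.
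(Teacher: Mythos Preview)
Your proposal is correct and follows essentially the same approach as the paper: both bound the low-entanglement case via $\rank T\le|\img f_{k^{\rm rep}}|$ and \cref{thm:lossyfunction}\cref{item:compressing}, and the high-entanglement case by expanding $\|TT^\top\|_2^2$, using 4-wise independence of $h_{k^{\rm fin}}$ to kill non-degenerate quadruples, and 2-wise independence of $f_{k^{\rm rep}}$ to control collisions. The only cosmetic difference is that the paper takes a single expectation over all of $k$ and applies Markov once (getting $\E_k\|TT^\top\|_2^2=O(2^{3m})$ directly), whereas you condition on $k^{\rm rep}$ first and apply Markov twice; both yield the required $n/8$ bound with room to spare.
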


\begin{proof}
The states $\ket{\psi_k}$ in \cref{def:pk_entanglement_singlecut} are phase states, so viewing $s_k$ as a function $s_k: \bits^m \to \bits^m \to \bits$, we can define the matrix $T_k \in \{\pm1\}^{2^m \times 2^m}$ as in \cref{def:t-matrix}.
From \cref{lem:entanglement_from_matrix} we then have that 
\begin{align*}
-\log \left( \norm{\frac{1}{2^n} T_k T_k^\top}_2 \right) \leq S((\psi_k)_{X_n}) \leq \log \rank(T_k) \,.
\end{align*}

To prove the first statement in the lemma, we therefore need to bound the probability that $T_k$ has large rank if $k = (k^{\rm rep}, k^{\rm fin}) \in \cK^\lo_n$.
For this, observe that by construction $T_k$ has at most $|\img f_{k^{\rm rep}}|$ distinct rows.
Therefore, recalling that we chose $\ell(m) = \sqrt{f(n)}$ (with $n = 2m$ as before) we get from \cref{thm:lossyfunction} \cref{item:compressing} that
\begin{align*}
\prs{k \leftarrow \cK_n^\lo}{S((\psi_k)_{X_n}) > f(n)} 
&\leq \prs{k \leftarrow \cK_n^\lo}{\log \rank T_k > f(n)} \\
&\leq \prs{k \leftarrow \cK_n^\lo}{|\img f_{k^{\rm rep}}| > 2^{\ell(m)^2}} \leq 2^{-\ell(m)} = \negl(n) \,.
\end{align*}
Here we used that $\ell(m) = \log(m)^{1+\delta}$ for some $\delta > 0$ by \cref{thm:lossyfunction} \cref{item:indist}, so $2^{-\ell(m)} = \negl(n)$.

To prove the second statement in the lemma, we need to bound the probability that $T_k T_k^\top$ has large Frobenius norm.
We will do this by computing the expectation of the Frobenius norm and then applying Markov's inequality.
We begin by expanding the Frobenius norm and using that $T_{ij} \in \{\pm 1\}$:
\begin{align*}
        \E_{k \leftarrow \cK_n^\hi} \left [\left \| T T^{\intercal } \right \|_2^2 \right] & = \sum_{i, j \in \bits^m} \E \left [ \left (  \sum_{l \in \bits^{m}} T_{il}T_{jl}  \right )^2 \right] \\
        & = \sum_{i, j \in \bits^m} \E \left [  \sum_{a, b \in \bits^{m}} T_{ia}T_{ja}T_{ib}T_{jb}  \right] \\
        & = \sum_{i \in \bits^m} \E \left [  \sum_{a, b \in \bits^{m}} T_{ia}T_{ia}T_{ib}T_{ib}  \right]  + \sum_{i, j \in \bits^m} \E \left [  \sum_{a \in \bits^{m}} T_{ia}T_{ja}T_{ia}T_{ja}   \right]  \\
        & \quad \quad + \sum_{\substack{i, j \in \bits^m \\ i \ne j}} \E \left [  \sum_{\substack{a, b \in \bits^{m} \\ a \ne b}} T_{ia}T_{ja}T_{ib}T_{jb}  \right] \\
        & = 2^{3 m} + 2^{3 m} + \sum_{\substack{i, j \in \bits^m \\ i \ne j}} \E \left [  \sum_{\substack{a, b \in \bits^{m} \\ a \ne b}} T_{ia}T_{ja}T_{ib}T_{jb}  \right] \\
        & = 2^{3m + 1} + \sum_{\substack{i, j \in \bits^m \\ i \ne j}} \sum_{\substack{a, b \in \bits^{m} \\ a \ne b}}  \E \left [   (-1)^{s_k(i \parallel a) + s_k(i \parallel b) + s_k(j \parallel a) + s_k(j \parallel b)  }  \right]\,.\numberthis \label{eqn:singleexpsum}
    \end{align*}

Since $s_k(x) = h_{k^{\text{fin}}}(r_k(x))$, and $h_{k^{\text{fin}}}$ is drawn from a $4$-wise independent function family, it is easy to see that the term $\E \left [   (-1)^{s_k(i \parallel a) + s_k(i \parallel b) + s_k(j \parallel a) + s_k(j \parallel b)  }  \right]$ is zero unless the 4 terms in the exponents pair up to cancel.
In particular, this means that a necessary condition for $\E \left [   (-1)^{s_k(i \parallel a) + s_k(i \parallel b) + s_k(j \parallel a) + s_k(j \parallel b)  }  \right] \neq 0$ is that 
\begin{align*}
r_k(i \parallel a) = r_k(i \parallel b) \qquad \text{or} \qquad r_k(i \parallel a) = r_k(j \parallel a) \qquad \text{or} \qquad r_k(i \parallel a) = r_k(j \parallel b)\,.
\end{align*}
Since $a \neq b$, from the definition of $r_k$ we see that the first and last case never occur.
The second case is equivalent to $f_{k^{\rm rep}}(i) = f_{k^{\rm rep}}(j)$.
We can bound this using the properties of $f_{k^{\rm rep}}$.
For this, recall that $k^{\rm rep} = (A, k^{\rm hash})$.
We call a choice of $A$ ``good'' if $\{f_{(A, k^{\rm hash})}\}_{k^{\rm hash}}$ forms a 2-wise independent family.
\begin{align*}
\prs{k \leftarrow \cK_n^\hi}{f_{k^{\rm rep}}(i) = f_{k^{\rm rep}}(j)}
&= \prs{k^{\rm rep} \leftarrow \keyi_m}{f_{k^{\rm rep}}(i) = f_{k^{\rm rep}} (j)}\\
&\leq \prs{k^{\rm rep} = (A, k^{\rm hash}) \leftarrow \keyi_m}{f_{k^{\rm rep}}(i) = f_{k^{\rm rep}} (j) \; | \; A \text{ good}} + \prs{k^{\rm rep} = (A, k^{\rm hash}) \leftarrow \keyi_m}{A \text{ not good}}\\
&\leq O(2^{-m}) \,.
\end{align*}
The last line follows because for a good $A$, the probability that $f_{k^{\rm rep}}(i) = f_{k^{\rm rep}} (j)$ (for $i \neq j$) is $2^{-m}$ by 2-wise independence, and the probability that $A$ is not good is $O(2^{-m})$ by \cref{thm:lossyfunction} \cref{item:almost_inj}.
Therefore, 
\begin{align*}
\sum_{\substack{i, j \in \bits^m \\ i \ne j}} \sum_{\substack{a, b \in \bits^{m} \\ a \ne b}}  \E \left [   (-1)^{s_k(i \parallel a) + s_k(i \parallel b) + s_k(j \parallel a) + s_k(j \parallel b)  }  \right] \leq 2^{4m} \cdot O(2^{-m}) = O(2^{3m}) \,.
\end{align*}
Inserting this into \cref{eqn:singleexpsum}, we get that 
\begin{align*}
\E_{k \leftarrow \cK_n^\hi} \left [\left \| T T^{\intercal } \right \|_2^2 \right] \leq O(2^{3m}) \,.
\end{align*}
Therefore we get from Markov's inequality:
\begin{align*}
\prs{k \leftarrow \cK_n^\hi}{-\log \norm{\frac{1}{2^n} T_{X,k} T_{X,k}^\intercal}_2 < \frac{m}{4}}
&= \prs{k \leftarrow \cK_n^\hi}{\norm{T_{X,k} T_{X,k}^\intercal}_2^2 > 2^{2n - m/2}} \\
&\leq \frac{1}{2^{2n - m/2}} \E_{k \leftarrow \cK_n^\hi} \left[ \norm{T_{X,k} T_{X,k}^\intercal}_2^2 \right] \\
&\leq O(2^{3m - 4m + m/2}) = O(2^{-m/2})\,,
\end{align*}
where we inserted $n = 2m$.
\end{proof}

\subsubsection{Application: quantum entropy difference problem with maximal gap}
One immediate application of our single-cut construction is that we get a maximal stretching of the entropy gap for the quantum entropy difference problem studied in \cite{gheorghiu2020estimating}. To recap, the problem is as follows:
\begin{definition}[Quantum entropy difference problem] Given two circuits $C_1$ and $C_2$ each acting on $n$ qubit, decide whether or not $S(\rho_A) > S(\sigma_A)$, where $\rho_A$ and $\sigma_A$ are the reduced density matrices of the states $|\psi_1\rangle = C_1 |0^n\rangle$ and $|\psi_2\rangle = C_2 |0^n\rangle$ with the last $n/2$ qubits traced out.
\end{definition}

Under the subexponential-time LWE assumption (\cref{lwe_assumption_subexp}), our construction in \cref{sec:single-cut} yields $C_1$ and $C_2$ such that $S(\rho_A) = \Omega(n)$ and $S(\sigma_A) = \text{poly}\log n$, yet the quantum entropy difference problem still remains hard for any polynomially bounded quantum algorithm. From \cref{rem:max_min_entanglement}, it can be argued that this entropy gap is maximal. Previously, the entropy gap in \cite{gheorghiu2020estimating} was only $O(1)$. 
Note that by the same arguments as \cite{gheorghiu2020estimating}, our circuits can also be made shallow depth---that is, they are in $\mathsf{NC}^1$.

A different version of the entropy difference problem, called the Hamiltonian quantum entropy difference problem, also appears in \cite{gheorghiu2020estimating}, where, instead of circuits, the input is the description of two Hamiltonians, and the task is to distinguish between different entanglement entropies across a fixed cut of the ground states of these Hamiltonians. \cite{gheorghiu2020estimating} showed that this problem is LWE-hard for $O(1)$ entropy gaps. Our construction achieves hardness for a maximal entropy gap of $\text{poly}\log n$ vs $n$ for that problem, too, by running our pseudoentanglement construction through a history state Hamiltonian construction in the same manner as in~\cite{gheorghiu2020estimating}.

In \cref{sec:lgses} we will consider a much stronger version of this problem: instead of distinguishing Hamiltonians with ground states whose entanglement differs across a single cut, we will construct indistinguishable Hamiltonians whose ground states have qualitatively different entanglement structures, e.g.~near area-law vs volume-law entanglement.
This will require states that are pseudoentangled across many cuts simultaneously, which we construct in the next section.

\subsection{Area-law public-key pseudoentangled states on a 1D line} \label{sec:multicut}

In this section we will give a (nearly) area-law public-key pseudoentangled states construction on a line based on the row repetition technique we introduced in \cref{sec:single-cut}. 
This means that we will construct public-key pseudoentangled states such that if we imagine the qubits arranged on a line, the entanglement gap is $\poly\log n$ vs $n$ across all cuts that separate the qubits into left and right qubits on the line.

We begin by formally defining this multi-cut version of pseudoentanglement.
The definition is almost identical to \cref{def:pk_entanglement_singlecut}, except that we now need to require an entanglement gap across all cuts on a line simultaneously.
One slight subtlety is that if we consider cuts close to the end of the line, the entanglement will be low simply by virtue of the fact that there are only very few qubits on one side of the cut.
Therefore, in the high-entanglement case, we need to require the entanglement to be at least $g(\text{distance from end of line})$ rather than simply $g(n)$.
Furthermore, very close to the boundary (namely, $O(\log n)$ close), certain properties of our construction break down.
Therefore, we only consider cuts that are at least $\omega(\log n)$ far from the boundary.
Since we are primarily interested in large entanglement gaps of the form $(O(\poly\log n), \Omega(n))$, this small boundary region is of no particular interest to us.
Nonetheless, it is possible to modify our construction to work for such small boundary regions too; we briefly sketch this in \cref{sec:smallcut}.

As in the single-cut case, we use entanglement gaps of the form $(O(f(n)), \Omega(g(n)))$ for pseudoentangled states where we only have an upper bound on the entanglement in the low-entanglement case and a lower bound in the high-entanglement case (see \cref{rem:one-sided-notation} for details).
To simplify the definition slightly, below we state the definition directly for this case; it is straightforward to adapt it to the case where one wants the exact scaling rather than one-sided bounds, but we will not need this for our results.

\begin{definition}[Public-key pseudoentanglement across geometrically local cuts in 1D] \label{def:pk_entanglement_multicut}
A public-key pseudoentangled state ensemble with entanglement gap $(O(f(n)), \Omega(g(n)))$ across geometrically local cuts on a 1D line consists of two sequences of families of quantum states $\Psi^\lo_n = \{ \ket{\psi_{k}}\}_{k \in \cK^\lo_n}$ and $\Psi^\hi_n = \{ \ket{\psi_{k}}\}_{k \in \cK^\hi_n}$ indexed by key sets $\cK^\lo_n$ and $\cK^\hi_n$ respectively that satisfy \cref{item:def:n_qubits,item:def:key_sampling,item:def:eff_prep,item:def:com_indist} from \cref{def:pk_entanglement_singlecut} and the following modified version of \cref{item:def:entanglement_gap} from \cref{def:pk_entanglement_singlecut}:
\begin{enumerate}[label=(v')]
\item \label{item:def:entanglement_gap_multi} For any function $b(n) = \omega(\log n)$, with overwhelming probability, states in $\Psi^\lo_n$ have entanglement entropy $O(f(n))$ and states in $\Psi^\hi_n$ have entanglement entropy $\Omega(g(\text{distance from end of line}))$ for all geometrically local cuts that are at least $b(n)$ far from the end of the line.
Formally,
\begin{align*}
\prs{k \leftarrow \cK_n^\lo}{\forall c \in \{b(n), \dots, n - b(n)\}: \; S((\psi_k)_{[c]}) \leq O(f(n))} &\geq 1 - \negl(n)\,,\\
\prs{k \leftarrow \cK_n^\hi}{\forall c \in \{b(n), \dots, n - b(n)\}: \; S((\psi_k)_{[c]}) \geq \Omega\big(\min(g(c), g(n-c))\big)} &\geq 1 - \negl(n) \,.
\end{align*}
Here, $(\psi_k)_{[c]}$ is the reduced states of $\ket{\psi_k}$ on qubits $(1, \dots, c)$.
\end{enumerate}
\end{definition}

\paragraph{Overview.} We first give an informal overview of our construction and proof.
Recall the high-level idea for single-cut pseudoentanglement: we started from a phase-state whose phases where chosen in a 4-wise independent way and then ``smashed down'' the entanglement across a particular cut by repeating rows in the matrix $T$ associated with the phase state (see \cref{def:t-matrix}) associated with that cut. 
To generalise this to all geometrically local cuts on the $1$-D line, we can sequentially apply this single cut technique. 
Specifically, we again begin with a phase state and consider a cut on right end of the 1D line.
We can then reduce the $T$-matrix across that cut as in the single cut construction.
The modified $T$-matrix corresponds to a modified phase state which is now guaranteed to have high or low entanglement across the cut we chose.
We can now consider the next cut, moving right to left.
Starting from the modified phase state from the first iteration of the procedure, we can again consider the associated $T$-matrix, but now across the second cut, and again use the row-repetition procedure as in the single cut case.
Performing this procedure for all cuts, moving right to left cut-by-cut, we obtain our final states.

The key difficulty in the proof is to show that subsequent row repetition operations do not tamper with the entanglement structure across previous cuts.
For the low entanglement ensemble, one might worry that future row repetition might increase the rank of the $T$-matrix across previous cuts.
We prove that if the cuts are treated in the correct order, this does not happen, later treatments never jeopardize the low entanglement status we established for previous cuts. 
Combining this with the single cut analysis, we conclude that the low entanglement ensemble enjoys the same low entanglement guarantee as in the single cut analysis for each cut.

For the high entanglement ensemble, one might worry about the opposite direction, namely that treating future cuts might \emph{reduce} the rank of the $T$-matrix for previous cuts.
Indeed, this might happen because the functions we use for the row repetition operations in the high entanglement case are not perfectly injective, only ``almost injective'' as shown in \cref{thm:lossyfunction}.
To show that nonetheless the entanglement across all cuts remains high, we prove a similar upper bound on the expectation of the Frobenius norm as the single cut version. 
The analysis is somewhat complicated and relies on the statistical properties of the ``almost injective functions'' from \cref{thm:lossyfunction} and the structure of the iterated row repetition operations.

We begin by describing our construction for the multi-cut case.
This is analogous to the single-cut case (\cref{def:single_cut_construction}), except that we now need to iterate the row repetition operations, so the function $r_k$ from \cref{def:single_cut_construction} needs to be modified and defined recursively.

\begin{definition} \label{def:multi-cut-construction}
Fix a function $f(n)$ (this will be treated as a parameter of the construction). Let $H_n = \{h_k: \bits^n \to \bits\}_{k \in \cK_n^4}$ be a 4-wise independent family as given in \cref{lem:r-wise_exist}. For $m \in \{f(n), f(n)+1, \ldots, n\}$, instantiate the $m$-bit lossy function from~\cref{sec:lossy_fct} with parameters $\ell(m) = \sqrt{f(n)}$ and $r(m) = 2$.

We first describe the sampling procedure for the keys $\cK^\lo_n$ and  $\cK^\hi_n$.
\begin{enumerate}
\item To sample $k \in \cK^\lo_n$, for $m \in \{f(n), f(n)+1, \ldots, n\}$, independently sample $k^{\rm rep}_m \leftarrow \keyl_m$ (see~\cref{def:function_construction}) and $k^{\rm fin} \in \cK^{4}_n$ uniformly.
Set $k = ( k^{\rm rep}_{f(n)},  k^{\rm rep}_{f(n) + 1}, \ldots, k^{\rm rep}_n, k^{\rm fin})$.

\item To sample $k \in \cK^\hi_n$, for $m \in \{f(n), f(n)+1, \ldots, n\}$,  independently sample $k^{\rm rep}_m \leftarrow \keyi_m$ and $k^{\rm fin} \in \cK^{4}_n$ uniformly.
Set $k = (k^{\rm rep}_{f(n)},  k^{\rm rep}_{f(n) + 1}, \ldots, k^{\rm rep}_n, k^{\rm fin})$.
\end{enumerate}
For  $k = (k^{\rm rep}_{f(n)},  k^{\rm rep}_{f(n) + 1}, \ldots, k^{\rm rep}_n, k^{\rm fin})$, define the \emph{labelling functions} $r^{f(n)}_k,  \dots, r^n_k: \{0, 1\}^n \to \{0, 1\}^n$ recursively by 
\[
 r^m_k(x) = \begin{cases}
      x \quad \quad  & m = n+1, \\
      r_k^{m+1}(f_{k^{\rm rep}_m}(i)\parallel j) \quad \quad & f(n) \le m \le n, \mathrm{MSB}_m(x) = i, \mathrm{LSB}_{n - m}(x) = j,
 \end{cases}
\]
where $\mathrm{MSB}_m(x)$ is the first $m$ bits of $x$, $\mathrm{LSB}_m(x)$ is the last $m$ bits of $x$, and $i \parallel j$ is the concatenation of bit strings $i$ and $j$. For simplicity, we define $r_k(x) = r_k^{f(n)}(x)$.

With this notation, for $k = (k^{\rm rep}_{f(n)},  k^{\rm rep}_{f(n) + 1}, \ldots, k^{\rm rep}_n, k^{\rm fin})$, we next define the function $s_k: \{0, 1\}^n \to \bits$ as 
\[
s_k(x) = h_{k^{\rm fin}} \left ( r_k(x) \right ) \,,
\]
The states $\ket{\psi_k}$ are then given by
\begin{align*}
\ket{\psi_k} = \sum_{x \in \{0, 1\}^n } (-1)^{s_k(x)} \ket{x} \,.
\end{align*}
\end{definition}

Our main result is that this construction satisfies the requirements from \cref{def:pk_entanglement_multicut} as summarised by the following theorem.
In particular, we show that under the subexponential-time LWE assumption, our construction achieves an entanglement gap of $\poly\log n$ vs $n$, which is essentially optimal by \cref{rem:max_min_entanglement}.
On a 1D line, this entanglement scaling corresponds to area-law (up to $\poly\log$ factors) vs volume law entanglement, which is why we call this result area-law pseudoentangled states.

\begin{theorem} \label{thm:multi_cut}
~
\begin{enumerate}
\item Under the standard LWE assumption (\cref{lwe_assumption}), for any function $f(n) = n^\delta$ for $\delta > 0$, the state families $\Psi^\lo_n = \{ \ket{\psi_{k}}\}_{k \in \cK^\lo_n}$ and $\Psi^\hi_n = \{ \ket{\psi_{k}}\}_{k \in \cK^\hi_n}$ from \cref{def:multi-cut-construction} form a pseudoentangled state ensemble with entanglement gap $(O(f(n)), \Omega(n))$ across geometrically local cuts in 1D (\cref{def:pk_entanglement_multicut}).
\item Under the subexponential-time LWE assumption (\cref{lwe_assumption_subexp}), there exists a function $f(n) = \poly \log n$ such that the state families $\Psi^\lo_n = \{ \ket{\psi_{k}}\}_{k \in \cK^\lo_n}$ and $\Psi^\hi_n = \{ \ket{\psi_{k}}\}_{k \in \cK^\hi_n}$ from \cref{def:multi-cut-construction} form a pseudoentangled state ensemble with entanglement gap $(O(f(n)), \Omega(n))$ across geometrically local cuts in 1D (\cref{def:pk_entanglement_multicut}).
\end{enumerate}
\end{theorem}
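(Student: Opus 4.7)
Properties (i)--(iv) of \cref{def:pk_entanglement_multicut} carry over essentially verbatim from the single-cut \cref{thm:single_cut}: the states are $n$-qubit with polynomial-length keys, $s_k$ is classically polynomial-time computable (so $\ket{\psi_k}$ is efficiently preparable via the standard phase oracle), and computational indistinguishability follows from a standard hybrid argument over the $n - f(n) + 1$ keys $k^{\rm rep}_m$, each single swap being indistinguishable by \cref{thm:lossyfunction}(iii). The real work is in property (v'), which I would prove cut-by-cut via \cref{lem:entanglement_from_matrix} and then union-bound over $m^* \in \{b(n), \ldots, n - b(n)\}$, separately handling the upper bound on entropy in the low case (via $\log \rank T^{(m^*)}_k$) and the lower bound in the high case (via $-\log \|T^{(m^*)}_k (T^{(m^*)}_k)^\top / 2^n\|_2$).

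For the low-entanglement case I would introduce auxiliary maps $g^{(m)}_1 : \bits^m \to \bits^m$ defined by $g^{(f(n))}_1 = f_{k^{\rm rep}_{f(n)}}$ and $g^{(m+1)}_1(y) = f_{k^{\rm rep}_{m+1}}(g^{(m)}_1(y_{[1..m]}) \parallel y_{m+1})$. A straightforward induction on $m$ shows $r_k^m(x) = g^{(m)}_1(x_{[1..m]}) \parallel x_{[m+1..n]}$, and a second induction on $m \geq m^*$ shows that for any $m^* \geq f(n)$ the final value $r_k(i \parallel j)$ depends on $i$ only through $g^{(m^*)}_1(i)$. Consequently the number of distinct rows of $T^{(m^*)}_k$, and hence its rank, is at most $|\img g^{(m^*)}_1| \leq |\img f_{k^{\rm rep}_{m^*}}|$, which by \cref{thm:lossyfunction}(ii) with $\ell(m)=\sqrt{f(n)}$ is at most $2^{f(n)}$ except with probability $2^{-\Omega(\sqrt{f(n)})}$. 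A union bound over the $O(n)$ cuts gives negligible total failure for both parameter regimes, and cuts with $m^* < f(n)$ or $m^* > n - f(n)$ automatically satisfy $\log \rank T^{(m^*)}_k \leq \min(m^*, n - m^*) \leq f(n)$.

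For the high-entanglement case I would mimic \cref{lem:single_cut_entanglement_gap} by expanding
\begin{align*}
\E\!\left[\bigl\|T^{(m^*)}_k (T^{(m^*)}_k)^\top\bigr\|_2^2\right] = \sum_{i, i', j, j'} \E\!\left[(-1)^{s_k(i\parallel j) + s_k(i'\parallel j) + s_k(i\parallel j') + s_k(i'\parallel j')}\right]
\end{align*}
and using the $4$-wise independence of $h_{k^{\rm fin}}$ to collapse each summand to the indicator that the four values $r_k(i \parallel j), r_k(i' \parallel j), r_k(i \parallel j'), r_k(i' \parallel j')$ admit a perfect pairing. The diagonal tuples $i = i'$ or $j = j'$ contribute the expected maximally-mixed baseline $O(2^{2n - \min(m^*, n - m^*)})$; the fully off-diagonal tuples require bounding the probability that $r_k$ realises a non-trivial pairing (left-left, right-right, or crossed). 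This is the main obstacle. Each $f_{k^{\rm rep}_m}$ is only $2$-wise independent (\cref{thm:lossyfunction}(i) with $r(m) = 2$), so one cannot directly bound the joint probability of two disjoint collisions occurring at the same step by a product of per-pair probabilities. The resolution exploits that the keys $\{k^{\rm rep}_m\}_m$ are sampled independently across $m$: I would partition each non-trivial pairing event by the earliest step at which each constituent pair of trajectories first merges, note that each merge uses fresh randomness from its step, and bound per-step merge probabilities by $2^{-m}$; summing the resulting geometric series yields a per-tuple bound of $O(2^{-m^*})$ up to polynomial slack. Totalling over the $\leq 2^{2n}$ tuples gives $\E[\|T T^\top\|_2^2] \leq 2^{2n - \min(m^*, n-m^*)} \cdot \poly(n)$; Markov's inequality, a union bound over cuts, and \cref{lem:entanglement_from_matrix} then deliver the required $\Omega(\min(m^*, n - m^*))$ lower bound with overwhelming probability.
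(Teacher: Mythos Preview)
Your proposal follows essentially the same route as the paper: properties (i)--(iv) are handled identically, the low-entanglement case is argued via the same image-size bound (the paper's $\tilde r_k^m$ plays the role of your $g^{(m)}_1$), and the high-entanglement case proceeds by the same Frobenius-norm expansion and Markov argument.

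There are, however, two issues in your high-entanglement sketch worth flagging. First, the step ``summing the resulting geometric series yields a per-tuple bound of $O(2^{-m^*})$'' is not correct as stated: the collision probability $\Pr[r_k(x) = r_k(y)]$ is not uniformly $O(2^{-m^*})$ but depends on how many least significant bits $x$ and $y$ share. Concretely (this is the paper's \cref{lem:multicut-collision-probability}), if $\mathrm{LSB}_d(x) = \mathrm{LSB}_d(y)$ but $\mathrm{LSB}_{d+1}(x) \neq \mathrm{LSB}_{d+1}(y)$, the bound is $2^{d+1}/2^n$; for pairs of the form $(i\parallel a,\, j\parallel a)$ with $i,j$ agreeing on many trailing bits this can be close to $1$. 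The correct accounting (the paper's \cref{cor:multicut-collision-probability-sum}) instead fixes $x$ and sums $\Pr[r_k(x)=r_k(y)]$ over all $y\neq x$: for each $d$ there are $2^n/2^{d+1}$ choices of $y$, so the sum telescopes to at most $n$. This aggregate bound, not a uniform per-tuple one, is what yields $\E[\|TT^\top\|_2^2] \leq O(n)\cdot 2^{2n-m^*}$ after counting tuples.

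Second, your concern about ``the joint probability of two disjoint collisions'' with only $2$-wise independence is unnecessary. The paper never needs this: by $4$-wise independence of $h_{k^{\rm fin}}$, a non-zero summand requires the four $r_k$-values to not all be distinct, so it suffices that $r_k(i\parallel a)$ collide with \emph{at least one} of the other three. This is a union bound over three single-pair collision events, each handled by the recursion above; no simultaneous control of two collisions is required.
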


\begin{proof}
We need to check the properties required in \cref{def:pk_entanglement_multicut}.
\cref{item:def:n_qubits,item:def:key_sampling,item:def:eff_prep,item:def:com_indist} hold by the same argument as in \cref{thm:single_cut}.
The proof of \cref{item:def:entanglement_gap_multi} is more involved.
We show the entanglement scaling for the low-entanglement states in \cref{lem:low-multicut} and for the high-entanglement states in \cref{lem:high-multicut}.
\end{proof}

\subsubsection{Low-entanglement states}

\begin{proposition} \label{lem:low-multicut}
Using the same setup as in \cref{thm:multi_cut}, for any $c \in [n]$ we have that 
\begin{align*}
\prs{k \leftarrow \cK_n^\lo}{\forall j = 1, \dots, n: \; S((\psi_k)_{[c]}) \leq O(f(n))} &\geq 1 - \negl(n)\,.
\end{align*}
\end{proposition}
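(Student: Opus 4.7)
The plan is to apply the rank bound from \cref{lem:entanglement_from_matrix}, which says $S((\psi_k)_{[c]}) \leq \log \rank T^c$, where $T^c \in \{\pm 1\}^{2^c \times 2^{n-c}}$ is the $T$-matrix of $\ket{\psi_k}$ across the cut $[c]$. It therefore suffices to show, with overwhelming probability over $k \leftarrow \cK_n^\lo$ and simultaneously for every cut $c$, that $\rank T^c \leq 2^{f(n)}$.

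The key structural step is to unroll the nested recursion defining $r_k$. For $m \in \{f(n), \ldots, n\}$ define $g^m_k : \bits^m \to \bits^m$ inductively by $g^{f(n)}_k(i) = f_{k^{\rm rep}_{f(n)}}(i)$ and, for $m > f(n)$, $g^m_k(i) = f_{k^{\rm rep}_m}\bigl(g^{m-1}_k(\mathrm{MSB}_{m-1}(i)) \parallel i_m\bigr)$. A straightforward induction on the levels of the recursion in \cref{def:multi-cut-construction} shows that for $i \in \bits^c$ and $j \in \bits^{n-c}$ with $f(n) \leq c \leq n$, the intermediate string produced after all levels up through $c$ is exactly $g^c_k(i) \parallel j$, and every subsequent level acts as a fixed deterministic function of this intermediate string. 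Consequently, $r_k(i \parallel j)$ depends on $i$ only through $g^c_k(i)$, so $g^c_k(i_1) = g^c_k(i_2)$ forces the $i_1$-th and $i_2$-th rows of $T^c$ to coincide. Hence the number of distinct rows of $T^c$ is at most $|\img g^c_k| \leq |\img f_{k^{\rm rep}_c}|$.

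The rest is a tail bound plus a union bound. For cuts $c \in \{f(n), \ldots, n-f(n)\}$, applying \cref{thm:lossyfunction}~\cref{item:compressing} with $\ell(c) = \sqrt{f(n)}$ gives $|\img f_{k^{\rm rep}_c}| \leq 2^{\ell(c)^2} = 2^{f(n)}$ except with probability $2^{-\ell(c)/4} = 2^{-\Omega(\sqrt{f(n)})} = \negl(n)$ for either admissible choice of $f$ (i.e.~$f(n) = n^\delta$ or $f(n) = \omega(\log^2 n)$). Because the keys $k^{\rm rep}_c$ are sampled independently for distinct $c$, a union bound over the at most $n$ cuts still gives failure probability $\negl(n)$. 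For cuts with $c < f(n)$ or $c > n - f(n)$, the entropy is trivially at most $\min(c, n-c) \leq f(n)$. Combining with \cref{lem:entanglement_from_matrix} then yields $S((\psi_k)_{[c]}) \leq f(n)$ for every cut.

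The only real subtlety lies in the structural step: \emph{a priori} one might worry that the row repetitions at levels $m > c$ could inflate $\rank T^c$ by mixing bits across the cut. The observation above rules this out cleanly, since once two inputs $i_1 \parallel j$ and $i_2 \parallel j$ agree on the intermediate state at level $c$, every subsequent deterministic level sends them to the same string; thus later levels can only further collapse rows of $T^c$, never split them apart.
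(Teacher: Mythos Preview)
Your proof is correct and follows essentially the same approach as the paper: bound $\rank T^c$ by $|\img f_{k^{\rm rep}_c}|$ via the structural observation that $r_k(i\parallel j)$ depends on $i$ only through the image of the level-$c$ lossy function, then invoke \cref{thm:lossyfunction}\cref{item:compressing}. Your explicit inductive definition of $g^m_k$ is actually cleaner than the paper's version, which hand-waves the same fact as ``$r_k(i\parallel j)=\xi(r_k^c(\zeta(i))\parallel j)$ for some functions $\xi,\zeta$ whose details do not matter''; and your added union bound over cuts is a harmless strengthening of the paper's per-cut statement.
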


\begin{proof}
If $c < f(n)$ the statement holds trivially because then $(\psi_k)_{[c]}$ has at most $f(n)$ qubits, and therefore at most entropy $f(n)$.
Hence fix any $c \in \{f(n), \dots, n\}$ and consider the cut between the first $c$ qubits and the last $n-c$ qubits.
As in \cref{def:t-matrix}, we can define the matrix $T \in \{\pm 1\}^{2^c \times 2^{n-c}}$ by
\begin{align*}
T_{ij} = (-1)^{s_k(i \parallel j)} = (-1)^{h_{k^{\rm fin}}(r_{k}(i \parallel j))} \,.
\end{align*}
Then by \cref{lem:entanglement_from_matrix}, it suffices to show that the rank of this matrix is bounded by $2^{O(f(n))}$ with overwhelming probability.

Define the functions $\tilde r_k^m: \bits^c \to \bits^{2^{n-c}}$ by 
\begin{align*}
\tilde r_k^m(i) = (r_k^m(i \parallel j))_{j \in \bits^{n-c}} \,.
\end{align*}
In particular, $\tilde r_k(i) \deq \tilde r_k^{f(n)}(i)$ outputs the $i$-th row of $T$ prior to applying the hash function $h_{k^{\rm fin}}$.

From the recursive definition of $r_k$, it is easy to see that for $i \in \bits^c, j \in \bits^{n-c}$ we can write $r_k(i \parallel j) = \xi(r_k^c(\zeta(i)) \parallel j)$ for some functions $\xi, \zeta$ whose details do not matter here.
This implies that $|\img \tilde r_k| \leq |\img \tilde r_k^c|$.
We then have that 
\begin{align*}
\rank(T) \leq |\img \tilde r_k| \leq |\img \tilde r_k^c| \leq |\img f_{k_c^{\rm rep}} |\,,
\end{align*}
where the first inequality holds because the number of distinct rows in $T$ is at most $|\img \tilde r_k|$, the second inequality follows from the argument above, and the third inequality follow from the definition of $r_k^c$ in \cref{def:multi-cut-construction}.
Having reduced the problem to bounding the image size of $f_{k_c^{\rm rep}}$, the lemma now follows exactly like \cref{lem:single_cut_entanglement_gap}.
\end{proof}

\subsubsection{High-entanglement states}
We will now show that despite using the imperfect injective functions from \cref{thm:lossyfunction}, the states in $\Psi_n^\hi$ from our construction in \cref{def:multi-cut-construction} have high entanglement across every geometrically local cut on a 1D line (sufficiently far from the end of the line).
The keys in $\cK^\hi_n$ are of the form $k = (k^{\rm rep}_{f(n)},  k^{\rm rep}_{f(n) + 1}, \ldots, k^{\rm rep}_n, k^{\rm fin})$, and each $k_m^{\rm rep}$ has the form $k_m^{\rm rep} = (A_m, k^{\rm hash}_m)$.
By \cref{item:almost_inj} of \cref{thm:lossyfunction}, we know that with overwhelming probability over the choice of the matrix $A_m$ (as long as $m = \omega(\log n)$), the family $\{f_{(A_m, k^{\rm hash}_m)}\}_{k^{\rm hash}_m}$ is pairwise independent.
We call a matrix $A_m$ ``good'' if this is the case.
We then define a ``good'' key $k \in \cK^\hi_n$ as one for which all the $A_m$-matrices are good: 
\begin{align*}
\cK^\hi_n\big|_{\rm good} = \left\{ (k^{\rm rep}_{f(n)},  k^{\rm rep}_{f(n) + 1}, \ldots, k^{\rm rep}_n, k^{\rm fin}) \in \cK^\hi_n \sth \forall m: k^{\rm rep}_m = (A_m, k^{\rm hash}_m) \text{ has a good $A_m$}\right\} \,.
\end{align*}
We write $k \leftarrow \cK^\hi_n\big|_{\rm good}$ to denote a key sampled as before, but conditioned on getting a good key.

We being by showing a few useful properties of the $r_k^m$-functions.
\begin{lemma}\label{lem:multicut-collision-equivalence}
Instantiate the construction from \cref{def:multi-cut-construction} with any choice of function $f(n)$.
For any $x, y, x', y' \in \bits^{n}$ and $m \in \{f(n), f(n)+1, \dots, n\}$ such that $x \ne y$, $x' \ne y'$ , $\mathrm{LSB}_{n-m}(x) = \mathrm{LSB}_{n-m}(x')$, and $\mathrm{LSB}_{n-m}(y) = \mathrm{LSB}_{n-m}(y')$, the following two properties hold:
\begin{align*}
\prs{k \leftarrow \cK^\hi_n\big|_{\rm good}}{r^m_k(x) = r^m_k(y)} = \begin{cases}
    \prs{k \leftarrow \cK^\hi_n\big|_{\rm good}}{r^{m + 1}_k(x) = r^{m+1}_k(y)} & \text{ if } {\rm LSB}_{n -m}(x) \ne {\rm LSB}_{n -m}(y)  \\
    \frac{1}{2^m} + \frac{2^m - 1}{2^m} \prs{k \leftarrow \cK^\hi_n\big|_{\rm good}}{r^{m + 1}_k(x) = r^{m+1}_k(y)} & \text{ if } {\rm LSB}_{n -m}(x) = {\rm LSB}_{n -m}(y) \, ,
\end{cases} 
\end{align*}
and
\[
 \prs{k \leftarrow \cK^\hi_n\big|_{\rm good}}{r^m_k(x) = r^m_k(y)} = \prs{k \leftarrow \cK^\hi_n\big|_{\rm good}}{r^m_k(x') = r^m_k(y')} \, .
\]

\end{lemma}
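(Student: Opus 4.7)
The plan is to prove both parts of the lemma simultaneously by backward induction on $m$, descending from $m = n$ down to $m = f(n)$. Throughout, I write
\[
p_m(x, y) \deq \prs{k \leftarrow \cK^\hi_n\big|_{\rm good}}{r^m_k(x) = r^m_k(y)}\,,
\]
and note that $r^m_k$ depends on $k$ only through $k^{\rm rep}_m, \ldots, k^{\rm rep}_n$, whose components are sampled independently.

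The base case $m = n$ is direct: $\mathrm{LSB}_0$ is empty, so $r^n_k(x) = f_{k^{\rm rep}_n}(x)$, and goodness of $A_n$ gives $p_n(x, y) = 2^{-n}$ for every $x \neq y$ by pairwise independence. The second claim is then vacuous, and the first claim (which always falls in the $j_x = j_y$ branch at this level) is consistent with the convention $p_{n+1}(x, y) = 0$ for $x \neq y$.

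For the inductive step, assume both claims at level $m+1$ and fix $x \neq y$. Writing $i_x = \mathrm{MSB}_m(x)$, $j_x = \mathrm{LSB}_{n-m}(x)$, and analogously for $y$, the recursion yields $r^m_k(x) = r^{m+1}_k(u \parallel j_x)$ and $r^m_k(y) = r^{m+1}_k(v \parallel j_y)$ with $u = f_{k^{\rm rep}_m}(i_x)$ and $v = f_{k^{\rm rep}_m}(i_y)$. Goodness of $A_m$ makes $(u, v)$ pairwise independent over $k^{\rm hash}_m$ (degenerate when $i_x = i_y$, uniform on $\bits^m \times \bits^m$ when $i_x \neq i_y$), and this randomness is independent of the components defining $r^{m+1}_k$. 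When $j_x \neq j_y$, the strings $u \parallel j_x$ and $v \parallel j_y$ are distinct for every $(u, v)$, so $p_m(x, y) = \E_{(u,v)}\big[p_{m+1}(u \parallel j_x, v \parallel j_y)\big]$; since $\mathrm{LSB}_{n-m-1}$ of each of these arguments is fixed by $j_x, j_y$, the inductive second claim makes the integrand constant in $(u, v)$ and equal to $p_{m+1}(x, y)$. When $j_x = j_y$, the assumption $x \neq y$ forces $i_x \neq i_y$, so $(u, v)$ is uniform on $\bits^m \times \bits^m$; splitting on $u = v$ contributes $1/2^m$ from the forced collision plus $(1 - 1/2^m)\, p_{m+1}(x, y)$ from the $u \neq v$ branch (again via the inductive second claim). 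This is the first claim at level $m$. The second claim at level $m$ then follows because the obtained formula depends only on whether $j_x = j_y$ and on $p_{m+1}(x, y)$, which itself depends only on $j_x, j_y$ by the inductive second claim.

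The only subtlety, and the main bookkeeping obstacle, is to carry both claims jointly through the induction: the first claim at level $m$ essentially requires the second claim at level $m+1$ in order to collapse the expectation over $(u, v)$, so neither claim can be proved in isolation.
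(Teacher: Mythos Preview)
Your proposal is correct and follows essentially the same backward-induction argument as the paper. The only difference is cosmetic: the paper splits the $j_x \neq j_y$ branch into two subcases ($i_x \neq i_y$ versus $i_x = i_y$) and writes out the sums explicitly, whereas you handle both at once by observing that the integrand $p_{m+1}(u \parallel j_x, v \parallel j_y)$ is constant in $(u,v)$ by the inductive second claim, so the distribution of $(u,v)$ is irrelevant. One tiny wording quibble: at $m=n$ the second claim is not ``vacuous'' but rather trivially true because $p_n$ is constant; and $p_{n+1}(x,y)=0$ for $x\ne y$ is a fact (since $r^{n+1}_k=\id$), not a convention.
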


\begin{proof}
    For ease of exposition, let $p^m_{x, y} = \prs{k \leftarrow \cK^\hi_n\big|_{\rm good}}{r^m_k(x) = r^m_k(y)}$. Throughout the proof all probabilities are over $k \leftarrow \cK^\hi_n\big|_{\rm good}$. We prove the lemma by induction. 
    
   From the definition of $r^{n+1}(x)$ we have
    \begin{align*}
        p^n_{x, y} & = \sum_{x', y' \in \bits^n} \Pr[f_{k^{\rm rep}_n}(x) = x' \wedge f_{k^{\rm rep}_n}(y) = y'] \cdot p^{n + 1}_{x', y'} \\
        &= \frac{1}{2^n} \\
        &= \frac{1}{2^n} + \frac{2^n - 1}{2^n} \prs{k \leftarrow \cK^\hi_n\big|_{\rm good}}{r^{n + 1}_k(x) = r^{n+1}_k(y)} \,.
    \end{align*}
     The second line holds because $f_{k^{\rm rep}_n}$ is pairwise independent conditioned on good $k$, and $p^{n + 1}_{x', y'} = 0$ for any $x', y' \in \bits^n$ such that $x \ne y$.
    This shows the lemma for $m = n$ since trivially ${\rm LSB}_0(x) = {\rm LSB}_0(y)$ (so we are in the second case for the first equation in the lemma) and we have shown that $p^n_{x, y}$ is independent of $x,y$ (so the second equation in the lemma holds).
     
    Now assuming the claim holds for $m + 1$, we will argue that it also holds for $m$. Let $i = \mathrm{MSB}_{m}(x) , j = \mathrm{MSB}_{m}(y), a = \mathrm{LSB}_{n - m}(x), b = \mathrm{MSB}_{m}(y)$. In particular, we have the following useful observation: if $\mathrm{LSB}_{n - m}(x) = \mathrm{LSB}_{n - m}(y)$ then $\mathrm{LSB}_{n - m - 1}(x) = \mathrm{LSB}_{n - m - 1}(y)$.

\paragraph{Case 1: $a \ne b$ and $i \ne j$. }
    Since $f_{k^{\rm rep}_m}$ is pairwise independent conditioned good $k$, and $\{ k^{\rm rep}_m \}$ are sampled independently for different $m$, we have 

\begin{align*}
    p^m_{x, y} & = \sum_{i', j' \in \bits^m} \Pr[f_{k^{\rm rep}_m}(i) = i' \wedge f_{k^{\rm rep}_m}(j) = j'] \cdot p^{m+1}_{i' \parallel a, j' \parallel b} \\
    & = \frac{1}{2^{2m}} \sum_{i', j' \in \bits^m} p^{m+1}_{i' \parallel a, j' \parallel b} \\
    & = \frac{1}{2^{2m}} \sum_{i', j' \in \bits^m} p^{m+1}_{x, y} \\
    & = p^{m+1}_{x, y}\,,
\end{align*}
where the third line uses the induction hypothesis.
The second property is obvious, by induction hypothesis, $p^{m}_{x, y} = p^{m+1}_{x, y} =  p^{m+1}_{x', y'} = p^{m+1}_{x', y'}$. 

\paragraph{Case 2: $a \ne b$ and $i = j$. }  Similarly, by the fact that $f_{k^{\rm rep}_m}$ is pairwise independent conditioned good $k$, and $\{ k^{\rm rep}_m \}$ are sampled independently for different $m$, we have 

\begin{align*}
    p^m_{x, y} & = \sum_{i' \in \bits^m} \Pr[f_{k^{\rm rep}_m}(i) = i'] \cdot p^{m+1}_{i' \parallel a, i' \parallel b} \\
    & = \frac{1}{2^{m}} \sum_{i'\in \bits^m} p^{m+1}_{i' \parallel a, i' \parallel b} \\
    & = \frac{1}{2^{m}} \sum_{i' \in \bits^m} p^{m+1}_{x, y} \\
    & = p^{m+1}_{x, y}\,,
\end{align*}
where the third line uses the induction hypothesis.
Similarly, the second property holds for this case.

\paragraph{Case 3: $a = b$ and $i \ne j$. }  Similarly, by the fact that $f_{k^{\rm rep}_m}$ is pairwise independent conditioned good $k$, and $\{ k^{\rm rep}_m \}$ are sampled independently for different $m$, we have 

\begin{align*}
    p^m_{x, y} & = \sum_{i', j' \in \bits^m} \Pr[f_{k^{\rm rep}_m}(i) = i' \wedge f_{k^{\rm rep}_m}(j) = j'] \cdot p^{m+1}_{i' \parallel a, j' \parallel a} \\
    & = \frac{1}{2^{2m}} \sum_{i' \in \bits^n} p^{m+1}_{i' \parallel a, i' \parallel a}  + \frac{1}{2^{2m}} \sum_{\substack{i', j' \in \bits^m \\ i' \ne j'}} p^{m+1}_{i' \parallel a, j' \parallel a} \\
    & = \frac{1}{2^m}  + \frac{1}{2^{2m}} \sum_{\substack{i', j' \in \bits^m \\ i' \ne j'}} p^{m+1}_{x, y} \\
    & = \frac{1}{2^m} + \frac{2^m - 1}{2^m} p^{m+1}_{x, y}\,,
\end{align*}
where the third line uses the induction hypothesis.
Similarly, the second property holds for this case.
\end{proof}

\begin{lemma}\label{lem:multicut-collision-probability}
Instantiate the construction from \cref{def:multi-cut-construction} with any choice of function $f(n)$.
For any $d \in [n]$ and $x, y \in \bits^{n}$ such that $\mathrm{LSB}_d(x) = \mathrm{LSB}_d(y)$ and $\mathrm{LSB}_{d + 1}(x) \ne \mathrm{LSB}_{d+1}(y)$, the following holds:
\begin{align*}
\prs{k \leftarrow \cK^\hi_n\big|_{\rm good}}{r_k(x) = r_k(y)} \le \frac{2^{d+1}}{2^n} \,.
\end{align*}
\end{lemma}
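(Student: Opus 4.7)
\medskip

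\noindent\textbf{Proof plan.} The plan is to leverage Lemma~\ref{lem:multicut-collision-equivalence} to set up a one-dimensional recursion in the parameter $m$ and then solve it exactly. Let me write $q_m \deq \Pr_{k \leftarrow \cK^\hi_n|_{\rm good}}[r^m_k(x) = r^m_k(y)]$, so our goal is to bound $q_{f(n)}$. The boundary condition is $q_{n+1} = 0$ since $r^{n+1}_k$ is the identity and $x \neq y$ by hypothesis. The hypothesis on $x,y$ (agreement on the last $d$ bits, disagreement at position $d+1$) determines which branch of Lemma~\ref{lem:multicut-collision-equivalence} applies at each stage: for $m$ with $n-m \le d$ (equivalently $m \ge n-d$), the suffixes $\mathrm{LSB}_{n-m}(x), \mathrm{LSB}_{n-m}(y)$ coincide, while for $m \le n-d-1$ they differ.

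First I would handle the easy range. For $m \in \{f(n), \ldots, n-d-1\}$ the second case of Lemma~\ref{lem:multicut-collision-equivalence} gives $q_m = q_{m+1}$, so the probability is constant on this range and in particular $q_{f(n)} = q_{n-d}$. (If $d \ge n - f(n)$ the bound $2^{d+1}/2^n$ is trivial since it exceeds $1$, so I may assume $f(n) \le n-d-1$.) Next I would solve the recursion on the range $m \in \{n-d,\ldots,n\}$, where Lemma~\ref{lem:multicut-collision-equivalence} gives
\begin{equation*}
q_m \;=\; \frac{1}{2^m} + \frac{2^m - 1}{2^m}\, q_{m+1}.
\end{equation*}
The key trick is the substitution $p_m \deq 1 - q_m$, which linearises the recurrence: a direct algebraic manipulation yields
\begin{equation*}
p_m \;=\; \frac{2^m - 1}{2^m}\, p_{m+1},
\end{equation*}
with initial condition $p_{n+1} = 1$. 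Telescoping,
\begin{equation*}
p_m \;=\; \prod_{j=m}^{n} \left(1 - 2^{-j}\right),
\qquad\text{so}\qquad
q_m \;=\; 1 - \prod_{j=m}^{n}\left(1 - 2^{-j}\right).
\end{equation*}

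Finally I would extract the desired bound by a union-bound–style estimate on the product, using $1 - \prod (1 - a_j) \le \sum a_j$ for $a_j \in [0,1]$:
\begin{equation*}
q_m \;\le\; \sum_{j=m}^{n} 2^{-j} \;\le\; 2^{-(m-1)}.
\end{equation*}
Specialising to $m = n-d$ and combining with $q_{f(n)} = q_{n-d}$ gives $q_{f(n)} \le 2^{-(n-d-1)} = 2^{d+1}/2^n$, as required. I expect no real obstacle here: once the right substitution $p_m = 1 - q_m$ is spotted, the proof is a short calculation; the only subtlety is carefully separating the two regimes of $m$ determined by the position of the first differing suffix bit.
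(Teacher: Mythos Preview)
Your approach is correct and follows essentially the same line as the paper: both invoke the recursion from \cref{lem:multicut-collision-equivalence}, separate the two regimes of $m$, and bound the result by the geometric sum $\sum_{m \ge n-d} 2^{-m}$. The paper skips your exact solution and simply uses $q_m \le 2^{-m} + q_{m+1}$ directly, but your product formula together with $1 - \prod(1-a_j) \le \sum a_j$ collapses to the identical sum.

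One small slip worth fixing: your claim that $2^{d+1}/2^n \ge 1$ whenever $d \ge n - f(n)$ is false (take e.g.\ $f(n) = d = n/2$). This edge case is still covered by your own argument, though: when $f(n) \ge n-d$ the entire range $\{f(n),\ldots,n\}$ lies in the non-trivial regime, and your bound gives $q_{f(n)} \le \sum_{j=f(n)}^n 2^{-j} \le \sum_{j \ge n-d} 2^{-j} \le 2^{d+1}/2^n$ directly. (Also, the case of \cref{lem:multicut-collision-equivalence} yielding $q_m = q_{m+1}$ is the first case there, not the second.)
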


\begin{proof}
    For ease of exposition, we again abbreviate $p^m_{x,y} = \prs{k \leftarrow \cK^\hi_n\big|_{\rm good}}{r_k(x) = r_k(y)}$. By \cref{lem:multicut-collision-equivalence} we have  the following recursive relationship:
    \[     
p^{m}_{x, y} =
\begin{cases}
\frac{1}{2^m} + \frac{2^m - 1}{2^m}  p^{m + 1}_{x, y} \quad \quad &  \text{if } \mathrm{LSB}_{n-m}(x) = \mathrm{LSB}_{n-m}(y) \\
p^{m + 1}_{x, y} \quad \quad &  \text{otherwise} \\

\end{cases},
\]
with the boundary condition $p^{n+1}_{x, y} = 0$. 

For $x, y \in \bits^n$ such that $\mathrm{LSB}_d(x) = \mathrm{LSB}_d(y)$ and $\mathrm{LSB}_{d + 1}(x) \ne \mathrm{LSB}_{d+1}(y)$, we have 
\begin{align*}
    p^{f(n)}_{x, y} & \le \sum_{m = f(n)}^n \frac{\mathbb{1}[{\rm LSB}_{n-m}(x) = {\rm LSB}_{n-m}(x) ]}{2^m} \\
    & \le \sum_{m = n - d}^n \frac{1}{2^m} \\
    & \le \frac{2^{d + 1}}{2^n}\,,
\end{align*}
as desired.
\end{proof}

\begin{corollary}\label{cor:multicut-collision-probability-sum}
Instantiate the construction from \cref{def:multi-cut-construction} with any choice of function $f(n)$.
For any $x \in \bits^n$, the following holds:
\[\sum_{\substack{y \in \bits^n  \\ x \ne y}} \Pr[r_k(x) = r_k(y)] \le n.\]
\end{corollary}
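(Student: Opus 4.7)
The plan is to prove the corollary by partitioning the sum according to the largest $d$ for which $x$ and $y$ agree on their last $d$ bits, and then applying \cref{lem:multicut-collision-probability} to each piece.

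Concretely, I would fix $x \in \bits^n$ and, for each $y \neq x$, let $d(y) \in \{0, 1, \ldots, n-1\}$ be the unique integer such that $\mathrm{LSB}_{d(y)}(x) = \mathrm{LSB}_{d(y)}(y)$ and $\mathrm{LSB}_{d(y)+1}(x) \neq \mathrm{LSB}_{d(y)+1}(y)$. This is well-defined since $y \neq x$ forces at least one bit to differ. Then \cref{lem:multicut-collision-probability} gives $\Pr[r_k(x) = r_k(y)] \leq 2^{d(y)+1}/2^n$.

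The next step is to count, for each $d \in \{0, \ldots, n-1\}$, the number of $y \neq x$ with $d(y) = d$. The condition fixes the last $d$ bits of $y$ to match those of $x$ and forces the $(d+1)$-th bit from the end to be the complement of the corresponding bit of $x$; the remaining $n - d - 1$ bits are unconstrained. Hence there are exactly $2^{n-d-1}$ such strings $y$.

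Combining these two observations gives
\begin{equation*}
\sum_{\substack{y \in \bits^n \\ y \neq x}} \Pr[r_k(x) = r_k(y)] \;\leq\; \sum_{d=0}^{n-1} 2^{n-d-1} \cdot \frac{2^{d+1}}{2^n} \;=\; \sum_{d=0}^{n-1} 1 \;=\; n,
\end{equation*}
which is exactly the claimed bound. There is no real obstacle here; the work was already done in establishing \cref{lem:multicut-collision-probability}, and all that remains is the elementary counting argument above, which telescopes perfectly because the bound $2^{d+1}/2^n$ from the lemma is precisely balanced against the $2^{n-d-1}$ strings at ``distance'' $d$.
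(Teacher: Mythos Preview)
Your proof is correct and is essentially identical to the paper's own argument: both partition the sum over $y$ according to the largest $d$ with $\mathrm{LSB}_d(x)=\mathrm{LSB}_d(y)$, count $2^{n-d-1}=2^n/2^{d+1}$ strings in each class, and apply \cref{lem:multicut-collision-probability} to obtain $\sum_{d=0}^{n-1}1=n$.
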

\begin{proof}
Observe that for any given $x$, the number of $y \in \bits^n$ such that $\mathrm{LSB}_d(x) = \mathrm{LSB}_d(y)$ but $\mathrm{LSB}_{d+1}(x) \ne \mathrm{LSB}_{d + 1}(y)$ is $\frac{2^n}{2^{d + 1}}$. Therefore, splitting up the sum over $y$ depending on the first bit on which $x$ and $y$ differ, we get from \cref{lem:multicut-collision-probability} that
\begin{align*}
    \sum_{\substack{y \in \bits^n \\  x \ne y}} \Pr[r_k(x) = r_k(y)] 
    &= \sum_{d = 0}^{n-1} \sum_{\substack{y \in \bits^n \\  {\rm LSB}_d(x) = {\rm LSB}_d(y) \\ {\rm LSB}_{d+1}(x) \neq {\rm LSB}_{d+1}(y)}} \Pr[r_k(x) = r_k(y)]\\
    &\le \sum_{d = 0}^{n - 1} \frac{2^n}{2^{d + 1}} \frac{2^{d+1}}{2^n} \\
    &= n\,. \qedhere
\end{align*}
\end{proof}

Now we can use this lemma to upper bound the expectation of the Frobenius norm of the $T$-matrix (see \cref{def:t-matrix}) associated \emph{any} cut $X \subset [n]$. 
While \cref{thm:multi_cut} only requires us to consider geometrically local cuts, the proof works for arbitrary cuts, which we will make use of in \cref{sec:multicut_2d}.
Specifically, we have the following lemma.

\begin{lemma} \label{lem:multi_frob_exp}
Instantiate the construction from \cref{def:multi-cut-construction} with any choice of function $f(n)$ and consider any cut $X \subseteq [n]$ of size $m \deq |X| \leq n/2$.
Let $T_{X,k}$ be the $T$-matrix (\cref{def:t-matrix}) for $\ket{\psi_{k}}$ associated with cut $X$.
Then we have the following bound on the expected squared Frobenius norm:
\begin{align*}
\E_{k \leftarrow \cK^\hi_n\big|_{\rm good}} \left [\left \| T_{X,k} T_{X,k}^{\intercal} \right \|_2^2 \right] \leq 5n \cdot 2^{2n - m} \,.
\end{align*}
\end{lemma}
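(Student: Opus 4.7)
The plan is to directly generalize the Frobenius-norm calculation from the single-cut case (\cref{lem:single_cut_entanglement_gap}) by combining the $4$-wise independence of $h_{k^{\rm fin}}$ with the iterated collision bound of \cref{cor:multicut-collision-probability-sum}. First I expand
\begin{align*}
\|T_{X,k}T_{X,k}^\intercal\|_2^2 = \sum_{\substack{i,j \in \bits^m \\ a,b \in \bits^{n-m}}} (-1)^{s_k(x_1) + s_k(x_2) + s_k(x_3) + s_k(x_4)},
\end{align*}
where $x_1 = i \parallel_X a$, $x_2 = j \parallel_X a$, $x_3 = i \parallel_X b$, $x_4 = j \parallel_X b$. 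Because $s_k = h_{k^{\rm fin}} \circ r_k$ with $h_{k^{\rm fin}}$ drawn from a $4$-wise independent family chosen independently of the $k^{\rm rep}_*$'s, the expectation of each summand is $1$ when the multiset $\{r_k(x_1),\ldots,r_k(x_4)\}$ has every element appearing with even multiplicity and $0$ otherwise. A union bound over the three ways to pair up four elements then upper-bounds the expectation by the sum of three pairing probabilities.

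Next I split the quadruples according to whether they are \emph{diagonal} ($i=j$ or $a=b$) or \emph{off-diagonal} ($i\neq j$ and $a\neq b$). For diagonal terms the multiset automatically pairs up and the summand equals $1$; a direct count gives a contribution of at most $2^{2n-m} + 2^{n+m} \leq 2 \cdot 2^{2n-m}$, where the last inequality uses $m \leq n/2$.

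For the off-diagonal terms I bound each of the three pairings separately using $\Pr[A \cap B] \leq \Pr[A]$. For instance, for the pairing $r_k(x_1)=r_k(x_2)$ and $r_k(x_3)=r_k(x_4)$,
\begin{align*}
\sum_{\substack{i\neq j \\ a\neq b}} \Pr[r_k(x_1)=r_k(x_2)] \leq 2^{n-m} \sum_{i,a}\sum_{y\neq i\parallel_X a} \Pr[r_k(i\parallel_X a)=r_k(y)] \leq n\cdot 2^{2n-m},
\end{align*}
where the first step drops the summation over $b$ (the summand is $b$-independent) and the last step applies \cref{cor:multicut-collision-probability-sum} to the inner sum together with $|\{(i,a)\}| = 2^n$. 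The other two pairings are handled analogously: for $r_k(x_1)=r_k(x_3)$ and $r_k(x_2)=r_k(x_4)$ one drops the $j$-sum and gains a factor $2^m$, arriving at $n\cdot 2^{n+m} \leq n\cdot 2^{2n-m}$; for the ``crossed'' pairing $r_k(x_1)=r_k(x_4)$ the constraints on $(i,j,a,b)$ can only shrink the sum compared to the unconstrained $\sum_{x}\sum_{y\neq x}\Pr[r_k(x)=r_k(y)] \leq n\cdot 2^n$.

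Summing the diagonal and off-diagonal contributions gives $\E[\|T_{X,k}T_{X,k}^\intercal\|_2^2] \leq (3n+2)\cdot 2^{2n-m} \leq 5n\cdot 2^{2n-m}$, as desired. The one place where care is needed is ensuring that the constraints $i\neq j$, $a\neq b$ in the off-diagonal sum do not interfere with the application of the collision bound; I expect to handle this simply by relaxing to the larger, unconstrained collision sum, which \cref{cor:multicut-collision-probability-sum} controls uniformly in $X$ and therefore regardless of the geometric structure of the cut.
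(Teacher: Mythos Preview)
Your proposal is correct and follows essentially the same approach as the paper's proof: both expand the Frobenius norm, separate the diagonal contributions $2^{2n-m}+2^{n+m}$, use the $4$-wise independence of $h_{k^{\rm fin}}$ (independent of the $k^{\rm rep}_*$'s) to reduce the off-diagonal terms to three collision events, and then bound each via \cref{cor:multicut-collision-probability-sum} after relaxing constraints to enlarge the sum. Your ``multiset with even multiplicities'' formulation and the paper's ``necessary condition $r_k(x_1)\in\{r_k(x_2),r_k(x_3),r_k(x_4)\}$'' lead to the same three single-equality bounds $n\cdot 2^{2n-m}$, $n\cdot 2^{n+m}$, $n\cdot 2^n$, and the final tally $(3n+2)\cdot 2^{2n-m}\le 5n\cdot 2^{2n-m}$ matches.
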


\begin{proof}
To simplify the notation, we drop the explicit dependence on $X$ and $k$ and simply write $T = T_{X,k}$.
All expectations are taken over $k \leftarrow \cK^\hi_n\big|_{\rm good}$.

The first part of the proof is almost identical to the proof of \cref{thm:single_cut}, except that now we consider cuts that do not necessarily divide the qubits into two equally sized sets.
We spell out the details for completeness.
Expanding the Frobenius norm and using that $T_{ij} \in \{\pm 1\}$:
\begin{align*}
        \E \left [\left \| T T^{\intercal } \right \|_2^2 \right] & = \sum_{i, j \in \bits^m} \E \left [ \left (  \sum_{l \in \bits^{n - m}} T_{il}T_{jl}  \right )^2 \right] \\
        & = \sum_{i, j \in \bits^m} \E \left [  \sum_{a, b \in \bits^{n - m}} T_{ia}T_{ja}T_{ib}T_{jb}  \right] \\
        & = \sum_{i \in \bits^m} \E \left [  \sum_{a, b \in \bits^{n - m}} T_{ia}T_{ia}T_{ib}T_{ib}  \right]  + \sum_{i, j \in \bits^m} \E \left [  \sum_{a \in \bits^{n - m}} T_{ia}T_{ja}T_{ia}T_{ja}   \right]  \\
        & \quad \quad + \sum_{\substack{i, j \in \bits^m \\ i \ne j}} \E \left [  \sum_{\substack{a, b \in \bits^{n - m} \\ a \ne b}} T_{ia}T_{ja}T_{ib}T_{jb}  \right] \\
        & = 2^{2n - m} + 2^{n + m} + \sum_{\substack{i, j \in \bits^m \\ i \ne j}} \E \left [  \sum_{\substack{a, b \in \bits^{n - m} \\ a \ne b}} T_{ia}T_{ja}T_{ib}T_{jb}  \right] \\
        & = 2^{2n - m} + 2^{n + m} + \sum_{\substack{i, j \in \bits^m \\ i \ne j}} \sum_{\substack{a, b \in \bits^{n - m} \\ a \ne b}}  \E \left [   (-1)^{s_k(i \parallel a) + s_k(i \parallel b) + s_k(j \parallel a) + s_k(j \parallel b)  }  \right]\,. \numberthis \label{eqn:expsum}
    \end{align*}

Since $s_k(x) = h_{k^{\text{fin}}}(r_k(x))$, and $h_{k^{\text{fin}}}$ is drawn from a $4$-wise independent function family, it is easy to see that the term $\E \left [   (-1)^{s_k(i \parallel a) + s_k(i \parallel b) + s_k(j \parallel a) + s_k(j \parallel b)  }  \right]$ is zero unless the 4 terms in the exponents pair up to cancel.
In particular, this means that a necessary condition for $\E \left [   (-1)^{s_k(i \parallel a) + s_k(i \parallel b) + s_k(j \parallel a) + s_k(j \parallel b)  }  \right] \neq 0$ is that 
\begin{align*}
r_k(i \parallel a) = r_k(i \parallel b) \qquad \text{or} \qquad r_k(i \parallel a) = r_k(j \parallel a) \qquad \text{or} \qquad r_k(i \parallel a) = r_k(j \parallel b)\,.
\end{align*}
In \cref{thm:single_cut}, the probability of this event was straightforward to bound using pairwise independence.
Here, this direct argument is replaced by using \cref{cor:multicut-collision-probability-sum}.
For the first case, we can bound 
\begin{align*}
\sum_{\substack{i, j \in \bits^m \\ i \ne j}} \sum_{\substack{a, b \in \bits^{n - m} \\ a \ne b}}  \Pr[r_k(i \parallel a) = r_k(j \parallel a)] 
&= \sum_{i \in \bits^m} \sum_{\substack{a, b \in \bits^{n - m} \\ a \ne b}}  \left ( \sum_{\substack{j \in \bits^m \\ i \ne j}}  \Pr[r_k(i \parallel a) = r_k(j \parallel a)] \right) \\
&\leq \sum_{i \in \bits^m} \sum_{\substack{a, b \in \bits^{n - m} \\ a \ne b}}  \left ( \sum_{\substack{y \in \bits^n \\ y \ne (i \parallel a)}}  \Pr[r_k(i \parallel a) = r_k(y)] \right) \\
&\leq 2^{2n - m} \cdot n \,.
\end{align*}
For the second line we have added terms to the sum in parentheses, which can only increase its value.
For the third line we have used \cref{cor:multicut-collision-probability-sum} to deduce that the term in parentheses is at most $n$ and then counted the number of terms in the outer sum.

By analogous derivations, we also obtain 
\begin{align*}
\sum_{\substack{i, j \in \bits^m \\ i \ne j}} \sum_{\substack{a, b \in \bits^{n - m} \\ a \ne b}}  \Pr[r_k(i \parallel a) = r_k(i \parallel b)] &\leq 2^{n+m} \cdot n \,,\\  
\sum_{\substack{i, j \in \bits^m \\ i \ne j}} \sum_{\substack{a, b \in \bits^{n - m} \\ a \ne b}} \Pr[r_k(i \parallel a) = r_k(j \parallel b)] &\leq 2^n \cdot n \,.
\end{align*}

Using the trivial bound $\E \left [   (-1)^{s_k(i \parallel a) + s_k(i \parallel b) + s_k(j \parallel a) + s_k(j \parallel b)}\right] \leq 1$, we can therefore bound the remaining sum in \cref{eqn:expsum}
\begin{align*}
&\sum_{\substack{i, j \in \bits^m \\ i \ne j}} \sum_{\substack{a, b \in \bits^{n - m} \\ a \ne b}}  \E \left [   (-1)^{s_k(i \parallel a) + s_k(i \parallel b) + s_k(j \parallel a) + s_k(j \parallel b)  }  \right] \\
&\qquad \le \sum_{\substack{i, j \in \bits^m \\ i \ne j}} \sum_{\substack{a, b \in \bits^{n - m} \\ a \ne b}}  \Bigg(\Pr[r_k(i \parallel a) = r_k(j \parallel a)]  +  \Pr[r_k(i \parallel a) = r_k(i \parallel b)]  +  \Pr[r_k(i \parallel a) = r_k(j \parallel b)] \Bigg) \\
&\qquad \leq n \cdot 2^{2n - m} + n \cdot 2^{n+m} + n \cdot 2^n \,.
\end{align*}
Inserting this into \cref{eqn:expsum} we get that
\[
    \E \left [\left \| T T^{\intercal } \right \|_2^2 \right] \le 2^{2n - m} + 2^{n + m} + n \cdot 2^{2n - m} + n \cdot 2^{n+m} + n \cdot 2^n \leq 5n \cdot 2^{2n - m}
\]
since we assumed that $m \leq n/2$.
\end{proof}

\begin{proposition} \label{lem:high-multicut}
Using the same setup as in \cref{thm:multi_cut}, for any cut $X \subseteq [n]$ of size $m \deq |X| = \omega(\log n)$ we have that 
\begin{align*}
\prs{k \leftarrow \cK_n^\hi}{\forall j = 1, \dots, n: \; S((\psi_k)_{X}) \ge \frac{1}{4} \min(m, n - m)} &\geq 1 - \negl(n)\,.
\end{align*}
\end{proposition}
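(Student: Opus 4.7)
The plan is to derive this proposition as a direct consequence of \cref{lem:multi_frob_exp} via Markov's inequality, after first reducing to ``good'' keys and to cuts of size at most $n/2$.

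First, I would invoke purity of $\ket{\psi_k}$: since $S((\psi_k)_X) = S((\psi_k)_{[n]\setminus X})$ for a pure state, we may assume without loss of generality that $m \leq n/2$, in which case $\min(m, n-m) = m$ and it suffices to show $S((\psi_k)_X) \geq m/4$ with overwhelming probability over $k \leftarrow \cK_n^\hi$. Second, I would reduce to sampling from $\cK_n^\hi\big|_{\rm good}$. By \cref{item:almost_inj} of \cref{thm:lossyfunction}, for each $m' \in \{f(n), \dots, n\}$ the matrix $A_{m'}$ sampled as part of the key is ``good'' except with probability $O(2^{-m'})$. Since $f(n) = \omega(\log n)$ in both parameter regimes of \cref{thm:multi_cut}, a union bound over the at most $n$ values of $m'$ shows that $k \leftarrow \cK_n^\hi$ is good except with negligible probability, so it suffices to prove the entanglement bound conditional on $k \leftarrow \cK_n^\hi\big|_{\rm good}$.

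With these reductions in place, the main estimate \cref{lem:multi_frob_exp} applies and gives $\E[\norm{T_{X,k} T_{X,k}^\top}_2^2] \leq 5n \cdot 2^{2n - m}$. Applying Markov's inequality at the threshold $2^{2n - m/2}$ yields
\begin{align*}
\prs{k \leftarrow \cK_n^\hi\big|_{\rm good}}{\norm{T_{X,k} T_{X,k}^\top}_2^2 > 2^{2n - m/2}} \leq 5n \cdot 2^{-m/2},
\end{align*}
which is $\negl(n)$ because $m = \omega(\log n)$. On the complementary event one has $\norm{\frac{1}{2^n} T_{X,k} T_{X,k}^\top}_2 \leq 2^{-m/4}$, so the lower bound in \cref{lem:entanglement_from_matrix} yields $S((\psi_k)_X) \geq m/4$, as required.

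All of the substantive work has already been done in \cref{lem:multi_frob_exp}, so no genuinely new technical obstacle arises here; the present proposition is essentially a ``Markov plus union bound'' wrap-up. The only points that require care are (i) ensuring that the failure probability for being a good key remains negligible after a union bound over the $\Theta(n)$ lossy-function matrices, which is precisely where we exploit $f(n) = \omega(\log n)$, and (ii) invoking purity of $\ket{\psi_k}$ to avoid having to prove a separate Frobenius-norm estimate for cuts of size $m > n/2$.
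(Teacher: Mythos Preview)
Your proposal is correct and follows essentially the same approach as the paper: reduce to $m \leq n/2$ by purity, pass to $\cK_n^\hi\big|_{\rm good}$ via a union bound (using $f(n) = \omega(\log n)$), then apply Markov's inequality to the Frobenius-norm expectation from \cref{lem:multi_frob_exp} at the threshold $2^{2n-m/2}$ and conclude via \cref{lem:entanglement_from_matrix}. The paper's proof is organised in the same order with the same thresholds and the same $5n\cdot 2^{-m/2}$ tail bound.
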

\begin{proof}
Without loss of generality, we can assume that $m = |X| \leq n/2$; otherwise we can simply consider the complementary cut $[n] \setminus X$ and use the fact that $S((\psi_k)_{X}) = S((\psi_k)_{[n] \setminus X})$.
Let $T_{X,k}$ be the $T$-matrix (\cref{def:t-matrix}) for $\ket{\psi_{k}}$ associated with cut $X$.
By \cref{lem:entanglement_from_matrix} it suffices to show that 
\begin{align*}
\prs{k \leftarrow \cK_n^\hi}{-\log \norm{\frac{1}{2^n} T_{X,k} T_{X,k}^\intercal}_2 < \frac{m}{4}} \leq \negl(n) \,.
\end{align*}
The event that $-\log \norm{\frac{1}{2^n} T_{X,k} T_{X,k}^\intercal}_2 < \frac{m}{4}$ is equivalent to the event that $\norm{T_{X,k} T_{X,k}^\intercal}_2^2 > 2^{2n - m/2}$.
We can bound the probability of the latter as follows.
From \cref{item:almost_inj} in \cref{thm:lossyfunction} and a union bound we know that if we sample a key $k \leftarrow \cK_n^\hi$, the probability that this key is \emph{not} contained in $\cK_n^\hi\big|_{\rm good}$ at most $n \cdot 2^{-f(n)}$, which is negligible in $n$ since we chose $f(n) = \omega(\log n)$ in \cref{thm:multi_cut}.
Therefore we get that
\begin{align*}
\prs{k \leftarrow \cK_n^\hi}{-\log \norm{\frac{1}{2^n} T_{X,k} T_{X,k}^\intercal}_2 < \frac{m}{4}}
&\leq \prs{k \leftarrow \cK_n^\hi\big|_{\rm good}}{-\log \norm{\frac{1}{2^n} T_{X,k} T_{X,k}^\intercal}_2 < \frac{m}{4}} + \negl(n) \\
&= \prs{k \leftarrow \cK_n^\hi\big|_{\rm good}}{\norm{T_{X,k} T_{X,k}^\intercal}_2^2 > 2^{2n - m/2}} + \negl(n) \\
&\leq \frac{1}{2^{2n - m/2}} \E_{k \leftarrow \cK_n^\hi\big|_{\rm good}} \left[ \norm{T_{X,k} T_{X,k}^\intercal}_2^2 \right] + \negl(n) \\
&\leq 5n \cdot 2^{2n - m - 2n + m/2} + \negl(n) \\
&= 5n \cdot 2^{-m/2} + \negl(n) \\
&= \negl(n) \,.
\end{align*}
Here, the third line uses Markov's inequality, the fourth line uses \cref{lem:multi_frob_exp}, and the last line holds because $m = \omega(\log n)$.
\end{proof}

\subsubsection{Pseudorandom states and entropy lower bounds for cuts with sub-logarithmic sizes}
\label{sec:smallcut}
Note that the current high entanglement construction may not always satisfy volume law entanglement for partitions where one side is extremely small (formally speaking, of size $O(\log n)$). Although this limitation does not impede the applications presented in this paper, we briefly discuss adjustments needed in our construction guaranteeing that volume law entanglement even holds for these small cuts. 

Intuitively, when the size of the subsystem is at most $O(\log n)$, adversaries can approximate entanglement entropy up to an inverse polynomial error. On the other hand, a logarithmically-sized subsystem of Haar random states concentrates very strongly around the maximally mixed state. Therefore, any pseudorandom state must have volume law entanglement for cut $X$ of size $|X| = O(\log n)$. 

While the construction we have described so far is not pseudorandom,\footnote{By ``pseudorandom'' we mean that the state families \emph{without} the public key are indistinguishable from Haar random states. Naturally, if the circuit to prepare the state is made public, it is easy to distinguish the state from Haar random states since generic Haar random states have no short state preparation circuit.} a simple modification can achieve this: according to \cite[Appendix A]{bouland2022quantum}, it suffices to modify $k^{\textrm{rep}}_i$ and $k^\textrm{fin}$ such that they are pseudorandom functions (PRFs). The construction utilizes two types of functions: $k$-wise independent functions and LWE functions, with the latter being PRFs based on the standard LWE assumption. Consequently, the transformation involves converting $k$-wise independent families into families that are both $k$-wise independent and pseudorandom (for a specific construction, refer to \cite[Appendix A]{bouland2022quantum}). Notably, since both our low-entanglement and high-entanglement constructions employ the same $k$-wise independent families, these modifications do not jeopardize the public-key indistinguishability of two constructions.

\subsection{Area-law public-key pseudoentangled states on a 2D grid} \label{sec:multicut_2d}
We can easily generalise the 1D construction from \cref{sec:multicut} to a system of qubits arranged on a 2D grid.
This is the same construction as in \cite[Appendix D.5]{bouland2022quantum}, so we only provide a short sketch.
Let $\ket{\psi_k}$ be an $n$-qubit state from a pseudoentangled state ensemble.
We can arrange the qubits of this state on an $\sqrt{n} \times \sqrt{n}$ grid as shown in \cref{fig:snake}.
\begin{figure*}[ht!]
\centering
\includegraphics[width=0.2\textwidth]{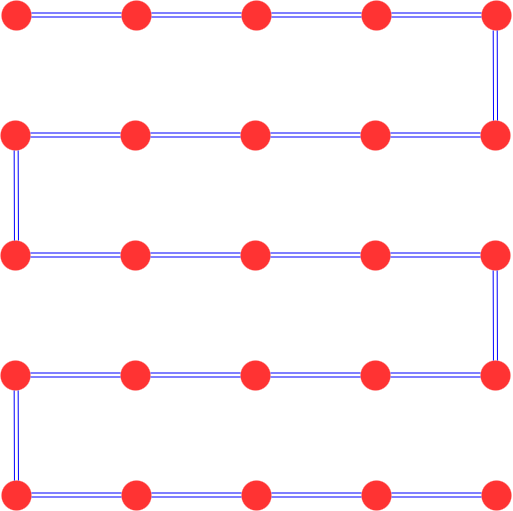}
\caption{Arranging an $n$-qubit state on a $\sqrt{n}\times \sqrt{n}$ grid.}
\label{fig:snake}
\end{figure*}
Now consider a contiguous 2D subregion $R$ of this $\sqrt{n} \times \sqrt{n}$ grid.
Let $|R|$ be the size of $R$ (i.e.~the number of qubits in $R$) and $|\partial R|$ the size of the boundary of $R$.
Unfolding the ``snake'', this region $R$ corresponds to a (not necessarily geometrically local) cut in 1D.

For the pseudoentangled state ensembles we constructed in \cref{thm:multi_cut}, a high-entanglement state $\ket{\psi_k}$ has entanglement entropy $\Omega(|R|)$ for any (sufficiently large) cut $R$, even if the cut is not 1D geometrically local (\cref{lem:high-multicut}).
This means that arranged on a 2D grid, the high-entanglement states from \cref{thm:multi_cut} exhibit volume law entanglement scaling.

Conversely, consider a low-entanglement state $\ket{\psi_k}$ from the construction in \cref{thm:multi_cut}.
From the geometry of \cref{fig:snake} it is easy to see that a region $R$ corresponds to a 1D cut that divides the qubits into at most $O(|\partial R|)$ contiguous regions; this is because the boundary of the region $R$ can cut the ``snake'' at most $O(|\partial R|)$ times.
For each of these $O(|\partial R|)$ cuts in 1D, we know from \cref{thm:multi_cut} that $\ket{\psi_k}$ has entanglement entropy at most $O(\poly \log n)$ across that cut.
Using subadditivity of entanglement entropy, it then follows that the entanglement entropy of the region $R$ is at most $O(|\partial R| \cdot \poly \log n)$, which corresponds to area-law scaling in two dimensions (up to polylogarithmic factors).

\section{Computational hardness of learning ground state entanglement structure} \label{sec:lgses}
Our public-key pseudoentanglement constructions can be leveraged to construct Hamiltonians such that it is hard to learn the entanglement structure of their ground state. This is what we will discuss in the next sections. Specifically, we will study variants of the following problem, which we define somewhat informally.

\begin{definition}[Learning Ground State Entanglement Structure (LGSES) problem]
\label{entanglement structure}
Given a classical description of a $k$-local Hamiltonian $H$ on $n$ qubits with spectral gap at least $\frac{1}{\text{poly}(n)}$, decide whether the ground state of $H$ has entanglement structure $\mathsf{A}$ or $\mathsf{B}$?
Here, $\mathsf{A}$ and $\mathsf{B}$ should be two qualitatively different, pre-specified entanglement structures, e.g.~near are-law and volume law entanglement.
\end{definition}

We will see three different variants of this problem for three different types of local Hamiltonians and correspondingly three slightly different entanglement structures. 
We will progressively make our constructions more local---in some sense, more local corresponds to more physical Hamiltonians---but we will pay a slight price in terms of how straightforwardly the entanglement structures can be described.

\begin{enumerate}
\item In \cref{first construction}, we will study the hardness of LGSES for $O(\log n)$-local Hamiltonians on $n$ qubits arranged in a 1D line. The two entanglement structures to distinguish between will be $\text{poly}\log n$ vs $\Omega(n)$ entanglement across geometrically local cuts in 1D. In other words, we are asked to distinguish 1D near area-law vs volume-law entanglement.
\item In \cref{sec:hamiltonian_kitaev_mixed}, we will improve upon the locality of the Hamiltonian and study the hardness of LGSES for $O(1)$-local Hamiltonians on $n$ qubits arranged in a 1D line. 
However, the entanglement structure will be slightly more complicated: we will consider the reduced states of ground states on a specific subsystem and ask whether this has 1D near area-law or volume-law entanglement structure for a mixed state measure of entanglement.
\item In \cref{sec:2dlearning}, we will study the hardness of LGSES for 2-local Hamiltonians on a 2D grid of size $n \times \poly(n)$ and with constant local dimension, where all Hamiltonian terms are \emph{geometrically local} (i.e.~only nearest neighbors on the grid can interact). The two entanglement structures to distinguish will be entanglement entropy $O(\poly\log n)$ vs $\Omega(n)$ across horizontal cuts through the grid.
\end{enumerate}

\subsection{1D Hamiltonians with $\log n$-locality and pure states}
\label{first construction}
In this section, we will show how to obtain two families of $\log n$-local Hamiltonians, one whose ground state has $\text{poly}\log n$ entanglement scaling and the other whose ground state has $\Omega(n)$ entanglement scaling across geometrically local cuts in 1D, such that given the description of one of these Hamiltonians it is computationally hard to decide which family it belongs to.

We will start with the public-key pseudoentangled state constructions in \cref{sec:multicut}, use the padded circuit-to-Hamiltonain construction of \cref{section: kitaev clock}, and then use the trace distance closeness property of \cref{clock: trace distance closeness} to show that the entanglement structures of the ground states of these Hamiltonians resemble the entanglement structure of the public-key pseudoentangled states.

\begin{theorem}
\label{thm:formal-lgses-binary}
For every $n \in \N$, there exist two families $\mathcal{H}^\lo_n$ and $\mathcal{H}^\hi_n$ of $O(\log n)$-local Hamiltonians on $(n+O(\log n))$ qubits arranged on a 1D line with spectral gap $\Omega(1/\poly(n))$ and efficient procedures that sample (classical descriptions of) Hamiltonians from either family (denoted $H \leftarrow \mathcal{H}^\lo_n$ and $H \leftarrow \mathcal{H}^\hi_n$) with the following properties:
\begin{enumerate}
\item Hamiltonians sampled according to $H \leftarrow \mathcal{H}^\lo_n$ and $H \leftarrow \mathcal{H}^\hi_n$ are computationally indistinguishable under \cref{lwe_assumption_subexp}. \label{item:hambin_indist}
\item  With overwhelming probability, the ground states of Hamiltonians $H \leftarrow \mathcal{H}^\lo_n$ have 1D near area-law entanglement and Hamiltonians $H \leftarrow \mathcal{H}^\lo_n$ have 1D volume-law entanglement.
Formally, this means that for geometrically local cuts in 1D of size $r = \omega(\log n)$, the ground states of the Hamiltonians have entanglement entropy $O(\poly\log n)$ or $\Omega(\min(r, n-r))$, respectively.
\label{item:hambin_ent}
\end{enumerate}
\end{theorem}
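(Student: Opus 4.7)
The plan is to compose the area-law-vs-volume-law pseudoentangled circuit families from \cref{thm:multi_cut} with the padded binary-clock Kitaev construction of \cref{section: kitaev clock}, and then transport the entanglement bounds from the output state of the circuit to the ground state of the Hamiltonian via the trace-distance closeness of \cref{clock: trace distance closeness} plus continuity of entropy. Concretely, I would let $\{C_k\}_{k \in \cK^\lo_n \cup \cK^\hi_n}$ be the efficient $K = \poly(n)$-gate state-preparation circuits for $\ket{\psi_k}$ guaranteed by \cref{thm:multi_cut} (under \cref{lwe_assumption_subexp}), set $H_k \deq H_{C_k, M}$ to be the $M$-padded binary-clock history-state Hamiltonian from \cref{lem:kitaev_ham_exists}, and take $\mathcal{H}^\lo_n, \mathcal{H}^\hi_n$ to be the distributions over $H_k$ induced by sampling $k \leftarrow \cK^\lo_n$ or $k \leftarrow \cK^\hi_n$ respectively. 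The padding parameter $M = \poly(n)$ is chosen so that \cref{clock: trace distance closeness} gives trace distance $\eps = 1/n^{\Theta(1)}$ between the ground state $\ket{\psi_{\mathrm{ground}}}$ of $H_k$ and $\ket{\psi_f} = \ket{\psi_k} \otimes \ket{\phi_{\rm clock}}$, where $\ket{\phi_{\rm clock}}$ is a fixed product state on the $O(\log n)$ binary clock qubits. The locality ($O(\log n)$) and inverse-polynomial spectral gap are then supplied directly by \cref{lem:kitaev_ham_exists}.

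Item \cref{item:hambin_indist} is immediate: the map $k \mapsto H_k$ is classically efficient and uses only the circuit $C_k$, not the label $\{\lo, \hi\}$ under which $k$ was drawn, so any polynomial-time distinguisher for $\mathcal{H}^\lo_n$ vs $\mathcal{H}^\hi_n$ would yield one for $\cK^\lo_n$ vs $\cK^\hi_n$, contradicting \cref{item:def:com_indist} of \cref{def:pk_entanglement_multicut}.

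For item \cref{item:hambin_ent}, I would arrange the $n$ data qubits contiguously on the line and place the $O(\log n)$ clock qubits at one end. Fix any geometrically local cut at position $r$ with $r = \omega(\log n)$ and $(n + O(\log n)) - r = \omega(\log n)$. Let $\rho$ and $\sigma$ be the reduced states on the left of the cut of $\ket{\psi_{\mathrm{ground}}}$ and $\ket{\psi_f}$, respectively. Trace distance closeness combined with \cref{von Neumann continuity} gives $|S(\rho) - S(\sigma)| \leq \eps \cdot n + h(\eps) = o(1)$ for $\eps$ polynomially small in $n$. Because $\ket{\phi_{\rm clock}}$ is a pure product state, $S(\sigma)$ equals the entanglement entropy of $\ket{\psi_k}$ across the induced cut of the $n$ data qubits, up to an additive $O(\log n)$ coming from any clock qubits lying on the left side of the cut. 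Invoking \cref{lem:low-multicut} then gives $S(\rho) = O(\poly\log n)$ for $k \leftarrow \cK^\lo_n$, and \cref{lem:high-multicut} gives $S(\rho) \geq \Omega(\min(r, n-r)) - O(\log n) = \Omega(\min(r, n-r))$ for $k \leftarrow \cK^\hi_n$, both with overwhelming probability.

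The only real obstacle I anticipate is bookkeeping around the clock register: it contributes both an $O(\log n)$ additive shift to the entropy (since the cut can slice through up to $O(\log n)$ clock qubits whose internal entanglement we cannot directly control) and a small but nonzero continuity error from the history-state approximation. The statement's assumption $r = \omega(\log n)$ is exactly what dominates these two $O(\log n)$ corrections against the $\Omega(\min(r, n-r))$ volume-law bound, and it dovetails with the boundary condition already built into \cref{def:pk_entanglement_multicut}. Once the accounting is carried out, no additional ideas are needed beyond those already developed in \cref{sec:multicut} and \cref{section: kitaev clock}.
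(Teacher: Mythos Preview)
Your proposal is correct and follows essentially the same approach as the paper: feed the multi-cut pseudoentangled circuits of \cref{thm:multi_cut} through the padded binary-clock Kitaev construction, read off locality and spectral gap from \cref{lem:kitaev_ham_exists}, get indistinguishability from that of the keys, and transfer the entanglement bounds via \cref{clock: trace distance closeness} plus Fannes continuity while absorbing the $O(\log n)$ clock register into the $r=\omega(\log n)$ slack. One small inaccuracy to fix: $\ket{\phi_{\rm clock}} = \frac{1}{\sqrt{M+K+1}}\sum_t \ket{\mathrm{clock}(t)}$ is a uniform superposition, not a product state---but since the clock register has only $O(\log n)$ qubits and is tensored with the data register, your $O(\log n)$ additive bound on its entropy contribution is still valid, so the argument goes through unchanged.
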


\begin{proof}
 Let $\Psi^\lo_n = \{ \ket{\psi_{k}}\}_{k \in \cK^\lo_n}$ and $\Psi^\hi_n = \{ \ket{\psi_{k}}\}_{k \in \cK^\hi_n}$ be two ensembles of public key pseudoentangled states across geometrically local cuts from \cref{sec:multicut}. 
 For a key $k \in \cK^\lo_n \cup \cK^\hi_n$, we denote by $H_k$ the $O(\log n)$-local $K$-padded history state Hamiltonian from \cref{lem:kitaev_ham_exists} with a binary clock, for $K = \poly(n)$ to be chosen later.
 We then choose $\mathcal{H}^\lo_n = \{H_k\}_{k \in \cK^\lo_n}$ and $\mathcal{H}^\hi_n = \{H_k\}_{k \in \cK^\hi_n}$.
 It follows from our pseudoentanglement construction and the padded history state Hamiltonian that these are efficiently sampleable families of $O(\log n)$-local Hamiltonians, noting that each term has a classical description (as a list of matrix entries) of size $\poly(n)$ and there are $\poly(n)$ terms, so the full Hamiltonians has an explicit classical description of size $\poly(n)$, too.
 The statement about the spectral gap follows from \cref{lem:kitaev_ham_exists}.
 \cref{item:hambin_indist} follows directly from the indistinguishability of $\cK^\lo_n$ and $\cK^\hi_n$ (\cref{item:def:com_indist} in \cref{def:pk_entanglement_singlecut}).
 
To prove \cref{item:hambin_ent}, first consider a Hamiltonian $H \leftarrow \cH^\lo_n$.
Let $|\psi_{\mathrm{ground}}\rangle$ be its ground state and define
\begin{equation*}
|\psi_{f} \rangle = \ket{\psi} \otimes \frac{1}{\sqrt{T+1}} \sum_{t=0}^{T} \ket{{\rm clock}(t)}\,,
\end{equation*}
where $\ket{\psi}$ is the pseudoentangled state for which $H$ is the history state Hamiltonian and $T$ is the number of gates in the (padded) circuit (see \cref{section: kitaev clock}).
Note that from \cref{clock: trace distance closeness},
\begin{equation}
\label{Trace_distance_1D_logn}
\norm{\proj{\psi_f} - \proj{\psi_{\mathrm{ground}}}}_1 = {O}\left( \frac{1}{\text{poly}(n)}\right)\,,
\end{equation}
for a sufficiently large choice of padding $K = \poly(n)$.

Let $(A, B)$ be a geometrically local bipartition of the $n + O(\log n)$ with $|A|, |B| = \omega(\log n)$.
Since we are considering geometrically local cuts with $|B| = \omega(\log n)$ and there are only $O(\log n)$ qubits, for a suitable choice of implicit constants in $|B| = \omega(\log n)$ all of the clock qubits are contained in $B$ and therefore do not contribute to the entanglement across the cut. 
Hence, 
\begin{align*}
S((\psi_f)_A) = O(\poly\log n) \,.
\end{align*}
Using the continuity property of the von Neumann entropy (\cref{continuity_vonneumannentropy}) and \cref{Trace_distance_1D_logn}, if we choose the padding $K = \poly(n)$ sufficiently large we get that 
\begin{align*}
S((\psi_{\mathrm{ground}})_A) \leq S((\psi_f)_A) + O\left(\frac{1}{\text{poly}(n)}\right) = O\left(\text{poly}\log n\right)
\end{align*}
as desired.

By applying a similar argument, when we consider $H \leftarrow \cH_n^\hi$ and a bipartition $(r, n-r)$,
\begin{equation*}
\label{low}
S((\psi_{\mathrm{ground}})_A) = \Omega(\text{min}(r, n-r)).
\end{equation*}
This completes the proof.
\end{proof}

\begin{remark}
\label{first_remark}
Under the standard LWE assumption (\cref{lwe_assumption}) instead of \cref{lwe_assumption_subexp}, \cref{thm:formal-lgses-binary} still holds, but with the smaller entanglement gap $O(n^\delta)$ vs $\Omega(n)$ for any $\delta > 0$.
This mirrors directly the statement in \cref{thm:multi_cut}.
\end{remark}

\subsection{1D Hamiltonians with constant locality and mixed states}
\label{sec:hamiltonian_kitaev_mixed}

In this section, we will modify the construction in \cref{first construction} with a unary clock to get constant locality. However, because the clock register will now have $\text{poly}(n)$ qubits, we can no longer simply remove the clock qubits as we did in \cref{thm:formal-lgses-binary}.
Therefore, we will consider the entanglement structure of the reduced density matrices of the ground state with the clock register traced out. Using mixed state entanglement measures (\cref{sec:mixedstate_measures}), we will show that one such density matrix will have high entanglement, and the other will have low entanglement.

As discussed in \cref{entanglement measures}, there are many mixed state measures of entanglement.
We will show that for any ``natural'' mixed state entanglement measure (in the sense of \cref{lem:natural_ent_bounds}), the ground state of the our Hamiltonian (with the clock register traced out) has either high or low entanglement.
We achieve this by giving an upper bound on the entanglement of formation of our low entanglement construction and a lower bound on the distillable entanglement of our high entanglement construction. 
Combined with \cref{lem:natural_ent_bounds}, this gives an entanglement gap for any natural entanglement measure. In fact, Hamiltonians constructed from our ensembles of pseudoentangled states achieve a maximally large gap.

\begin{theorem}
\label{thm:formal-lgses-unary}
For every $n \in \N$, there exist two families $\mathcal{H}^\lo_n$ and $\mathcal{H}^\hi_n$ of $O(1)$-local Hamiltonians on $(n+\text{poly}(n))$ qubits arranged on a 1D line with spectral gap $\Omega(1/\poly(n))$ and efficient procedures that sample (classical descriptions of) Hamiltonians from either family (denoted $H \leftarrow \mathcal{H}^\lo_n$ and $H \leftarrow \mathcal{H}^\hi_n$) with the following properties:
\begin{enumerate}
\item Hamiltonians sampled according to $H \leftarrow \mathcal{H}^\lo_n$ and $H \leftarrow \mathcal{H}^\hi_n$ are computationally indistinguishable under \cref{lwe_assumption_subexp}. \label{item:hamun_indist}
\item If we trace out $\text{poly}(n)$ many qubits from the ground state of each Hamiltonian, the entanglement gap between the resultant quantum states in the high and low families is $\Omega(\text{min}(r, n^{}-r))$ versus $\mathcal{O}(\text{poly}\log n)$, for a cut of size $(r, n-r)$, for any natural entanglement measure.
With overwhelming probability, the reduced states on the first $n$ qubits of the ground states of Hamiltonians $H \leftarrow \mathcal{H}^\lo_n$ have 1D near area-law entanglement and Hamiltonians $H \leftarrow \mathcal{H}^\lo_n$ have 1D volume-law entanglement with respect to any natural mixed state entanglement measure (in the sense of \cref{lem:natural_ent_bounds}).
\label{item:hamun_ent}
\end{enumerate}
\end{theorem}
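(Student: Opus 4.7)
The plan is to mirror the proof of \cref{thm:formal-lgses-binary}, replacing the binary clock by a unary clock in order to obtain $O(1)$-locality via \cref{lem:kitaev_ham_exists}, and replacing the pure-state entanglement analysis (which fails here because the clock register now carries $\poly(n)$ qubits and therefore dominates the entanglement) by a mixed-state analysis on the state register after tracing out the clock. Concretely, for each key $k \in \cK^\lo_n \cup \cK^\hi_n$ of the multi-cut pseudoentangled ensembles of \cref{sec:multicut} I would let $H_k$ denote the $O(1)$-local $M$-padded history-state Hamiltonian with a \emph{unary} clock from \cref{lem:kitaev_ham_exists}, applied to the state-preparation circuit for $\ket{\psi_k}$, and set $\mathcal{H}^\lo_n = \{H_k\}_{k \in \cK^\lo_n}$ and $\mathcal{H}^\hi_n = \{H_k\}_{k \in \cK^\hi_n}$. \cref{lem:kitaev_ham_exists} supplies the required locality, $1/\poly(n)$ spectral gap, and efficient classical description of size $\poly(n)$; the computational indistinguishability in the first item is inherited directly from \cref{item:def:com_indist} of \cref{def:pk_entanglement_singlecut}.

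For the second item, \cref{clock: trace distance closeness} lets me pick the padding $M = \poly(n)$ so that the ground state $\ket{\psi_{\rm ground}}$ of $H_k$ is within trace distance $\eps = 1/\poly(n)$ (a parameter I will fix at the end) of the product state $\ket{\psi_f} = \ket{\psi_k} \otimes \tfrac{1}{\sqrt{M+K+1}} \sum_t \ket{\mathrm{clock}(t)}$. Tracing out the clock is a CPTP map and hence non-expansive in trace norm, so the reduced state $\sigma_{AB}$ of $\ket{\psi_{\rm ground}}$ on the $n$-qubit state register is within trace distance $\eps$ of the \emph{pure} state $(\psi_k)_{AB}$ for every bipartition $(A,B)$ of those $n$ qubits. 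On the pure target state every natural entanglement measure collapses to the Schmidt entropy $S((\psi_k)_A)$ by \cref{entanglement for pure states}, which is precisely the quantity already controlled by \cref{lem:low-multicut} and \cref{lem:high-multicut}.

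The remaining step is to transport these pure-state bounds to the mixed state $\sigma_{AB}$ for \emph{any} natural entanglement measure, which I would do by sandwiching via \cref{lem:natural_ent_bounds}: bound the high case from below by the coherent information $I$, and the low case from above by the entanglement of formation $E_F$. For $k \leftarrow \cK^\hi_n$, \cref{continuity_vonneumannentropy} gives $|I(\sigma_{AB}) - I((\psi_k)_{AB})| \le 2\eps \min(r,n-r) + (1+\eps)\, h\!\bigl(\tfrac{\eps}{1+\eps}\bigr)$, so any natural $E$ satisfies $E(\sigma_{AB}) \ge I(\sigma_{AB}) = \Omega(\min(r,n-r)) - O(\eps n)$, which remains $\Omega(\min(r,n-r))$ once $\eps$ is driven inverse-polynomially small compared to $\min(r,n-r)/n$. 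Symmetrically, for $k \leftarrow \cK^\lo_n$, \cref{continuity_formation} with $\delta = \sqrt{\eps(2-\eps)}$ yields $E(\sigma_{AB}) \le E_F(\sigma_{AB}) \le E_F((\psi_k)_{AB}) + \delta n + (1+\delta)\, h\!\bigl(\tfrac{\delta}{1+\delta}\bigr) = O(\poly\log n) + O(\sqrt{\eps}\, n)$, which is $O(\poly\log n)$ provided $\eps \le 1/n^{3}$. A union bound over the $O(n)$ cuts absorbs the $\negl(n)$ failure probabilities from \cref{lem:low-multicut,lem:high-multicut}.

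The main thing to get right, and essentially the only delicate point in the proof, is calibrating $\eps$ (equivalently the padding $M$) against the continuity errors: the terms $\eps n$ and $\sqrt{\eps}\, n$ scale with the number of \emph{state} qubits, not with the total number of qubits in the Hamiltonian, and crucially the $\poly(n)$ clock qubits are discarded before any entanglement is computed and therefore contribute nothing to these error terms. Consequently I am free to inflate the unary clock to whatever polynomial length \cref{clock: trace distance closeness} demands in order to reach $\eps = 1/n^{\Omega(1)}$, without perturbing the bounds above. Picking $\eps = 1/n^{3}$ (or any inverse polynomial beating both $\poly\log(n)/n$ and $\min(r,n-r)/n$) closes both sides and yields the claimed $O(\poly\log n)$ versus $\Omega(\min(r,n-r))$ gap for every natural mixed-state entanglement measure and every sufficiently large 1D geometrically local cut, completing the proof of \cref{thm:formal-lgses-unary}.
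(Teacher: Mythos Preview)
Your proposal is correct and follows essentially the same approach as the paper: use the unary-clock padded history-state Hamiltonian, trace out the clock register, invoke \cref{clock: trace distance closeness} together with monotonicity of trace distance to get $\eps$-closeness to the pure pseudoentangled state, and then sandwich any natural entanglement measure between the coherent information (bounded via \cref{continuity_vonneumannentropy}) and the entanglement of formation (bounded via \cref{continuity_formation}). If anything, you are slightly more explicit than the paper about calibrating $\eps$ against the $O(\eps n)$ and $O(\sqrt{\eps}\, n)$ continuity errors and about the union bound over cuts.
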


\begin{proof}
We consider the same construction of $\cH_n^\lo$ and $\cH_n^\hi$ as in \cref{thm:formal-lgses-binary}, except that we now use the unary clock construction from \cref{lem:kitaev_ham_exists} instead of the binary one as in \cref{thm:formal-lgses-binary}.
With this, we only need to show \cref{item:hamun_ent} as the other properties follow in exactly the same way as in the proof of \cref{thm:formal-lgses-binary}.

First consider a Hamiltonian $H \leftarrow \cH^\lo_n$.
Let $|\psi_{\mathrm{ground}}\rangle$ and $\ket{\psi_f}$ be as in \cref{thm:formal-lgses-binary}, except with a unary clock.
Let $\rho_{\mathrm{data}}$ be the density matrix on $n$ qubits left when we trace out the clock qubits from $|\psi_{\mathrm{ground}}\rangle$ and recall that the reduced state of $\ket{\psi_f}$ on the first $n$ qubits is the pseudoentangled state $\proj{\psi}$ (using the notation from \cref{thm:formal-lgses-binary}).
Then, using \cref{clock: trace distance closeness} and the fact that tracing out qubits cannot increase the trace distance between two states,
\begin{equation}
\label{eq_imp1}
||~\rho_{\mathrm{data}} - \proj{\psi} ~||_1 = O\left(\frac{1}{\poly(n)}\right)
\end{equation}
for a sufficiently large padding $K = \poly(n)$.
Then, using continuity properties of the conditional von Neumann entropy (\cref{eq_imp1}, \cref{continuity_vonneumannentropy}), standard properties of the binary entropy function, the fact that all natural mixed state entanglement measures collapse to the von Neumann entanglement entropy for pure states (\cref{entanglement for pure states}), and the fact that coherent information is just the negative of conditional von Neumann entropy (\cref{coherent}) and lower bounds distilalble entanglement (\cref{lem:natural_ent_bounds}),
\begin{equation*}
E_D(\rho_{\mathrm{data}, ~AB}) \geq I(\rho_{\mathrm{data}, ~AB}) = \Omega\left(n^{} - \frac{1}{\poly(n)}\right).
\end{equation*}
By applying a similar argument, when we consider $H \leftarrow \cH_n^\hi$ and a bipartition $(r, n-r)$, using continuity properties of the entanglement of formation (\cref{continuity_formation}) we get that
\begin{equation*}
E_F(\rho_{\mathrm{data}, ~AB}) = O\left(\text{poly}\log n + \frac{1}{\poly(n)} \right).
\end{equation*}
Then, with the observation that any natural entanglement measure is upper bounded by $E_F$ and lower bounded by $E_D$ (\cref{lem:natural_ent_bounds}),  the result follows.
\end{proof}

\begin{remark}
Just as in \cref{first_remark}, under the standard LWE assumption (\cref{lwe_assumption}) \cref{thm:formal-lgses-unary} still holds, but with the smaller entanglement gap $O(n^\delta)$ vs $\Omega(n)$ for any $\delta > 0$.
\end{remark}

\subsection{2D Hamiltonians with geometric locality and pure states}
\label{sec:2dlearning}

\noindent In this section, we will show how to obtain two families of $2$D Hamiltonians on a $2$D grid of $\poly (n)$ qudits, one whose ground state has entanglement entropy of order $n$ and the other whose ground state has entanglement entropy of order $\poly\log n$, with respect to most horizontal cuts across the $2$D grid. Thus, arguably, this gives us a relatively more complicated entanglement structure than the constructions in \cref{thm:formal-lgses-binary} and \cref{thm:formal-lgses-unary}. However, we gain in geometric locality: the Hamiltonian only has nearest neighbor interactions on a $2$D grid. Formally, 2D Hamiltonians are defined as follows.

\begin{definition}[$2$D (local) Hamiltonian,~\cite{aharonov2009power}]
Let $H$ be a Hermitian operator (interpreted as a Hamiltonian,
giving the energy of some system). We say that $H$ is an $r$-state
Hamiltonian if it acts on $r$-state qudits (i.e.~$d=r$). When $r=2$, namely, when
the qudits are qubits. We say that $H$ is $k$-local
if it can be written as $$H = \sum_i H_i,$$ where each $H_i$ acts
non-trivially on at most $k$ qudits. Note that this term does
not assume anything about the physical location of the qudits.
We say that $H$
is a 2D Hamiltonian
if the qudits are arranged on a 2D grid and the terms
$H_i$ interact only pairs of nearest neighbor qudits.
In particular, a 2D
Hamiltonian is $2$-local.
\end{definition}

\paragraph{Overview.}
The main steps of our construction are as follows:
\begin{itemize}
\item First, we start with two $n$-qubit public key pseudoentangled states across multiple cuts, according to the construction in \cref{sec:multicut}, and consider the circuits for preparing them.
Suppose these circuits have $K=\poly(n)$ gates. Without loss of generality, we assume the circuit can be decomposed into $R = \poly(n)$ ``rounds'', each made up of exactly $n$ nearest-neighbor interactions on qubits 1, (1,2), (2,3), etc.
Any circuit can be transformed into this form by inserting a polynomial number of identity and swap gates.
Hence, the circuit contains $n R$ gates in total.
As in \cref{first construction}, we pad the circuits with $n M$ identity gates at the end for a sufficiently large $M = \poly(n)$.
\item Then, we use a modified version of the 2D clock construction~\cite{aharonov2008adiabatic} to construct two families of 2D Hamiltonians such that the padded circuit is embedded into its ground state. That is, if $\mathsf{C} = {U}_{nT} \cdot {U}_{nT-1} \cdots {U}_1$ is the circuit with padding, where $T = M + R$ is the total number of rounds after padding, we construct a Hamiltonian $H$ such that the ground state $\psiground$, on an $n\times T$ grid of qudits, encodes the time evolution of the padded circuit.
\item We show how, because of the padding, the entanglement structure of the 2D ground state across any horizontal cut resembles the entanglement structure of the state
\begin{align}
\ket{\psi_{\mathrm{output}}}=U_{nR}\cdot U_{nR-1}\cdots U_1\ket{0^n},\label{output-state}
\end{align}
across the same cut.

\item By virtue of our pseudoentanglement construction, the state in \cref{output-state} either has high or low entanglement, whenever the cut has distance $\omega(\log n)$ to the boundary of the grid. Then, by a continuity argument, we show that the ground state $\psiground$ also inherits the high or low entanglement property.
\end{itemize}

\subsubsection{Geometrically local 2D history state Hamiltonians\label{sec:2D-Hamiltonian}}

Based on the framework of~\cite{aharonov2008adiabatic}, we begin by showing how to construct a 2D Hamiltonian whose ground state encodes the state obtained by taking our padded circuit, as defined in~\cref{output-state}. Without loss of generality, we assume that before padding, the circuit has initial state $\ket{0}$ and consists of $R$ rounds, where in each rounds there are $n$ nearest-neighbor gates, acting on qubits $1$, $(1,2)$, $(2,3)$, $(3,4)$ and so on.

\paragraph{Legal shapes of the clock.}
To start, we will introduce some notation to reason about the valid clock states.
Suppose we have in total $n(T+1)$ 9-state qudits arranged on a two-dimensional square lattice with $n$ rows and $T+1$ columns. We refer to the quantum state defined on this lattice as a \textit{clock state}. From the top to the bottom, we number the rows from $1$ to $n$. From left to right, we number the columns from $0$ to $T$. Intuitively, the vertical axis of the 2D grid aligns with the spatial dimension of the original circuit, while the horizontal axis of the 2D grid aligns with the temporal dimension of the original circuit. Additionally, we have the following features.
\begin{itemize}
\item \textbf{Possible states}: For each qudit, following the notation of~\cite{aharonov2008adiabatic}, we denote its 9 possible states to be $$\ket{\statea},
\ket{\stateba},\ket{\statebb},
\ket{\stateca},\ket{\statecb}, \ket{\stated}, \ket{\stateea}, \ket{\stateeb}, \ket{\statem}.$$ The former 6 states were introduced in~\cite{aharonov2008adiabatic} whereas the latter three states are newly introduced in this work. 
\item \textbf{Phases}: These states can be categorized into six distinct {\em phases}: the \emph{unborn phase} represented by $\ket{\statea}$, the {\em first phase} containing $\ket{\stateba},\ket{\statebb}$, the {\em second phase} containing $\ket{\stateca},\ket{\statecb}$, the {\em dead phase} represented by $\ket{\stated}$, the {\em flag phase} containing $\ket{\stateea}$, $\ket{\stateeb}$, and the {\em marker phase} represented by $\ket{\statem}$. 
\item \textbf{Encoding the value}: The orientation of the state in the first phase, the second phase, and the flag phase encodes the value, i.e., $\ket{0}$ and $\ket{1}$ states of a qubit in the circuit. For example, we can map $\ket{0}$ to $\ket{\stateba}$ ($\ket{\stateca}$, $\ket{\stateea}$) and $\ket{1}$ to $\ket{\statebb}$ ($\ket{\statecb}$, $\ket{\stateeb}$). 
\item \textbf{Symbols for spanned subspaces}: Denote $\ket{\stateb}$ as any state in the subspace spanned by $\ket{\stateba}$ and $\ket{\statebb}$. Similarly, denote $\ket{\statec}$ as any state in the space spanned by $\ket{\stateca}$ and $\ket{\statecb}$, and denote $\ket{\statee}$ as any state in the space spanned by $\ket{\stateea}$ and $\ket{\stateeb}$. 
\item \textbf{Shapes}: Based on the concept of phases, we can further define the \textit{shape} of the quantum state on this two-dimensional square lattice. A shape is defined as an allocation of one of five phases to each qudit, ignoring the orientations of the states in the first, second, or the flag phase. Intuitively, the flag phase serves as a marker for the location where the current change occurs, and plays a similar role as the flag in~\cite{aharonov2009power} concerning the ground state in the one-dimensional setting. 
\end{itemize}
Similar to~\cite{aharonov2008adiabatic}, we show that we can represent the time evolution of the padded circuit using a series of $(3n-1)T+1$ clock states that are \textit{legal shapes}, where each shape corresponds to one time step in the circuit. 
\begin{remark}
Formally defined in~\cref{def:legal-shapes}, the notion of legal shapes we consider is slightly different from the original ones in~\cite{aharonov2008adiabatic} because we have the extra flag and marker states.
\end{remark}

\begin{definition}[Legal shapes,~\cite{aharonov2008adiabatic}]\label{def:legal-shapes}
Denote $L=(3n-1)T$. For any $t\in[0,L]$, we say a clock state is in the $t$-th legal shape if one of the following conditions is satisfied
\begin{enumerate}
\item There exists $r$ and $k\in[0,n]$ such that $t=(3n-1)r+k$. Moreover, the $r$ leftmost columns of the state are in the dead phase, the top $k-1$ qudits in the $(r+1)$-th column are in their second phase, the bottom $n-k$ are in the first phase, the qudit in between is in the flag phase, and qudits in the remaining $R-r$ columns are all in the unborn phase
\item There exists $r$ and $k\in[0,2n-1]$ such that $t=(3n-1)r+n+k$. Moreover, the $r$ leftmost columns of the state are in the dead phase, the $(r+1)$-th column has its $n-k$ topmost qudits in the second phase while its remaining $k$ qudits in the dead phase, the $(r+2)$-th column has its $n-k$ topmost qudits in the unborn phase, the bottom $k-1$ qudits in the first phase, the qudit in between is in the flag phase, and all qudits in the remaining columns are in the unborn phase.
\item There exists $r$ and $k\in[0,2n-1]$ such that $t=(3n-1)r+n+k$. Moreover, the $r$ leftmost columns of the state are in the dead phase, the $(r+1)$-th column has its $n-k$ topmost qudits in the second phase while its remaining $k$ qudits in the dead phase, the $(r+2)$-th column has its $n-k-1$ topmost qudits in the unborn phase, the bottom $k$ qudits in the first phase, the qudit in between is in the marker phase, and all qudits in the remaining columns are in the unborn phase.
\end{enumerate}
\end{definition}

Denote $\mathcal{S}$ to be the $(L+1)2^n$-dimensional space spanned by all states in legal shapes. Moreover, for each $0\leq t\leq L$ and $0\leq j\leq 2^n-1$, define $\ket{\gamma_{t}^j}$ to be the state in the $t$-th legal shape, where the orientations of its $n$ active qudits (i.e., qudits in the first phase, second phase, or the flag phase) correspond to the state of the circuit after applying the first $t$ gates to an initial state that encodes the binary representation of $j$, or quantitatively,
\begin{align}
U_{t}\cdot U_{t-1}\cdots U_1\ket{j}.\label{eqn:legal-shape-state}
\end{align}
Moreover, we denote $\ket{\gamma(t)}\equiv\ket{\gamma_t^0}$ for brevity.

The following is immediate from the construction.
\begin{lemma}
For any $0\leq j\leq 2^n-1$ and any $t_1\neq t_2\in[0,L]$, $\ket{\gamma_{t_1}^j}$ and $\ket{\gamma_{t_2}^j}$ are orthogonal to each other.
\end{lemma}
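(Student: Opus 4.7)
The plan is to show that for $t_1 \neq t_2$ the legal shapes indexed by $t_1$ and $t_2$ differ in the phase assigned to at least one qudit on the grid, which implies that the states $\ket{\gamma_{t_1}^j}$ and $\ket{\gamma_{t_2}^j}$ (regardless of the input $j$) have tensor factors lying in orthogonal subspaces of the nine-dimensional qudit space.

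First, observe that the six phases $\{\statea\}$, $\{\stateba,\statebb\}$, $\{\stateca,\statecb\}$, $\{\stated\}$, $\{\stateea,\stateeb\}$, $\{\statem\}$ correspond to pairwise orthogonal subspaces of the qudit Hilbert space by construction, since each phase is built out of a disjoint set of the nine orthonormal basis vectors. Consequently, if two clock states differ in the phase of even a single qudit, then the full tensor-product states are orthogonal independent of the orientations (i.e.\ the values $j$) in the remaining active qudits.

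Next, I would show that the map $t\mapsto\text{legal shape at time }t$ is injective on $[0,L]$. This is a direct case analysis using \cref{def:legal-shapes}. Write $t=(3n-1)r+s$ with $0\leq s<3n-1$. The integer $r$ is recoverable from the shape as the number of fully dead leftmost columns plus possibly the partially-dead active column; concretely, the three cases in \cref{def:legal-shapes} are distinguished by whether the flag or marker sits inside the $(r+1)$-th column (case 1), inside the $(r+2)$-th column in a flag state (case 2), or inside the $(r+2)$-th column in the marker state (case 3). These three situations are mutually exclusive as a shape pattern, because case 1 has only one partially active column while cases 2 and 3 have two partially active columns, and cases 2 and 3 differ in the phase (flag vs marker) of the distinguished qudit in the $(r+2)$-th column. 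Within each case, the remaining parameter $k$ is determined by the location of the flag/marker qudit and by the counts of first/second/unborn/dead qudits in the active columns.

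Thus $t_1\neq t_2$ implies the corresponding shapes differ in the phase of some qudit. That qudit's local state in $\ket{\gamma_{t_1}^j}$ lies in a subspace orthogonal to its local state in $\ket{\gamma_{t_2}^j}$, so the tensor-product inner product vanishes and $\braket{\gamma_{t_1}^j}{\gamma_{t_2}^j}=0$. I expect the only mildly delicate part to be the boundary bookkeeping at $s=0$ and $s=n$, where one has to be careful that case~1 at $(r+1,0)$ agrees with case~2 at $(r,n+(2n-1))$ viewed as the same shape only when $t$ matches; but this is precisely the reason the ranges of $k$ in \cref{def:legal-shapes} are chosen as stated, so the injectivity holds globally on $[0,L]$.
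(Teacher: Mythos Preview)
Your proposal is correct and follows exactly the approach the paper intends: the paper simply states that the lemma ``is immediate from the construction'' and gives no further argument. Your proof spells out the two implicit steps—different $t$ give different shapes (injectivity of the clock encoding), and different shapes force at least one qudit into orthogonal phase-subspaces—which is precisely the reasoning the paper leaves to the reader.
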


\paragraph{Transition rules.}
Based on the definition of $\ket{\gamma_t^j}$, we can further divide $\mathcal{S}$ into $2^n$ subspaces $\mathcal{S}_1,\ldots,\mathcal{S}_{2^n}$, where each subspace $\mathcal{S}_j$ is spanned by $\{\ket{\gamma_0^j},\ldots,\ket{\gamma_L^j}\}$. As shown in~\cite{aharonov2008adiabatic}, intuitively, the transition rules between the states $\ket{\gamma_0^j},\ldots,\ket{\gamma_L^j}$ can also be understood as follows. 
\begin{itemize}
\item \textbf{Downward stage}: Consider a state $\ket{\gamma^j_t}$ for some $t=(3n-1)r$. In this stage, the $n$ qudits located in the $r$-th column are in their initial phase, and the orientations of these active qudits (i.e., those in the first phase, second phase, and the flag phase) encode the initial state of the circuit at the beginning of the $r$-th round. To the left of this column, the qudits are in the dead phase, while to the right, they are in the unborn phase. The state $\ket{\gamma^j_{t+1}}$ can be obtained from $\ket{\gamma^j_{t}}$ by converting the uppermost qudit in the $r$-th column into active-phase and then applying the first gate of the $r$-th round (a one-qubit gate) to the state encoded in these active qudits. Subsequently, the state $\ket{\gamma^j_{t+2}}$ is derived from $\ket{\gamma^j_{t+1}}$ by converting the second qudit from the top in the $r$-th column into active phase, converting the frist qudit from the top in the $r$-th column into first phase, and applying the second gate of the $r$-th round (a two-qubit gate) to both this qudit and the one above it. This process continues in a similar manner until we reach the state $\ket{\gamma^j_{t+n}}$, where the entire $r$-th column consists of qudits in the second phase except the one at the bottom. These sequential steps are referred to as the downward stage. 

\item \textbf{Upward stage}: Correspondingly, there is an upward stage. In this stage, the state $\ket{\gamma^j_{t+n+1}}$ is transformed from $\ket{\gamma^j_{t+n}}$ by  shifting the bottommost qudit in the $r$-th column one position to the right. To be more precise, the bottommost qudit shifts into the dead phase, while the qudit to the right of it shifts into the active phase. Then, $\ket{\gamma^j_{t+n+2}}$ is transformed from $\ket{\gamma^j_{t+n+1}}$ by shifting the bottommost qudit into the first phase without changing its encoding and shifting the qudit above it into the marker phase. It is worth noticing that the encoded state is not changed during this process, which corresponds to the identity gate applied in the circuit. Continuing this iterative process to go up the column, we can ``copy'' the encoding of the $r$-th column into the $(r+1)$-th column.
We observe that the upward stage ends in the state $\ket{\gamma^j_{t+n+n}}=\ket{\gamma^j_{2(r+1)n}}$ which aligns with the previously described initial state of a new round.
\end{itemize}

An example of legal shapes with $n=6$ and $T=2$ is given in \cref{fig:legal-shapes}, where the shapes correspond to $t=0,1,\ldots,19$. It is worth noting that each shape has exactly $n$ qudits in the first, second, or the flag phase. Consequently, each shape contains a subspace with a dimension of $2^n$ spanned by the possible states of qudits within the first or second phase.

\begin{figure}[h!]
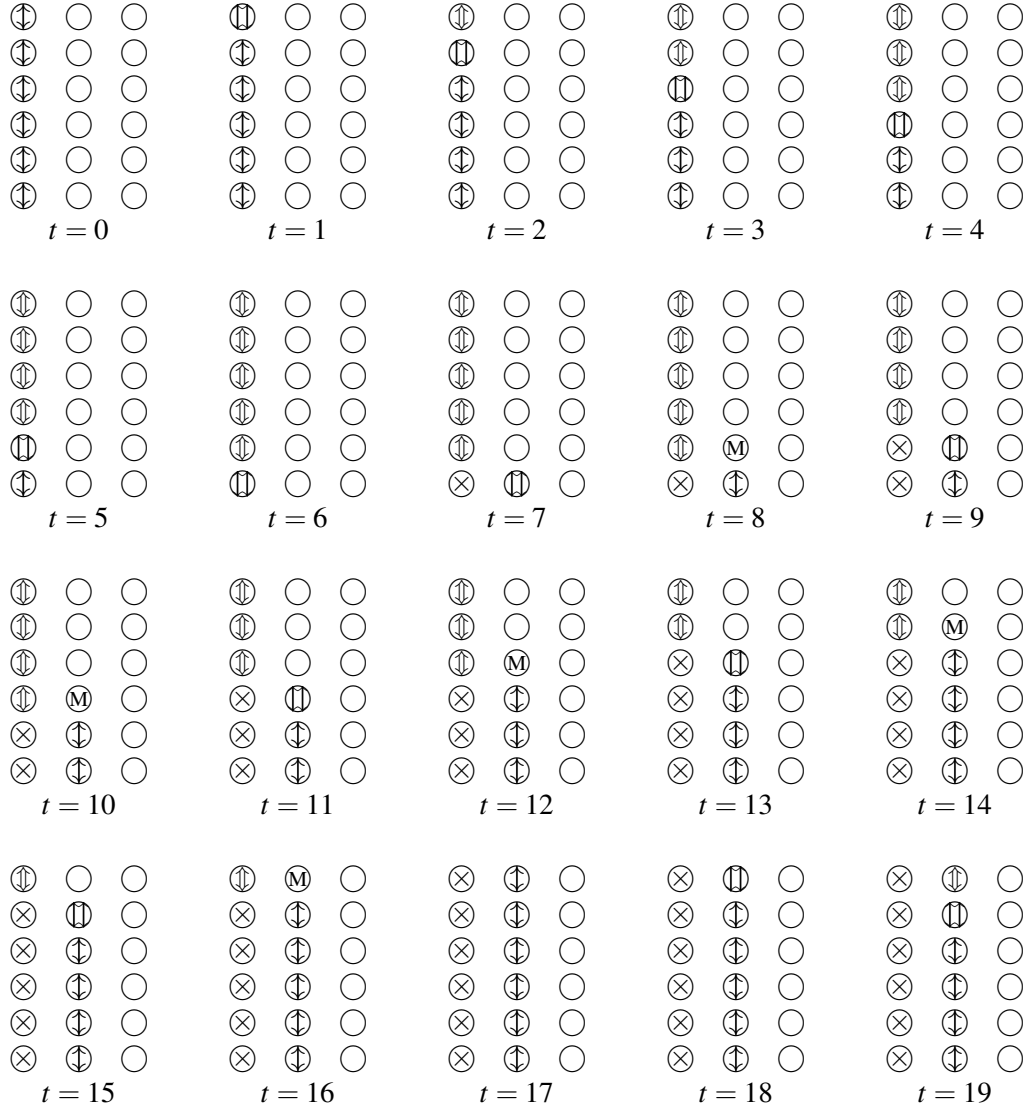

\center{{
\begin{equation*}
\begin{array}{ccccccccc}
\begin{array}{ccc}
\stateb & \statea & \statea \\
\stateb & \statea & \statea \\
\stateb & \statea & \statea \\
\stateb & \statea & \statea \\
\stateb & \statea & \statea \\
\stateb & \statea & \statea \\
\multicolumn{3}{c}{t=0}
\end{array}
& &
\begin{array}{ccc}
\statee & \statea & \statea \\
\stateb & \statea & \statea \\
\stateb & \statea & \statea \\
\stateb & \statea & \statea \\
\stateb & \statea & \statea \\
\stateb & \statea & \statea \\
\multicolumn{3}{c}{t=1}
\end{array}
& & 
\begin{array}{ccc}
\statec & \statea & \statea \\
\statee & \statea & \statea \\
\stateb & \statea & \statea \\
\stateb & \statea & \statea \\
\stateb & \statea & \statea \\
\stateb & \statea & \statea \\
\multicolumn{3}{c}{t=2}
\end{array} & &
\begin{array}{ccc}
\statec & \statea & \statea \\
\statec & \statea & \statea \\
\statee & \statea & \statea \\
\stateb & \statea & \statea \\
\stateb & \statea & \statea \\
\stateb & \statea & \statea \\
\multicolumn{3}{c}{t=3}
\end{array}
& &
\begin{array}{ccc}
\statec & \statea & \statea \\
\statec & \statea & \statea \\
\statec & \statea & \statea \\
\statee & \statea & \statea \\
\stateb & \statea & \statea \\
\stateb & \statea & \statea \\
\multicolumn{3}{c}{t=4}
\end{array}\\\nonumber
&&&\\\nonumber

\begin{array}{ccc}
\statec & \statea & \statea \\
\statec & \statea & \statea \\
\statec & \statea & \statea \\
\statec & \statea & \statea \\
\statee & \statea & \statea \\
\stateb & \statea & \statea \\
\multicolumn{3}{c}{t=5}
\end{array}& & 

\begin{array}{ccc}
\statec & \statea & \statea \\
\statec & \statea & \statea \\
\statec & \statea & \statea \\
\statec & \statea & \statea \\
\statec & \statea & \statea \\
\statee & \statea & \statea \\
\multicolumn{3}{c}{t=6}
\end{array} & &

\begin{array}{ccc}
\statec & \statea & \statea \\
\statec & \statea & \statea \\
\statec & \statea & \statea \\
\statec & \statea & \statea \\
\statec & \statea & \statea \\
\stated & \statee & \statea \\
\multicolumn{3}{c}{t=7}
\end{array}& &
\begin{array}{ccc}
\statec & \statea & \statea \\
\statec & \statea & \statea \\
\statec & \statea & \statea \\
\statec & \statea & \statea \\
\statec & \statem & \statea \\
\stated & \stateb & \statea \\
\multicolumn{3}{c}{t=8}
\end{array}
& &
\begin{array}{ccc}
\statec & \statea & \statea \\
\statec & \statea & \statea \\
\statec & \statea & \statea \\
\statec & \statea & \statea \\
\stated & \statee & \statea \\
\stated & \stateb & \statea \\
\multicolumn{3}{c}{t=9}
\end{array}\\\nonumber
&&&\\\nonumber

\begin{array}{ccc}
\statec & \statea & \statea \\
\statec & \statea & \statea \\
\statec & \statea & \statea \\
\statec & \statem & \statea \\
\stated & \stateb & \statea \\
\stated & \stateb & \statea \\
\multicolumn{3}{c}{t=10}
\end{array}
 & &
 \begin{array}{ccc}
\statec & \statea & \statea \\
\statec & \statea & \statea \\
\statec & \statea & \statea \\
\stated & \statee & \statea \\
\stated & \stateb & \statea \\
\stated & \stateb & \statea \\
\multicolumn{3}{c}{t=11}
\end{array}
& &
\begin{array}{ccc}
\statec & \statea & \statea \\
\statec & \statea & \statea \\
\statec & \statem & \statea \\
\stated & \stateb & \statea \\
\stated & \stateb & \statea \\
\stated & \stateb & \statea \\
\multicolumn{3}{c}{t=12}
\end{array}& &

\begin{array}{ccc}
\statec & \statea & \statea \\
\statec & \statea & \statea \\
\stated & \statee & \statea \\
\stated & \stateb & \statea \\
\stated & \stateb & \statea \\
\stated & \stateb & \statea \\
\multicolumn{3}{c}{t=13}
\end{array}& & 
\begin{array}{ccc}
\statec & \statea & \statea \\
\statec & \statem & \statea \\
\stated & \stateb & \statea \\
\stated & \stateb & \statea \\
\stated & \stateb & \statea \\
\stated & \stateb & \statea \\
\multicolumn{3}{c}{t=14}
\end{array}
\\\nonumber
&&&\\\nonumber

\begin{array}{ccc}
\statec & \statea & \statea \\
\stated & \statee & \statea \\
\stated & \stateb & \statea \\
\stated & \stateb & \statea \\
\stated & \stateb & \statea \\
\stated & \stateb & \statea \\
\multicolumn{3}{c}{t=15}
\end{array} & &
\begin{array}{ccc}
\statec & \statem & \statea \\
\stated & \stateb & \statea \\
\stated & \stateb & \statea \\
\stated & \stateb & \statea \\
\stated & \stateb & \statea \\
\stated & \stateb & \statea \\
\multicolumn{3}{c}{t=16}
\end{array}
& &
\begin{array}{ccc}
\stated & \stateb & \statea\\
\stated & \stateb & \statea\\
\stated & \stateb & \statea\\
\stated & \stateb & \statea\\
\stated & \stateb & \statea\\
\stated & \stateb & \statea \\
\multicolumn{3}{c}{t=17}
\end{array}& & 
\begin{array}{ccc}
\stated & \statee & \statea\\
\stated & \stateb & \statea\\
\stated & \stateb & \statea\\
\stated & \stateb & \statea\\
\stated & \stateb & \statea\\
\stated & \stateb & \statea \\
\multicolumn{3}{c}{t=18}
\end{array}& & 
\begin{array}{ccc}
\stated & \statec & \statea \\
\stated & \statee & \statea \\
\stated & \stateb & \statea \\
\stated & \stateb & \statea \\
\stated & \stateb & \statea \\
\stated & \stateb & \statea \\
\multicolumn{3}{c}{t=19}
\end{array}
\\\nonumber
&&&\\\nonumber
\end{array}\nonumber
\end{equation*}
\vspace{-13mm}
}}
\caption{An example of clock states in legal shapes with $n=6$, $T=2$.}
\label{fig:legal-shapes}
\end{figure}

\paragraph{Constructing the corresponding Hamiltonian.}
\label{embedding}
Based on the construction in~\cite{aharonov2008adiabatic}, in the remaining part of this section, we present a two dimensional Hamiltonian on the grids whose ground state is the history state of the evolution
\begin{align}
\ket{\psi_{\mathrm{ground}}}=\frac{1}{\sqrt{L+1}}\sum_{t=0}^L\ket{\gamma_t^0}.\label{eqn:2d-groundstate}
\end{align}
Following~\cite{aharonov2008adiabatic}, our 2D Hamiltonian $H_{2D}$ contains three parts:
\begin{align}
H_{2D}\equiv\frac{1}{2}\sum_{\ell=1}^L{H_\ell}
+ H_{\mathrm{input}} +
J \cdot H_{\mathrm{clock}},\label{eqn:2D-Hamiltonian-sum}
\end{align}
where $J=\eps^{-2}\cdot L^6$ with $\epsilon=\frac{1}{\poly n}$. Conceptually, each Hamiltonian term \cref{eqn:2D-Hamiltonian-sum} imparts an ``energy penalty'' upon any state that fails to meet the required properties of the ground state. In particular, $H_{\mathrm{clock}}$ guarantees that the ground state is a superposition over a set of clock states in legal shapes. Then, $H_{\mathrm{input}}$ verifies that the component in the first legal shapes encodes the input of the circuit, i.e., $\ket{0}$, in its first column of qudits. Finally, the propogation $H_{\ell}$ ensures the correctness of transitions between consecutive legal shapes, meaning that the encoded quantum state evolves in accordance with the corresponding gate applied in the circuit during that time step. Formally, as in~\cite{aharonov2008adiabatic}, we let
\begin{align*}
H_{\mathrm{input}} 
\deq \sum_{i=1}^{n}{(\ket{\statebb}\bra{\statebb})_{i,1}},
\end{align*}
where the indices indicate that the corresponding term acts on the $i$-th row and the first column of the 2D grid. As for $H_{\mathrm{clock}}$, similar to the case in~\cite{aharonov2008adiabatic}, it is worth noting that the legal shapes defined in \cref{def:legal-shapes} can be verified in a {\em $2$-local way}, which allows us to define a $2$-local Hamiltonian such that its ground space contains only superpositions of clock states in legal shapes.

\begin{table}[h!]
\center{
\begin{tabular}{|l|l|}
\hline
Forbidden  & Guarantees that \\ \hline \hline
   $\statea\stateb, \statea\statec, \statea\stated, \statea\statee, \statea\statem$& $\statea$ is to the right of all other qubits\\ \hline
   $\statea\stated, \stateb\stated, \statec\stated, \statee\stated,\statem\stated$& $\stated$ is to the left of all other qubits\\ \hline
   $\statea\stated, \stated\statea$& $\statea$ and $\stated$ are not horizontally adjacent\\ \hline
   $\stateb\stateb$, $\stateb\statec$, $\stateb\statee$&\\
 $\statec\stateb$, $\statec\statec$, $\statec\statee$ & only one of $\stateb$, $\statec$, or $\statee$ per row \\
    $\statee\stateb$, $\statee\statec$, $\statee\statee$ &  \\  \hline
   $\ontop{\statea}{\statec}, \ontop{\stateb}{\statec}, \ontop{\stated}{\statec}, \ontop{\statee}{\statec}$& only $\statec$ above $\statec$\\ \hline
   $\ontop{\stateb}{\statea}, \ontop{\stateb}{\statec}, \ontop{\stateb}\stated, \ontop{\stateb}{\statee}$& only $\stateb$ below $\stateb$\\ \hline
   $\ontop{\statea}{\stated}, \ontop{\stated}{\statea}$& $\statea$ and $\stated$ are not vertically adjacent\\ \hline
   $\ontop{\statec}{\statea}, \ontop{\statee}{\statea}, \ontop{\stated}{\stateb}, \ontop{\stated}{\statee}$& no $\statea$ below $\statec$ and $\statee$; no $\stateb$ and $\statee$ below $\stated$\\ \hline
   $\ontop{\statea}{\stateb}$& no $\statea$ above $\stateb$\\ \hline
   $\ontop{\stateb}{\statem},\ontop{\statec}{\statem},\ontop{\stated}{\statem},\ontop{\statee}{\statem}$& only $\statea$ above $\statem$\\ \hline
    $\ontop{\statem}{\statea},\ontop{\statem}{\statec},\ontop{\statem}{\stated},\ontop{\statem}{\statee}$& only $\stateb$ below $\statem$\\ \hline
    $\stateb\statem,\stated\statem,\statee\statem,\statem\statem$& only $\statec$ to the left of $\statem$\\ \hline
    $\statem\stateb,\statem\statec,\statem\statee$& only $\statea$ to the right of $\statem$\\ \hline
\end{tabular}
}
\caption{Local rules for basis state to be in legal shape.}
\label{tab:rules}
\end{table}

Similarly to~\cite[Claim 4.2]{aharonov2008adiabatic}, we prove the following result.
\begin{lemma}\label{lem:rules}
A clock state is in legal shape defined in \cref{def:legal-shapes} if and only if it contains none of the forbidden configurations in \cref{tab:rules}.
\end{lemma}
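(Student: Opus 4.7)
The plan is to prove both directions separately. The forward direction (legal shape implies no forbidden configurations) is a direct case check: I would enumerate each of the three cases in Definition~\ref{def:legal-shapes} and verify that no pair of horizontally or vertically adjacent qudits in the resulting clock state instantiates any of the forbidden patterns in Table~\ref{tab:rules}. Because each legal shape has a rigid block structure---some number of fully $\stated$ columns on the left, possibly one ``active'' column whose qudits follow a specific vertical sequence in $\{\stateb,\statec,\statee,\statem\}$, and fully $\statea$ columns on the right---each of the $\approx 30$ forbidden patterns can be excluded by inspection in each case.

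For the backward direction, the strategy is to use the rules to progressively pin down the global structure. First I would use the four ``positional'' rule families (``$\statea$ to the right of all others'', ``$\stated$ to the left of all others'', and the two non-adjacency rules between $\statea$ and $\stated$) to show that each row has the form (possibly empty) $\stated^{r_i}\,\star^{\,\ast}\,\statea^{s_i}$ with $r_i+s_i < T+1$, and moreover that the $\statea$/$\stated$ boundaries in fact align into straight \emph{columns} across all rows. For the latter alignment, the vertical rules ``no $\statea$ below $\statec,\statee$'', ``no $\stateb,\statee$ below $\stated$'', and ``no $\statea$ above $\stateb$'' force the $\statea$-region to be an upper-right rectangular block and the $\stated$-region to be a lower-left rectangular block, once combined with the horizontal positional rules. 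This reduces the problem to analyzing the ``boundary'' between these two blocks.

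Next I would analyze the non-dead/non-unborn cells. The rule ``only one of $\stateb,\statec,\statee$ per row'' together with the counting implied by the first step forces that in each row the non-dead/non-unborn segment has length at most one, except where the marker state $\statem$ appears. The vertical rules ``only $\statec$ above $\statec$'' and ``only $\stateb$ below $\stateb$'' then force the $\statec$-cells to form a contiguous upper segment within a single column and the $\stateb$-cells to form a contiguous lower segment (possibly in a different column), with $\statee$ and/or $\statem$ mediating between them. Handling the marker state is the delicate part: the four marker-specific rules in Table~\ref{tab:rules} constrain $\statem$ to have $\statea$ directly above, $\stateb$ directly below, $\statec$ directly to its left, and $\statea$ directly to its right. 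This uniquely determines where $\statem$ can sit relative to the $\statec$- and $\stateb$-blocks and forces the configuration into the third case of Definition~\ref{def:legal-shapes}. When no $\statem$ appears, similar reasoning for $\statee$ (using ``only one of $\stateb,\statec,\statee$ per row'' plus the $\statec$/$\stateb$ column rules) produces either the first or second case depending on whether the active $\statee$ qudit sits in the same column as the $\statec$-block or one column to the right.

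The main obstacle I foresee is the final case analysis tying together the $\statee$ and $\statem$ positions with the horizontal location of the active column, since there are several sub-cases (downward stage entering a column, downward stage completing a column, and upward stage copying the encoding). A clean way to organize this is to first fix the unique row where an $\statee$ or $\statem$ cell appears (existence of exactly one such row follows from the rules already invoked), then read off which of the three cases of Definition~\ref{def:legal-shapes} the configuration matches by examining the phase of its four neighbors. Once this bookkeeping is done, every remaining structural requirement---such as the $\statec$-block being exactly the top $n-k$ qudits of the appropriate column---follows directly from the column-confinement rules established earlier.
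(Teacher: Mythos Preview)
Your overall two-direction plan mirrors the paper's, and the forward direction is fine. But there is a concrete error in your backward-direction strategy: the claim that the forbidden configurations force the $\stated$-region and the $\statea$-region to be \emph{rectangular blocks} with boundaries that ``align into straight columns across all rows'' is false. In legal shapes belonging to cases~2 and~3 of Definition~\ref{def:legal-shapes} (the upward stage), the $(r{+}1)$-th column is split between dead qudits at the bottom and second-phase qudits at the top, so the dead region is an L-shape, not a rectangle; likewise the unborn region in column $r{+}2$ is only partially filled. You can see this already at $t=8,11,12$ in Figure~\ref{fig:legal-shapes}. The vertical rules you cite enforce monotonicity of these regions (a staircase boundary), not alignment into a single column, so the reduction ``this leaves us with the boundary between two blocks'' does not go through as stated.

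The paper's argument avoids this pitfall by never asserting rectangularity. Instead it first pins down each \emph{row} to the pattern $\stated^*[\stateb,\statec,\statee,\text{ or }\statec\statem]\statea^*$ (using the horizontal rules), and then independently pins down each \emph{column} to one of the four vertical patterns $\statec^*[\statee]\stateb^*$, $\statec^*\stated^*$, $\statea^*[\statee]\stateb^*$, $\statea^*[\statem]\stateb^*$ (using the vertical rules). These row and column constraints together are compatible with the staircase structure and directly yield the three cases of Definition~\ref{def:legal-shapes}. Your later steps (the $\statec$-above-$\statec$/$\stateb$-below-$\stateb$ reasoning and the marker-neighbor analysis) are essentially the column-pattern half of this argument, so the fix is simply to replace the rectangular-block step with the row-pattern characterization and drop the alignment claim.
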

\begin{proof}
The proof is similar to the proof of~\cite[Claim 4.2]{aharonov2008adiabatic}. First, we can check that any clock state in legal shape contains none of the forbidden configurations. Conversely, for any clock state that contains none of these configurations, each of its rows admits to the pattern $\stated^*[\stateb,\statec,\statee,\statec\statem]\statea^*$. In other words, it starts with a sequence of zero or more $\stated$ from the left, then incorporates either $\stateb$, $\statec$, $\statee$, or $\statec\statem$, and ends with a sequence of zero or more $\statea$ on the right. As for the vertical direction, the columns are in one of three distinct formats when read from top to bottom: $\statec^*[\statee]\stateb^*$, $\statec^*\stated^*$, $\statea^*[\statee]\stateb^*$, or $\statea^*[\statem]\stateb^*$, and such a shape must satisfy~\cref{def:legal-shapes}.
\end{proof}
Based on \cref{lem:rules}, we can define a two-local
nearest-neighbor Hamiltonian that checks whether a clock state is in legal shape or not. For example, prohibits a qudit at location $(i,j)$ in state $\statea$ to be positioned to the left of a qudit at location $(i, j+1)$ in state $\stated$. Then, the corresponding Hamiltonian term would be $(\ket{\statea,\stated}\bra{\statea,\stated})_{(i,j),(i,j+1)}$.
Summing over all the forbidden configurations of \cref{tab:rules} and over all relevant pairs of particles, we have
\begin{align}
H_{\mathrm{clock}}\equiv \sum_{r\in \mathrm{rules}} H_{r}.\label{eqn:H_clock}
\end{align}
Next, we turn to the propagation term $H_{\ell}$ that guarantees the correctness of transitions between clock states in consecutive legal shapes. Observe that according to \cref{def:legal-shapes}, consecutive legal shapes only differ in at most two adjacent locations, which makes it possible to be verified using two-local nearest-neighbour Hamiltonians. In particular, the structure of $H_{\ell}$ takes on different forms depending on whether $\ell$ is in the downward phase (i.e., is of the form $(3n-1)r+k$ for $1\le k\le n$) or the upward phase (i.e., is of the form $(3n-1)r+n+k$ for $1\le k< 2n-1$). For the downward stage, $H_\ell$ checks that a gate is
applied correctly. For $\ell=(3n-1)r+k$ and $2<k\leq n$, as the same in~\cite{aharonov2008adiabatic}, we define 
\begin{align*}
H_\ell\equiv \left(\begin{array}{cc} 0& -U_\ell\\ -U_\ell^{\dagger}&0
\end{array}\right) &+
\left(
\stackketbra{\stateea}{\stateba} +
\stackketbra{\stateea}{\statebb} +
\stackketbra{\stateeb}{\stateba} +
\stackketbra{\stateeb}{\statebb}
\right)\ontop{\rm_{k-1,r}}{\rm_{k,r}} \\
&+ \left(
\stackketbra{\stateea}{\stateba} +
\stackketbra{\stateea}{\statebb} +
\stackketbra{\stateeb}{\stateba} +
\stackketbra{\stateeb}{\statebb}
\right)\ontop{\rm_{k,r}}{\rm_{k+1,r}},\numberthis\label{eqn:H_ell}
\end{align*}
where the stacked notation 
$\stackketbra{\cdot}{
\cdot}$
denotes a 2-local Hamiltonian term acting on nearest neighbors as indicated by the indices outside the parentheses. Moreover, the first term in \cref{eqn:H_ell} represents a Hamiltonian that acts on the two particles in positions $(k,r)$ and $(k+1,r)$ and is restricted to the subspace spanned by
\begin{align*}
\stackket{\stateea}{\stateba} ~~ \stackket{\stateea}{\statebb} ~~ \stackket{\stateeb}{\stateba} ~~ \stackket{\stateeb}{\statebb}~~~~
   \stackket{\stateca}{\stateea} ~~ \stackket{\stateca}{\stateeb} ~~ \stackket{\statecb}{\stateea} ~~ \stackket{\statecb}{\stateeb}
\end{align*}
For the special cases where $k=1$ or $n$, we set
\begin{align*}
H_{(3n-1)r+1}\equiv \left(\begin{array}{cc} 0& -U_{(3n-1)r+1}\\ -U_{(3n-1)r+1}^{\dagger}&0
\end{array}\right) &+
\left(
\ketbra{\stateba}{\stateba} +
\ketbra{\statebb}{\statebb}
\right)_{1,r} \\
&+ \left(
\stackketbra{\stateea}{\stateba} +
\stackketbra{\stateea}{\statebb} +
\stackketbra{\stateeb}{\stateba} +
\stackketbra{\stateeb}{\statebb}
\right)\ontop{\rm_{1,r}}{\rm_{2,r}},
\end{align*}
and
\begin{align*}
H_{(3n-1)r+n}\equiv \left(\begin{array}{cc} 0& -U_{(3n-1)r+n}\\ -U_{(3n-1)r+n}^{\dagger}&0
\end{array}\right) &+
\left(
\stackketbra{\stateea}{\stateba} +
\stackketbra{\stateea}{\statebb} +
\stackketbra{\stateeb}{\stateba} +
\stackketbra{\stateeb}{\statebb}
\right)\ontop{\rm_{n-1,r}}{\rm_{n,r}} \\
&+ \left(
\stackketbra{\stateca}{\stateea} +
\stackketbra{\stateca}{\stateeb} +
\stackketbra{\statecb}{\stateea} +
\stackketbra{\statecb}{\stateeb}
\right)\ontop{\rm_{n-1,r}}{\rm_{n,r}}.
\end{align*}
As for the upward stage, we have $\ell=(3n-1)r+n+k$ for some $1\leq k<2n-1$. If $1<k<2n-1$ and is an odd number, in this step the qudit in the marker phase $\ket{\statem}$ in the $i$-th row with $i=n-(k-1)/2$ will acquire the value of the qudit on the left of it. Then, the corresponding Hamiltonian term $H_\ell$ can be written as
\begin{align*}
H_\ell&\equiv \stackketbra{\statem}{\stateba}  \ontop{\rm_{i,{r+1}}}{\rm_{i+1,{r+1}}}+  \stackketbra{\statea}{\stateea}  \ontop{\rm_{i-1,r+1}}{\rm_{i,r+1}}-  \left(\ketbra{\stateca,\statem}{\stated,\stateea} +  \ketbra{\stated,\stateea}{\stateca,\statem}\right)_{\rm(i,r)(i,r+1)}\\
&\,+\stackketbra{\statem}{\statebb} \ontop{\rm_{i,r+1}}{\rm _{i+1,r+1}} +
 \stackketbra{\statea}{\stateeb}\ontop{\rm_{i-1,r+1}}{\rm_{i,r+1}}
- \left(\ketbra{\statecb,\statem}{\stated,\stateeb} + \ketbra{\stated,\stateeb}{\statecb,\statem}\right)_{\rm(i,r)(i,r+1)},
\end{align*}
where the first line corresponds to changing the state $\ket{\stateca,\statem}$ to $\ket{\statea,\stateea}$, and the second line corresponds to changing the state $\ket{\statecb,\statem}$ to $\ket{\statea,\stateeb}$.

If $k$ is an even number, in this step the qudit in the flag phase $\ket{\statee}$ in the $i$-th row with $i=n+1-k/2$ will transform into the first phase $\stateb$, while the qudit above it will transform into the marker phase $\statem$ from the unborn phase $\ket{\statea}$. Then, the corresponding Hamiltonian term $H_\ell$ can be written as
\begin{align*}
H_\ell&\equiv \stackketbra{\statem}{\stateba}  \ontop{\rm_{i-1,{r+1}}}{\rm_{i,{r+1}}}+  \stackketbra{\statea}{\stateea}  \ontop{\rm_{i-1,r+1}}{\rm_{i,r+1}}-\left(
\stackket{\statem}{\stateba}\stackbra{\statea}{\stateea}+\stackket{\statea}{\stateea}\stackbra{\statem}{\stateba}\right)\ontop{\rm_{i-1,r+1}}{\rm_{i,r+1}}\\
&\,+\stackketbra{\statem}{\statebb} \ontop{\rm_{i-1,r+1}}{\rm _{i,r+1}} +
 \stackketbra{\statea}{\stateeb}\ontop{\rm_{i-1,r+1}}{\rm_{i,r+1}}
-\left(\stackket{\statem}{\statebb}\stackbra{\statea}{\stateeb}+\stackket{\statea}{\stateeb}\stackbra{\statem}{\statebb}\right)\ontop{\rm_{i-1,r+1}}{\rm_{i,r+1}}.
\end{align*}
For the special case where $k=1$ or $k=2n-1$, we set
\begin{align*}
H_{(3n-1)r+n+1} &\equiv
 \stackketbra{\stateca}{\stateea}\ontop{\rm_{n,r-1}}{\rm_{n,r}}+\stackketbra{\statecb}{\stateea}\ontop{\rm_{n,r-1}}{\rm_{n,r}}
 +  \stackketbra{\statea}{\stateea}\ontop{\rm_{n-1,r+1}}{\rm_{n,r+1}}
 \\
&\, -\left(\ketbra{\stateea,\statea}{\stated,\stateea} +  \ketbra{\stated,\stateea}{\stateea,\statea}\right)_{\rm(n,r)(n,r+1)}
 \\
&\, + 
 \stackketbra{\stateca}{\stateeb}\ontop{\rm_{n,r-1}}{\rm_{n,r}}+\stackketbra{\statecb}{\stateeb}\ontop{\rm_{n,r-1}}{\rm_{n,r}}
 +
 \stackketbra{\statea}{\stateeb}\ontop{\rm_{n-1,r+1}}{\rm_{n,r+1}}\\
&\,-\left(\ketbra{\stateeb,\statea}{\stated,\stateeb} + \ketbra{\stated,\stateeb}{\stateeb,\statea}\right)_{\rm(n,r)(n,r+1)},
\end{align*}
and
\begin{align*}
H_{(3n-1)(r+1)} &\equiv
 \stackketbra{\stateca}{\stated}\ontop{\rm_{1,r}}{\rm_{2,r}}+  \ketbra{\stateba}{\stateba}_{1,r+1}
 -  \left(\ketbra{\stateca,\statem}{\stated,\stateba} +  \ketbra{\stated,\stateba}{\stateca,\statem}\right)_{\rm(1,r)(1,r+1)}
 \\
&\,+\stackketbra{\statecb}{\stated}\ontop{\rm_{1,r}}{\rm _{2,r}} +
 \ketbra{\statebb}{\statebb}_{1,r+1}
- \left(\ketbra{\statecb,\statem}{\stated,\statebb} + \ketbra{\stated,\statebb}{\statecb,\statem}\right)_{\rm(1,r)(1,r+1)}.
\end{align*}
Based on the definition of $H_{\mathrm{input}}$, $H_{\mathrm{clock}}$, and $H_{\ell}$, we present the following result, which is a variant of \cite[Lemma 4.7]{aharonov2008adiabatic}.

\begin{lemma} \label{lem:groundstates_all_close}
For any $\epsilon \geq 1/\poly(n)$, the Hamiltonian $H_{2D}$ defined in \cref{eqn:2D-Hamiltonian-sum} on the $n\times (T+1)$ grid of qudits satisfies the following.
\begin{itemize}
\item The history state of the evolution $\ket{\psi_{\mathrm{ground}}}$ is a ground state of $H_{2D}$. Moreover, any ground state of $H_{2D}$ is $\epsilon$-close to $\ket{\psi_{\mathrm{ground}}}$. 
\item The spectral gap of $H_{2D}$ is at least $\Omega((nT)^{-3})=1/\poly n$.
\end{itemize}\label{lem:H2D-ground-state}
\end{lemma}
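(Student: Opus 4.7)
The plan is to follow the standard strategy from~\cite{aharonov2008adiabatic} for analyzing circuit-to-Hamiltonian constructions of this type, adapted to accommodate the new flag $\ket{\statee}$ and marker $\ket{\statem}$ phases we introduced. The proof naturally splits into three stages: analyzing $H_{\mathrm{clock}}$, then the propagation term $\sum_\ell H_\ell$ on the legal-shape subspace, and finally combining these with $H_{\mathrm{input}}$ via a Kitaev-style projection argument.

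First I would analyze $H_{\mathrm{clock}}$ on its own. By \cref{lem:rules}, a computational basis state avoids all forbidden configurations in \cref{tab:rules} if and only if it is a clock state in legal shape. Since $H_{\mathrm{clock}}$ is a sum of projectors onto the forbidden configurations, its null space is precisely $\mathcal{S}$ (the $(L+1)2^n$-dimensional span of all legal shapes), and every non-null eigenvalue is at least $1$. Multiplying by $J = \eps^{-2} L^6$ then gives a gap of at least $J$ above $\mathcal{S}$, which will dominate all other terms and justifies restricting attention to $\mathcal{S}$ via the Projection Lemma of~\cite{kempe2006complexity}.

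Next I would analyze the propagation Hamiltonian $\tfrac{1}{2}\sum_{\ell=1}^L H_\ell$ restricted to $\mathcal{S}$. The key structural observation is that each $H_\ell$ preserves the decomposition $\mathcal{S} = \bigoplus_{j=0}^{2^n-1} \mathcal{S}_j$ into invariant subspaces, because on $\mathcal{S}_j$ the term $H_\ell$ acts precisely as $\tfrac{1}{2}(\ketbra{\gamma^j_\ell}{\gamma^j_\ell} + \ketbra{\gamma^j_{\ell-1}}{\gamma^j_{\ell-1}} - \ketbra{\gamma^j_\ell}{\gamma^j_{\ell-1}} - \ketbra{\gamma^j_{\ell-1}}{\gamma^j_\ell})$. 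Verifying this cleanly is where I expect the bulk of the technical work to lie: I must carefully check each of the newly defined upward-stage Hamiltonian terms (involving the marker $\statem$ and the even/odd cases for $k$) against the legal-shape evolution dictated by \cref{def:legal-shapes}, and confirm they correctly implement the identity on the encoded data while propagating the clock. Once this is done, on each $\mathcal{S}_j$ the propagation term is unitarily equivalent to the Laplacian of the path graph on $L+1$ vertices, whose unique zero eigenvector is $\frac{1}{\sqrt{L+1}}\sum_t \ket{\gamma_t^j}$ and whose spectral gap is $1 - \cos(\pi/(L+1)) = \Omega(L^{-2})$.

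Finally I would put the pieces together. Within the null space of the propagation term in $\mathcal{S}$, $H_{\mathrm{input}} = \sum_i (\ketbra{\statebb}{\statebb})_{i,1}$ acts as a projector that penalizes any basis label $j \neq 0$ by at least $\frac{1}{L+1}$ (since exactly the $t=0$ component of each $\ket{\gamma^j_\ell}$'s uniform superposition contains the input-encoding qudits in the first phase). Combining the Laplacian gap $\Omega(L^{-2})$ on each $\mathcal{S}_j$ with the $\Omega(L^{-1})$ penalty from $H_{\mathrm{input}}$ selecting $j=0$, and invoking Kitaev's geometrical lemma~\cite{kitaev}, the overall gap on $\mathcal{S}$ is $\Omega(L^{-3}) = \Omega((nT)^{-3})$. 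The Projection Lemma with $J$ chosen as above~\cite{kempe2006complexity} then transfers this gap (up to lower-order corrections) to all of $H_{2D}$, and simultaneously guarantees that any true ground state of $H_{2D}$ lies within $\eps$ in trace distance of the unique zero-energy state on $\mathcal{S}$, which by construction is $\ket{\psi_{\mathrm{ground}}} = \frac{1}{\sqrt{L+1}}\sum_t \ket{\gamma^0_t}$. The choice $J = \eps^{-2} L^6$ makes the perturbative error polynomially small in $\eps$, yielding both claims of the lemma.
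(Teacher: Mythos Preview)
Your proposal is correct and follows essentially the same approach as the paper: the paper simply states that the proof ``essentially follows the proof of~\cite[Lemma 4.7]{aharonov2008adiabatic},'' and what you have written is exactly an outline of that standard argument (clock nullspace, path-Laplacian analysis of propagation on~$\mathcal{S}$, Kitaev's geometric lemma combined with~$H_{\mathrm{input}}$, then the projection lemma with the large prefactor~$J$), with the appropriate note that the new $\statee/\statem$ transition terms must be checked by hand.
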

The proof of \cref{lem:H2D-ground-state} essentially follows the proof of \cite[Lemma 4.7]{aharonov2008adiabatic}.

\subsubsection{Entanglement properties of the ground state}
In this section, we analyze the entanglement properties of the state $\ket{\psi_{\mathrm{ground}}}$ defined in~\cref{eqn:2d-groundstate}, which is the ground state of the 2D Hamiltonian $H_{2D}$ defined in \cref{eqn:2D-Hamiltonian-sum}. We begin by presenting some definitions and technical lemmas that are useful.

\begin{definition}[Cutwise orthogonality]
For any $K > 0$, let $\ket{\psi_1}$ and $\ket{\psi_2}$ be two $K$-qudit states, and consider any bipartite cut among these $K$ qudits. We denote the density matrices obtained by tracing out the right parts of $\ket{\psi_1}$ and $\ket{\psi_2}$ as $\rho_1^{(A)}$ and $\rho_2^{(A)}$, respectively. Similarly, we define $\rho_1^{(B)}$ and $\rho_2^{(B)}$. We define $\ket{\psi_1}$ and $\ket{\psi_2}$ to be \textit{cutwise orthogonal} with respect to this bipartite cut if the following condition holds:
\begin{align*}
\rho_1^{(A)}\rho_2^{(A)}=\rho_1^{(B)}\rho_2^{(B)}=0.
\end{align*}\label{def:cutwise-orthogonality}
\end{definition}

\noindent The next lemma demonstrates the entanglement properties of a set of mutually cutwise orthogonal states.

\begin{lemma}
For any $m,K>0$, suppose there are $m$ $K$-qudit states $\ket{\psi_1},\ldots,\ket{\psi_m}$ and there is a bipartite cut among these $K$ qudits into subsystems $(A, B)$.
Then for any $\alpha_1,\ldots,\alpha_m$ with $\sum_i\alpha_i^2=1$, the state
\begin{align*}
\ket{\Psi}=\alpha_1\ket{\psi_1}+\cdots+\alpha_m\ket{\psi_m}
\end{align*}
satisfies 
\begin{align*}
S(\Psi_A)=\sum_{i=1}^m|\alpha_i|^2S((\psi_i)_A)+\sum_{i=1}^m|\alpha_i|^2\log\frac{1}{|\alpha_i|^2}\,,
\end{align*}
if $\ket{\psi_i}$ and $\ket{\psi_j}$ are cutwise orthogonal for any $i\neq j\in[m]$.
Here, $\Psi_A$ is the reduced state of $\ket{\Psi}$ on subsystem $A$, and likewise for the other states.
\label{lem:CO-entanglement}
\end{lemma}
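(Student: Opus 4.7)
The plan is to first unpack what cutwise orthogonality implies at the level of Schmidt decompositions, and then reduce the claim to the standard fact that the von Neumann entropy of a ``block-diagonal'' density matrix decomposes as a convex sum of entropies plus a Shannon entropy of the weights.

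First I would fix, for each $i$, a Schmidt decomposition $\ket{\psi_i} = \sum_k \lambda_k^{(i)} \ket{a_k^{(i)}}_A \ket{b_k^{(i)}}_B$, so that $(\psi_i)_A = \sum_k |\lambda_k^{(i)}|^2 \proj{a_k^{(i)}}$ has support $\mathrm{span}\{\ket{a_k^{(i)}}\}_k$, and analogously for $B$. The cutwise orthogonality assumption $(\psi_i)_A (\psi_j)_A = 0$ for $i \neq j$ is equivalent to the supports being mutually orthogonal, i.e., $\braket{a_k^{(i)}}{a_l^{(j)}} = 0$ for $i \neq j$ and any $k,l$; similarly $\braket{b_k^{(i)}}{b_l^{(j)}} = 0$. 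Thus the collection $\{\ket{a_k^{(i)}}\}_{i,k}$ is an orthonormal system in $A$, and likewise for $B$.

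Next I would compute the reduced state of $\ket{\Psi}$. Expanding and tracing out $B$,
\begin{equation*}
\Psi_A = \sum_{i,j} \alpha_i \alpha_j^* \,\mathrm{Tr}_B\!\left[\ket{\psi_i}\bra{\psi_j}\right] = \sum_{i,j,k,l} \alpha_i \alpha_j^* \lambda_k^{(i)} (\lambda_l^{(j)})^* \braket{b_l^{(j)}}{b_k^{(i)}} \ket{a_k^{(i)}}\bra{a_l^{(j)}}.
\end{equation*}
By the $B$-side orthogonality observed above, $\braket{b_l^{(j)}}{b_k^{(i)}} = 0$ whenever $i \neq j$, so every off-diagonal ($i \neq j$) term vanishes. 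Hence
\begin{equation*}
\Psi_A = \sum_{i} |\alpha_i|^2 (\psi_i)_A,
\end{equation*}
and by the $A$-side orthogonality the different summands have mutually orthogonal supports. Finally, I would invoke the standard identity that for a density matrix $\rho = \sum_i p_i \sigma_i$ with normalized $\sigma_i$ having pairwise orthogonal supports, $S(\rho) = H(\{p_i\}) + \sum_i p_i S(\sigma_i)$, applied with $p_i = |\alpha_i|^2$ and $\sigma_i = (\psi_i)_A$, to obtain the claimed formula.

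The only delicate step is justifying the vanishing of the cross terms, but this is immediate once one spells out cutwise orthogonality in terms of Schmidt vectors; the rest is bookkeeping plus a standard lemma on entropies of mixtures with orthogonal supports (itself a short direct diagonalization). So I do not expect a genuine obstacle here; the lemma is essentially the quantum analog of ``entropy of a mixture of distributions with disjoint supports splits into within- and between-class pieces.''
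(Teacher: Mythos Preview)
Your proposal is correct and follows essentially the same approach as the paper: both take Schmidt decompositions of the $\ket{\psi_i}$, use cutwise orthogonality to argue the Schmidt vectors from distinct $i$ are orthogonal on both sides, and conclude that $\Psi_A$ is block-diagonal with blocks $|\alpha_i|^2(\psi_i)_A$. The only cosmetic difference is that the paper computes the entropy directly from the resulting eigenvalue list $\{|\alpha_i|^2|\lambda_k^{(i)}|^2\}_{i,k}$, whereas you package the last step as the standard ``entropy of a mixture with orthogonal supports'' identity; these are the same computation.
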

\begin{proof}
For any $i\in[m]$, we express the Schmidt decomposition of $\ket{\psi_i}$ as
\begin{align*}
\ket{\psi_i}=\sum_{k=1}^{K_i}\beta_{i,k}\ket{\psi_{i,k}^{(A)}}\otimes\ket{\psi_{i,k}^{(B)}}.
\end{align*}
Consequently, for any $i\in[m]$ we have
\begin{align*}
(\psi_i)_A=\sum_{k=1}^{K_i}|\beta_{i,k}|^2\ket{\psi_{i,k}^{(A)}}\bra{\psi_{i,k}^{(A)}},\qquad(\psi_i)_B=\sum_{k=1}^{K_i}|\beta_{i,k}|^2\ket{\psi_{i,k}^{(B)}}\bra{\psi_{i,k}^{(B)}}.
\end{align*}
Given that $\ket{\psi_i}$ and $\ket{\psi_j}$ are cutwise orthogonal for any $i\neq j\in[m]$, we can then derive that
\begin{align*}
\left|\left\langle\psi_{i,k_i}^{(A)}|\psi_{j,k_j}^{(A)}\right\rangle\right|=\left|\left\langle\psi_{i,k_i}^{(B)}|\psi_{j,k_j}^{(B)}\right\rangle\right|=0,
\end{align*}
for any $k_i\in[K_i]$, $k_j\in[K_j]$, and any $i\neq j\in[m]$. Then, tracing out the right part of $\ket{\Phi}$, the resulting reduced state $\Psi_A$ is given by
\begin{align*}
\Psi_A=\sum_{i=1}^m|\alpha_i|^2\sum_{k=1}^{K_i}|\beta_{i,k}|^2\ket{\psi_{i,k}^{(A)}}\bra{\psi_{i,k}^{(A)}},
\end{align*}
where the components $\{\ket{\psi_{i,k}^{(A)}}\}$ are orthogonal to each other. Hence, we can conclude that
\begin{align*}
S(\Psi_A)&=\sum_{i=1}^m\sum_{k=1}^{K_i}|\alpha_i|^2|\beta_{i,k}|^2\log\frac{1}{|\alpha_i|^2|\beta_{i,k}|^2}\\
&=\sum_{i=1}^m|\alpha_i|^2\sum_{k=1}^{K_i}|\beta_{i,k}|^2\left(\log|\frac{1}{\alpha_i|^2}+\log\frac{1}{|\beta_{i,k}|^2}\right)\\
&=\sum_{i=1}^m|\alpha_i|^2S(\ket{\psi_i})+\sum_{i=1}^m|\alpha_i|^2\log\frac{1}{|\alpha_i|^2}. \qedhere
\end{align*}
\end{proof}

We consider a horizontal cut through the 2D grid of size $n\times (T+1)$ such that the cut has distance at least $\omega(\log n)$ to both the top and the bottom of the grids, and use $S(\cdot)$ to denote the entanglement entropy across this cut. Additionally, we slightly abuse notation by using $S((\psi_{\mathrm{output}})_A)$ to denote the entanglement entropy of the reduced output state $(\psi_{\mathrm{output}})_A$ across the corresponding cut among the $n$ qubits. Formally, if the horizontal cut through the $n\times (T+1)$ 2D grid is positioned between the $c$-th line and the $(c+1)$-th line, then $S((\psi_{\mathrm{output}})_A)$ is defined as the von Neumann entropy of $(\psi_{\mathrm{output}})_A$, which is the reduced state of $\ket{\psi_{\mathrm{output}}}$  across the cut between the $c$-th qubit and the $(c+1)$-th qubit among the $n$ qubits. With this notation, we establish the following fact.
\begin{lemma}
For any $t\in[(3n-1)R,L]$, the $t$-th legal shape $\ket{\gamma(t)}$ defined in \cref{eqn:legal-shape-state} satisfies
$S(\gamma_A(t))=S((\psi_{\mathrm{output}})_A)$.\label{fact:legal-shape-entanglement}
\end{lemma}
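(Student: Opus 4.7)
The plan is to observe that for $t \geq (3n-1)R$, the only gates applied past the first $nR$ are identities, so the ``encoded'' circuit state at step $t$ coincides with $\ket{\psi_{\mathrm{output}}}$. The remaining task is purely structural: to argue that the horizontal-cut entanglement of the clock state $\ket{\gamma(t)}$ equals the corresponding qubit-cut entanglement of this encoded state. This reduces the lemma to a careful case analysis of the legal shapes in \cref{def:legal-shapes}.

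First, I would unpack the structure of $\ket{\gamma(t)}$ in each of the three cases of \cref{def:legal-shapes} and verify two properties: (i) every row of the $n \times (T+1)$ grid contains exactly one ``information-carrying'' qudit, namely one in the first, second, or flag phase; and (ii) all remaining qudits are in fixed deterministic single-state phases from $\{\ket{\stated}, \ket{\statea}, \ket{\statem}\}$ (the marker $\ket{\statem}$ has only one orientation and so carries no information, even when present). For the downward stage (case 1) this is immediate since all $n$ active qudits sit in a single column $r+1$, one per row. For the upward stage (cases 2 and 3) the active qudits straddle two adjacent columns $r+1$ and $r+2$, but a direct row-by-row accounting — top $n-k$ active qudits in column $r+1$, a single flag or marker in the transition row, and the remaining first-phase qudits below in column $r+2$ — confirms that each row still contains exactly one active qudit.

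Second, I would make explicit how the $n$ active qudits encode the $n$-qubit state. Each active qudit lives in the 2-dimensional subspace spanned by the two orientations of its phase, and the convention of \cref{sec:2D-Hamiltonian} identifies these orientations with $\{\ket{0},\ket{1}\}$. Under this identification, the joint state of the $n$ active qudits of $\ket{\gamma(t)}$ is $U_t \cdots U_1 \ket{0^n}$ by the construction in \cref{eqn:legal-shape-state}; since $t \geq (3n-1)R$ and the gates $U_{nR+1},\ldots$ are identities by padding, this equals $\ket{\psi_{\mathrm{output}}}$. The top-to-bottom ordering of rows matches the natural qubit ordering, which one can verify by induction on $t$ from the initial legal shape $\ket{\gamma(0)}$ (whose first column literally lists $\ket{0}_1, \ldots, \ket{0}_n$) and by inspecting the transition rules, which only permute phase labels while preserving the row-to-qubit assignment.

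Combining these two steps, we can write
\begin{equation*}
\ket{\gamma(t)} \;=\; \ket{\mathrm{fixed}}_{\mathrm{inactive}} \otimes V\ket{\psi_{\mathrm{output}}}_{\mathrm{active}},
\end{equation*}
where $\ket{\mathrm{fixed}}$ is a tensor product of the single-state phases on inactive qudits and $V$ is the isometry that embeds the active-qudit qubits into their prescribed 2-dimensional phase subspaces. A horizontal cut at row $c$ partitions the active qudits into the top $c$ and the bottom $n-c$, matching the qubit cut at position $c$ of $\ket{\psi_{\mathrm{output}}}$; the inactive-qudit factor and the local isometries $V$ contribute no entanglement, so $S(\gamma_A(t)) = S((\psi_{\mathrm{output}})_A)$. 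The only real obstacle is the bookkeeping in the second property above — in particular, checking in the upward stage (case 3, with a marker present) that the marker's single-state phase indeed drops out of the entanglement calculation and that the active qudit in its row is accounted for exactly once — but this is a finite case check rather than a substantive difficulty.
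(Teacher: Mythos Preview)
Your proposal is correct and follows essentially the same approach as the paper: decompose $\ket{\gamma(t)}$ as a tensor product of fixed single-state qudits (dead, unborn, marker) and the $n$ active qudits encoding $\ket{\psi_{\mathrm{output}}}$, then observe that a horizontal cut cleanly separates both factors so that only the active-qudit encoding contributes entanglement. Your version is more explicit in the row-by-row bookkeeping (especially the marker case), whereas the paper simply asserts the product decomposition and splits the dead/unborn parts into above-cut and below-cut pieces; the underlying idea is the same.
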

\begin{proof}
For any $t\in[(3n-1)R,L]$, observe that $\ket{\gamma(t)}$ can be decomposed as follows:
\begin{align*}
\ket{\gamma(t)}=(\mathbf{dead\ qudits})\otimes(\mathbf{active\ qudits})\otimes(\mathbf{unborn\ qudits}),
\end{align*}
where $(\mathbf{dead\ qudits})$ denotes qudits in the dead phase, $(\mathbf{active\ qudits})$ denotes qudits in the first, second, or the flag phase, and $(\mathbf{unborn\ qudits})$ denotes qudits in the unborn phase. Moreover, note that with respect to the cut $(\mathbf{dead\ qudits})$ can be further divided into
\begin{align*}
(\mathbf{dead\ qudits})=(\mathbf{dead\ qudits})_{\text{above}}\otimes(\mathbf{dead\ qudits})_{\text{below}},
\end{align*}
where $(\mathbf{dead\ qudits})_{\text{above}}$ and $(\mathbf{dead\ qudits})_{\text{below}}$ denote qubits in the dead phase that is above the cut and below the cut, respectively. Similar division can also be applied to the $\mathbf{unborn\ qudits}$. Hence, we have
\begin{align*}
\ket{\gamma(t)}&=(\mathbf{dead\ qudits})_{\text{above}}\otimes (\mathbf{unborn\ qudits})_{\text{above}}\otimes(\mathbf{active\ qudits})\\
&\qquad\otimes (\mathbf{dead\ qudits})_{\text{below}}\otimes (\mathbf{unborn\ qudits})_{\text{below}},
\end{align*}
$S(\gamma_A(t))$ thus equals the von Neumann entropy of the reduced state of $\mathbf{active\ qudits}$, which equals $S((\psi_{\mathrm{output}})_A)$ given that the $\mathbf{active\ qudits}$ encodes $\psioutput$ in their orientations when $t\geq (3n-1)R$.
\end{proof}

Next, we introduce a concept called ``turn'' to analyze the entanglement structure of the ground state $\ket{\psi_{\mathrm{ground}}}$. Formally, we divide the $L+1$ clock states $\ket{\gamma(0)},\ldots,\ket{\gamma(L)}$ in legal shapes into $T+1$ turns, where in each turn, the changes of the legal shapes happen on the same side of the cut. As an example, suppose $n$ is even and the cut is between the $\frac{n}{2}$-th line and the $\left(\frac{n}{2}+1\right)$-th line. Then, the first turn consists of $\{\ket{\gamma(0)},\ldots,\ket{\gamma(n/2)}\}$, the second turn consists of $\{\ket{\gamma(n/2+1)},\ldots,\ket{\gamma(3n/2)}\}$, and so on. An example with $n=6$ and $T=2$ is given in \cref{fig:turns}. It is worth pointing out that the concept of ``turn'' is new and is crucial to our analysis regarding the entanglement structure of the ground states.

\begin{figure}[h!]
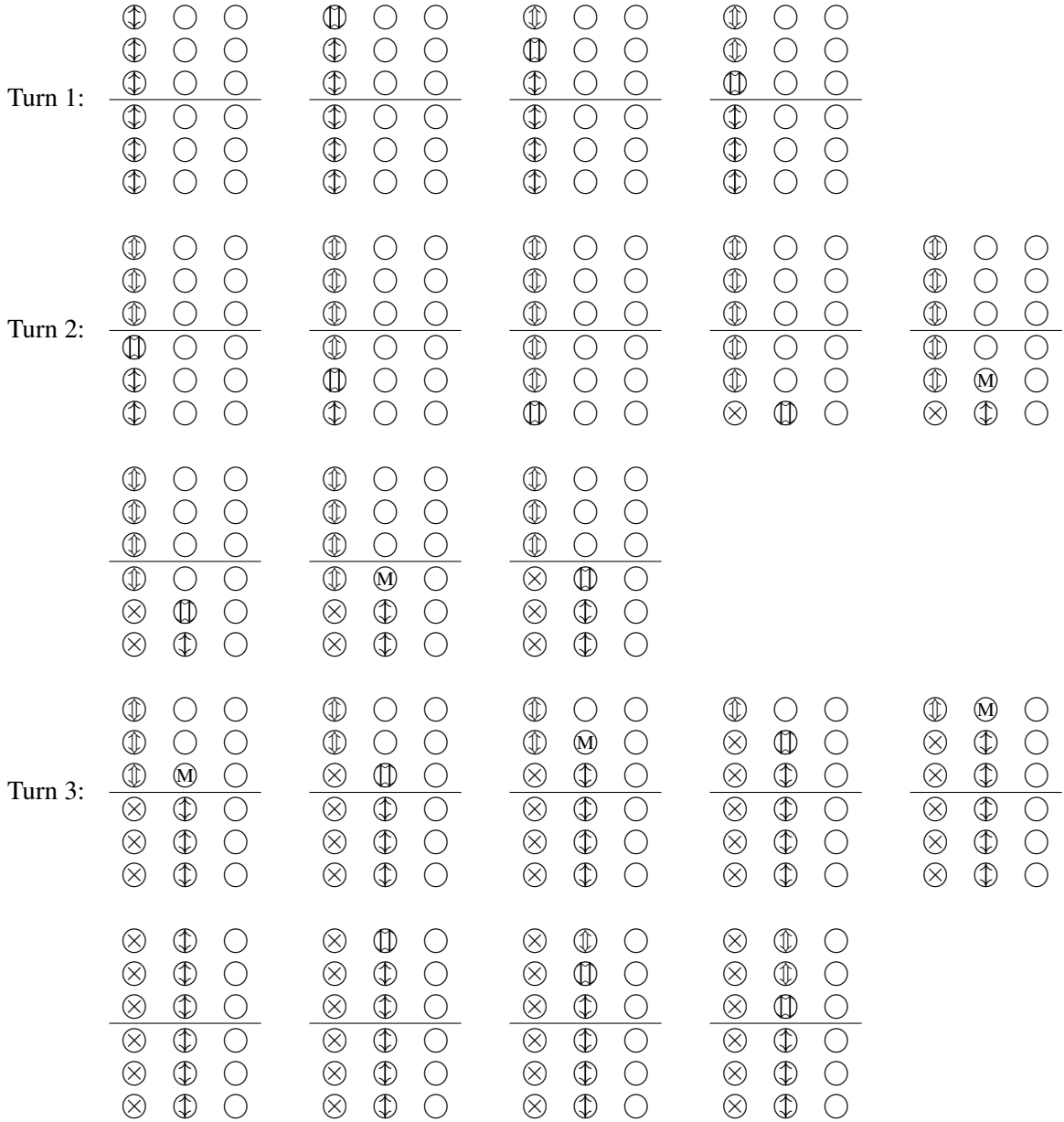

\center{{
\begin{equation*}
\begin{array}{cccccccccc}
\text{Turn 1:} &
\begin{array}{ccc}
\stateb & \statea & \statea \\
\stateb & \statea & \statea \\
\stateb & \statea & \statea \\
\hline
\stateb & \statea & \statea \\
\stateb & \statea & \statea \\
\stateb & \statea & \statea
\end{array}& &
\begin{array}{ccc}
\statee & \statea & \statea \\
\stateb & \statea & \statea \\
\stateb & \statea & \statea \\
\hline
\stateb & \statea & \statea \\
\stateb & \statea & \statea \\
\stateb & \statea & \statea 
\end{array}& & 
\begin{array}{ccc}
\statec & \statea & \statea \\
\statee & \statea & \statea \\
\stateb & \statea & \statea \\
\hline
\stateb & \statea & \statea \\
\stateb & \statea & \statea \\
\stateb & \statea & \statea 
\end{array} & &
\begin{array}{ccc}
\statec & \statea & \statea \\
\statec & \statea & \statea \\
\statee & \statea & \statea \\
\hline
\stateb & \statea & \statea \\
\stateb & \statea & \statea \\
\stateb & \statea & \statea 
\end{array}\\\nonumber
&&&\\\nonumber

\text{Turn 2:} &
\begin{array}{ccc}
\statec & \statea & \statea \\
\statec & \statea & \statea \\
\statec & \statea & \statea \\
\hline
\statee & \statea & \statea \\
\stateb & \statea & \statea \\
\stateb & \statea & \statea 
\end{array}& &
\begin{array}{ccc}
\statec & \statea & \statea \\
\statec & \statea & \statea \\
\statec & \statea & \statea \\
\hline
\statec & \statea & \statea \\
\statee & \statea & \statea \\
\stateb & \statea & \statea 
\end{array}& & 
\begin{array}{ccc}
\statec & \statea & \statea \\
\statec & \statea & \statea \\
\statec & \statea & \statea \\
\hline
\statec & \statea & \statea \\
\statec & \statea & \statea \\
\statee & \statea & \statea 
\end{array} & &
\begin{array}{ccc}
\statec & \statea & \statea \\
\statec & \statea & \statea \\
\statec & \statea & \statea \\
\hline
\statec & \statea & \statea \\
\statec & \statea & \statea \\
\stated & \statee & \statea 
\end{array}& &
\begin{array}{ccc}
\statec & \statea & \statea \\
\statec & \statea & \statea \\
\statec & \statea & \statea \\
\hline
\statec & \statea & \statea \\
\statec & \statem & \statea \\
\stated & \stateb & \statea 
\end{array}\\\nonumber
&&&\\\nonumber

 &
\begin{array}{ccc}
\statec & \statea & \statea \\
\statec & \statea & \statea \\
\statec & \statea & \statea \\
\hline
\statec & \statea & \statea \\
\stated & \statee & \statea \\
\stated & \stateb & \statea 
\end{array}& &
\begin{array}{ccc}
\statec & \statea & \statea \\
\statec & \statea & \statea \\
\statec & \statea & \statea \\
\hline
\statec & \statem & \statea \\
\stated & \stateb & \statea \\
\stated & \stateb & \statea
\end{array}
 & &
 \begin{array}{ccc}
\statec & \statea & \statea \\
\statec & \statea & \statea \\
\statec & \statea & \statea \\
\hline
\stated & \statee & \statea \\
\stated & \stateb & \statea \\
\stated & \stateb & \statea 
\end{array}
\\\nonumber
&&&\\\nonumber

\text{Turn 3:} &
\begin{array}{ccc}
\statec & \statea & \statea \\
\statec & \statea & \statea \\
\statec & \statem & \statea \\
\hline
\stated & \stateb & \statea \\
\stated & \stateb & \statea \\
\stated & \stateb & \statea 
\end{array}& &

\begin{array}{ccc}
\statec & \statea & \statea \\
\statec & \statea & \statea \\
\stated & \statee & \statea \\
\hline
\stated & \stateb & \statea \\
\stated & \stateb & \statea \\
\stated & \stateb & \statea 
\end{array}& & 
\begin{array}{ccc}
\statec & \statea & \statea \\
\statec & \statem & \statea \\
\stated & \stateb & \statea \\
\hline
\stated & \stateb & \statea \\
\stated & \stateb & \statea \\
\stated & \stateb & \statea 
\end{array} & &
\begin{array}{ccc}
\statec & \statea & \statea \\
\stated & \statee & \statea \\
\stated & \stateb & \statea \\
\hline
\stated & \stateb & \statea \\
\stated & \stateb & \statea \\
\stated & \stateb & \statea
\end{array} & &
\begin{array}{ccc}
\statec & \statem & \statea \\
\stated & \stateb & \statea \\
\stated & \stateb & \statea \\
\hline
\stated & \stateb & \statea \\
\stated & \stateb & \statea \\
\stated & \stateb & \statea
\end{array}
\\\nonumber
&&&\\\nonumber

 &
\begin{array}{ccc}
\stated & \stateb & \statea\\
\stated & \stateb & \statea\\
\stated & \stateb & \statea\\
\hline
\stated & \stateb & \statea\\
\stated & \stateb & \statea\\
\stated & \stateb & \statea
\end{array}& & 
\begin{array}{ccc}
\stated & \statee & \statea\\
\stated & \stateb & \statea\\
\stated & \stateb & \statea\\
\hline
\stated & \stateb & \statea\\
\stated & \stateb & \statea\\
\stated & \stateb & \statea
\end{array}& & 
\begin{array}{ccc}
\stated & \statec & \statea \\
\stated & \statee & \statea \\
\stated & \stateb & \statea \\
\hline
\stated & \stateb & \statea \\
\stated & \stateb & \statea \\
\stated & \stateb & \statea 
\end{array}& &
\begin{array}{ccc}
\stated & \statec & \statea \\
\stated & \statec & \statea \\
\stated & \statee & \statea \\
\hline
\stated & \stateb & \statea \\
\stated & \stateb & \statea \\
\stated & \stateb & \statea 
\end{array}
\\\nonumber
&&&\\\nonumber
\end{array}\nonumber
\end{equation*}
\vspace{-13mm}
}}
\caption{An example of the first three turns when $n=6$, $T=2$, and the cut is between the third line and the fourth line.}
\label{fig:turns}
\end{figure}

For the $p$-th turn of legal shapes starting at the $t_s^{(p)}$-th legal shape and ending at the $t_e^{(p)}$-th legal shape, we define the uniform superposition state of this turn to be
\begin{align}
\ket{\zeta(p)}\equiv\frac{1}{\sqrt{t_e^{(p)}-t_s^{(p)}+1}}\sum_{\tau=t_s^{(p)}}^{t_e^{(p)}}\ket{\gamma(\tau)}.\label{eqn:turn-superposition}
\end{align}
Then, the ground state $\psiground$ defined in \cref{eqn:2d-groundstate} can also be expressed as
\begin{align*}
\ket{\psi_{\mathrm{ground}}}=\sum_{p=1}^{T+1}\sqrt{\frac{t_e^{(p)}-t_s^{(p)}+1}{L+1}}\ket{\zeta(p)}.
\end{align*}

\begin{lemma}
For any $p_1\neq p_2\in[1,T+1]$, the uniform superposition states $\ket{\zeta(p_1)}$ and $\ket{\zeta(p_2)}$ of these two turns are cutwise orthogonal with respect to the cut per \cref{def:cutwise-orthogonality}. \label{lem:cutwise-orthogonality}
\end{lemma}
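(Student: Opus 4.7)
The plan is to reduce cutwise orthogonality of $\ket{\zeta(p_1)}$ and $\ket{\zeta(p_2)}$ to orthogonality of the individual legal shapes restricted to each side of the cut, and then exhibit a witness qudit on each side using a monotonicity property of the global sweep.

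First I would observe that within any turn $p$, by the very definition of a turn every propagation rule fired between consecutive shapes acts only on qudits of one side of the cut, which I call the active side $S_p\in\{A,B\}$; the other side $\bar S_p$ is frozen throughout the turn. Hence $\ket{\gamma(\tau)} = \ket{\alpha^{(p)}_\tau}_{S_p}\otimes\ket{\beta^{(p)}}_{\bar S_p}$ for every $\tau\in[t_s^{(p)},t_e^{(p)}]$, and summing yields
\[
\ket{\zeta(p)} \;=\; \ket{\xi^{(p)}}_{S_p}\otimes\ket{\beta^{(p)}}_{\bar S_p}, \qquad \ket{\xi^{(p)}}\deq \frac{1}{\sqrt{t_e^{(p)}-t_s^{(p)}+1}}\sum_\tau \ket{\alpha^{(p)}_\tau},
\]
which is a product state across the cut. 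Both reduced states $\zeta_A(p)$ and $\zeta_B(p)$ are therefore rank-one projectors, so cutwise orthogonality between two turns $p_1\neq p_2$ reduces to showing $\braket{\gamma(\tau_1)}{\gamma(\tau_2)}_A=0$ and $\braket{\gamma(\tau_1)}{\gamma(\tau_2)}_B=0$ for every $\tau_1\in$ turn $p_1$ and $\tau_2\in$ turn $p_2$. Since the six phases (unborn, first, second, dead, flag, marker) correspond to \emph{disjoint} subsets of the nine qudit basis states, a single $A$-qudit whose phase differs between $\tau_1$ and $\tau_2$ already forces the restricted states to be orthogonal on $A$ (irrespective of the orientations of the active-phase qudits), and symmetrically for $B$, so the task becomes producing one such witness position on each side of the cut.

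To produce the witness I would encode, for each column $r$, the $A$-status of that column (the joint phase configuration of its rows $1,\dots,c$) as an integer walking along the life cycle unborn $\to$ upward-populating $\to$ fully-first $\to$ downward-active $\to$ fully-second $\to$ upward-depopulating $\to$ fully-dead. Under this encoding every propagation rule in the 2D Hamiltonian either leaves all column statuses unchanged or increments exactly one column's $A$-status by a single tick; crucially, every cross-cut rule (downward at the pair $(c,c+1)$, or upward at $(c+1,c)$) strictly increments the $A$-status of the column at which it fires. Turn boundaries are precisely the times at which cross-cut rules fire, so any two different turns are separated by at least one strict advance of the $A$-progress vector, and monotonicity prevents reversion at any later time. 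Consequently the $A$-progress vectors at $\tau_1$ and at $\tau_2$ differ in at least one entry, and the bijection between column statuses and $A$-phase assignments supplies the witness $A$-qudit demanded above. The symmetric argument on the $B$-side completes the proof.

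The hard part will be verifying this monotonicity rigorously. Individual qudit phases do recur across rounds — for instance, a column-$(r{+}1)$ qudit visits the flag phase once during the upward stage of round $r$ and again during the downward stage of round $r{+}1$ — so the argument genuinely has to be lifted from single qudits to the coarser column-status integer. This forces a finite case analysis against each family of propagation terms in the 2D clock construction (downward, upward, and their cross-cut variants) to confirm that no rule ever decreases any column's $A$-status, that each cross-cut rule strictly increments the relevant status, and that the column status uniquely determines the $A$-phase assignment on that column. The checks are mechanical but not entirely routine, and are the main place where bookkeeping errors could creep in.
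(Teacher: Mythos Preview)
Your product-state reduction in the first paragraph is incorrect: within a turn the propagation rules do act only on the active side $S_p$, but the base state $\ket{\gamma(t_s^{(p)})}$ is already entangled across the cut (it encodes an intermediate or output state of the circuit), so you cannot write $\ket{\gamma(\tau)}=\ket{\alpha_\tau^{(p)}}_{S_p}\otimes\ket{\beta^{(p)}}_{\bar S_p}$, and $\ket{\zeta(p)}$ is not a product state. Consequently the reduced states $\zeta_A(p),\zeta_B(p)$ are not rank-one, and your claimed reduction to a single overlap $\braket{\gamma(\tau_1)}{\gamma(\tau_2)}_A$ does not follow from that reasoning. The reduction is still salvageable, but for a different reason: each $\ket{\gamma(\tau)}$ has a \emph{definite $A$-shape} (every $A$-qudit has a fixed phase), so $\ket{\zeta(p)}$ lies in $\big(\bigoplus_{\tau\in p}\cH_A^{\text{shape}_A(\tau)}\big)\otimes\cH_B$, which forces $\supp\zeta_A(p)\subseteq\bigoplus_{\tau\in p}\cH_A^{\text{shape}_A(\tau)}$. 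If the sets of $A$-shapes for the two turns are disjoint (and likewise for $B$), orthogonality of the reduced states follows. You should replace the product-state argument with this support argument.

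For the actual shape-disjointness, your global monotone ``column $A$-status'' invariant is a valid idea and would work after the case check you outline, but the paper uses a much shorter route: a two-case argument. If $|p_1-p_2|>1$, the number of dead-phase qudits already differs on \emph{both} sides of the cut, giving witnesses immediately. If $|p_1-p_2|=1$, say the later turn's changes occur below the cut: then every shape in the later turn has a flag or marker qudit \emph{below} the cut that is absent from every shape of the earlier turn (witness on $B$), while above the cut the later turn has a column whose above-cut portion is entirely second-phase, which never occurs in the earlier turn (witness on $A$). This avoids the monotonicity bookkeeping entirely; your approach buys a uniform argument at the cost of the finite case analysis you flagged, whereas the paper trades generality for a two-line check per case.
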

\begin{proof}
We prove this lemma by showing that for any two states $\ket{\gamma(t_1)},\ket{\gamma(t_2)}$ in these two turns, respectively, are cutwise orthogonal.

If $|p_1-p_2|>1$, note that the number of qudits in dead phases are different on both sides of the cut for any possible $\ket{\gamma(t_1)}$ and $\ket{\gamma(t_2)}$. Hence, they are cutwise orthogonal.

If $|p_1-p_2|=1$, i.e., $p_1$ and $p_2$ are two consecutive turns, then the changes of shapes two turns are on different sides of the cut. Without loss of generality we assume $p_2=p_1+1$ and the changes in the $p_2$-th turn happen below the cut. Then, $\ket{\gamma(t_2)}$ contains a qudit in the flag phase $\ket{\statee}$ or the marker phase $\ket{\statem}$ below the cut, which is not present in $\ket{\gamma(t_1)}$. As for the part above the cut, note that $\ket{\gamma(t_2)}$ contains a column whose qudits above the cut are all in the second phase $\ket{\stateb}$, which is not present in $\ket{\gamma(t_1)}$. Hence, $\ket{\gamma(t_1)}$ and $\ket{\gamma(t_2)}$ are cutwise orthogonal.
\end{proof}
\paragraph{Padding does not change entanglement structure within a turn.}
\noindent The next lemma can be viewed as a sanity check. Suppose we are at the time instance corresponding to $t_s = (3n-1)R$, so that $|\gamma(t_s)\rangle$ corresponds to an encoding of $|\psi_{\text{output}}\rangle$. Note that the state $|\psi_{\text{output}}\rangle$ has nice entanglement properties that we desire. For $t \geq t_s$, all we do to the circuit is pad it with unitaries. Ideally, it should not change the entanglement structure of the encoded state, but it is not immediately obvious that it is indeed the case, because of the complicated way in which we encode the clock. In the next lemma, we assuage some of that concern by showing that the entanglement structure indeed \emph{does not} change, when we consider a superposition of temporal states, for $t > t_s$, that are all part of the same turn.

\begin{lemma}\label{lem:legal-shape-entanglement}
For any turn of states in legal shapes that starts at the $t_s$-th shape and ends at the $t_e$-th shape, denote
\begin{align*}
\ket{\zeta}\equiv\frac{1}{\sqrt{t_e-t_s+1}}\sum_{\tau=t_s}^{t_e}\ket{\gamma(\tau)}
\end{align*}
to be the uniform superposition state over all the clock states in this turn as in \cref{eqn:turn-superposition}. Then, if $t_s\geq (3n-1)R$, we have
\begin{align*}
S(\zeta_A)=S(\gamma_A(\tau))=S((\psi_{\mathrm{output}})_A),\quad\forall \tau\in[t_s,t_e].
\end{align*}
Otherwise, we have
\begin{align*}
0\leq S(\zeta_A)\leq 2n.
\end{align*}
\end{lemma}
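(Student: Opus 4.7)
The plan is to exploit two structural properties of the legal-shape states within a single turn to obtain a Schmidt decomposition of $\ket{\zeta}$ that shares its Schmidt coefficients with any individual $\ket{\gamma(\tau)}$ in the turn; equality of entanglement entropies then follows immediately. I will split into two symmetric cases based on which side of the cut the changes within the turn occur on; without loss of generality, assume that all changes happen on the $A$-side.

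The two key structural observations are as follows. First, \emph{$B$-side invariance}: the reduced state $(\gamma(\tau))_B$ is the same for all $\tau \in [t_s, t_e]$. This is because no phase on the $B$-side changes within the turn, and the circuit gates applied during the internal transitions of the turn act only on qubits encoded on the $A$-side, so the reduced state of the underlying $n$-qubit circuit state on the $B$-side qubits is invariant as well. Second, \emph{$A$-side orthogonality across time}: for $\tau \ne \tau'$, at least one $A$-side qudit is assigned a different phase (among unborn, first, second, flag, marker, or dead) in $\ket{\gamma(\tau)}$ versus $\ket{\gamma(\tau')}$; since the corresponding subspaces of $\C^9$ are pairwise orthogonal, the $A$-sides of $\ket{\gamma(\tau)}$ and $\ket{\gamma(\tau')}$ lie in orthogonal subspaces of $\cH_A$.

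Using the first observation, each $\ket{\gamma(\tau)}$ admits a Schmidt decomposition $\ket{\gamma(\tau)} = \sum_i \sqrt{p_i}\,\ket{a_i^{(\tau)}} \otimes \ket{b_i}$ with common coefficients $\sqrt{p_i}$ and a common $B$-basis $\{\ket{b_i}\}$; only the $A$-side vectors $\ket{a_i^{(\tau)}}$ depend on $\tau$. Combined with the second observation, the full family $\{\ket{a_i^{(\tau)}}\}_{i,\tau}$ is orthonormal. Substituting and collecting yields
\begin{equation*}
\ket{\zeta} = \sum_i \sqrt{p_i}\,\ket{\tilde a_i} \otimes \ket{b_i}, \qquad \ket{\tilde a_i} = \tfrac{1}{\sqrt{t_e-t_s+1}} \sum_\tau \ket{a_i^{(\tau)}},
\end{equation*}
where $\{\ket{\tilde a_i}\}$ is orthonormal, giving a valid Schmidt decomposition of $\ket{\zeta}$. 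Reading off Schmidt coefficients gives $S(\zeta_A) = -\sum_i p_i \log p_i = S(\gamma_A(\tau))$ for any $\tau$ in the turn. For $t_s \geq (3n-1)R$ I then invoke \cref{fact:legal-shape-entanglement} to replace $S(\gamma_A(\tau))$ with $S((\psi_{\mathrm{output}})_A)$, completing the first case; for the general regime, the bound $S(\gamma_A(\tau)) \leq n \leq 2n$ follows because $\ket{\gamma(\tau)}$ is an isometric encoding of an $n$-qubit circuit state (the marker qudit, when present, always sits in the same row as its source in the preceding column and hence on the same side of any horizontal cut, contributing nothing extra).

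I expect the main obstacle to be verifying the $A$-side orthogonality observation uniformly across the various transition types---downward-stage gate applications, upward-stage shift-right transitions, and upward-stage promote-marker transitions---as well as the boundary cases around flag-to-marker and marker-to-flag exchanges. This requires a careful case analysis against the transition rules encoded in the propagation terms $H_\ell$ from \cref{sec:2D-Hamiltonian}, confirming that each intra-turn transition genuinely alters at least one $A$-side qudit's phase, and hence that the $A$-support moves into a strictly orthogonal subspace.
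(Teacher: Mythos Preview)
Your proof is correct and takes essentially the same approach as the paper: both exploit that within a turn the $B$-side is invariant while the $A$-sides for different $\tau$ lie in mutually orthogonal subspaces, so the Schmidt coefficients of $\ket{\zeta}$ coincide with those of any individual $\ket{\gamma(\tau)}$. The only difference is in the ``otherwise'' case $t_s < (3n-1)R$: the paper handles it by a separate dimension count (the active qudits are confined to at most two columns, so the relevant subspace has dimension at most $2^{2n}$), whereas you extend the Schmidt argument uniformly---using only $B$-side invariance of the \emph{reduced} state rather than invariance of the full encoded state---to obtain the tighter bound $S(\zeta_A)=S(\gamma_A(\tau))\leq n$.
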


\begin{proof}
    Without loss of generality we assume the changes of the shapes in this turn happens above the cut. Observe that in all the states in this phase, their \textbf{active qudits} string encodes the same quantum state $\psioutput$ in their orientations and the encodings are cut through at the same position, however in orthogonal subspaces. Hence, if $\psioutput$ admits the following Schmidt decomposition
    \begin{align*}
    \psioutput=\sum_i\nu_i\ket{\psi_{i}^{(A)}}\otimes\ket{\psi_{i}^{(B)}},
    \end{align*}
    across the cut, where $A$ and $B$ denote the subsystems above and below the cut, respectively, the state in the $\tau$-th shape $\ket{\gamma(\tau)}$ admits the following decomposition
    \begin{align*}
    \ket{\gamma(\tau)}=\sum_i\nu_i\ket{\gamma_{i}^{(A)}(\tau)}\otimes\ket{\gamma_{i}^{(B)}},
    \end{align*}
    where $\{\ket{\gamma_{i}^{(A)}(\tau_1)}\}$ and $\{\ket{\gamma_{i}^{(A)}(\tau_2)}\}$ span orthogonal subspaces for any $\tau_1\neq\tau_2\in[t_s,t_e]$, and the RHS of the tensor product is identical for every $t_s\leq\tau\leq t_e$. Hence, the superposition state $\ket{\zeta}$ of this phase satisfies
    \begin{align*}
    \ket{\zeta}
    &=\frac{1}{\sqrt{t_e-t_s+1}}\sum_{\tau=t_s}^{t_e}\sum_i\nu_i\ket{\gamma_{i}^{(A)}(\tau)}\otimes\ket{\gamma_{i}^{(B)}}\\
    &=\sum_i\nu_i\left(\frac{1}{\sqrt{t_e-t_s+1}}\sum_{\tau=t_s}^{t_e}\ket{\gamma_{i}^{(A)}(\tau)}\right)\otimes\ket{\gamma_{i}^{(B)}}.
    \end{align*}
    Hence, for any $t_s\leq\tau\leq t_e$, tracing out the left side of $\ket{\gamma(\tau)}$ yields the same reduced state as tracing out the left side of $\ket{\zeta}$, by which we can conclude that
    \begin{align*}
        S(\zeta_A)=S(\gamma_A(\tau))=S((\psi_{\mathrm{output}})_A),\quad\forall t_s\leq\tau\leq t_e.
    \end{align*}
    where the last equation is from \cref{fact:legal-shape-entanglement}. 

\end{proof}
\begin{remark}
    As for the case where $t_s<(3n-1)R$, it is possible that in this turn there exists some clock state $\ket{\gamma(\tau)}$ encoding an intermediate state of the circuit rather than $\psioutput$. Nevertheless, all the clock states in this turn have their $(\mathbf{active\ qudits})$ concentrated within only two columns of the 2D grid. Hence, $\ket{\zeta}$ is in a Hilbert space spanned by $2n$ basis states which has dimension at most $2^{2n}$, which leads to
    \begin{align*}
0\leq S(\zeta_A)\leq 2n.
\end{align*}
\end{remark}

\paragraph{Showing that the ground state inherits desirable entanglement properties because of padding.}
In this section, we will collect all our ideas together and prove that the entanglement entropy of the ground state of our clock Hamiltonian is, more or less, the same as the entanglement entropy of $|\psi_{\text{output}}\rangle$. At a high level, this is because of padding. Since we pad with sufficiently many identities, the entanglement structure of the ground state looks fairly similar to $|\psi_{\text{output}}\rangle$, because of a preponderence of terms which encode $|\psi_{\text{output}}\rangle$ in the superposition. 
\begin{lemma}
\label{entanglement difference 2D}
Consider a horizontal cut through the 2D grid of size $n\times(T+1)$ such that the cut has distance at least $\omega(\log n)$ to both the top and the bottom of the grid, the entanglement entropy $S((\psi_{\mathrm{ground}})_A)$ of the ground state $\ket{\psi_{\mathrm{ground}}}$ defined in \cref{eqn:2d-groundstate} satisfies
\begin{align*}
\left(1-\frac{1}{n}\right)\cdot S((\psi_{\mathrm{output}})_A)\leq S((\psi_{\mathrm{ground}})_A)\leq S((\psi_{\mathrm{output}})_A)+\poly\log n,
\end{align*}
for any $T=\poly n$ satisfying $T\geq 3nR$. \label{prop:2D-groundstate-entanglement}
\end{lemma}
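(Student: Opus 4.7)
The plan is to decompose $\ket{\psi_{\mathrm{ground}}}$ as a superposition of the per-turn uniform states $\ket{\zeta(p)}$, exploit the cutwise orthogonality between different turns, and then apply the entropy formula of \cref{lem:CO-entanglement}. Concretely, write
\begin{equation*}
\ket{\psi_{\mathrm{ground}}} = \sum_{p=1}^{T+1} \alpha_p \ket{\zeta(p)}, \qquad \alpha_p^2 = \frac{t_e^{(p)}-t_s^{(p)}+1}{L+1},
\end{equation*}
and note by \cref{lem:cutwise-orthogonality} that the $\ket{\zeta(p)}$'s are pairwise cutwise orthogonal with respect to the horizontal cut under consideration. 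Applying \cref{lem:CO-entanglement} then yields
\begin{equation*}
S((\psi_{\mathrm{ground}})_A) = \sum_{p=1}^{T+1} \alpha_p^2\, S((\zeta(p))_A) + H(\alpha), \qquad H(\alpha) \deq \sum_{p=1}^{T+1} \alpha_p^2 \log\frac{1}{\alpha_p^2}.
\end{equation*}
Since there are only $T+1 = \poly(n)$ turns, $H(\alpha) \leq \log(T+1) = O(\log n)$.

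Next I would split the turns into ``good'' ones, whose starting time satisfies $t_s^{(p)} \geq (3n-1)R$ (so that the encoded circuit state is exactly $\ket{\psi_{\mathrm{output}}}$), and ``bad'' ones with $t_s^{(p)} < (3n-1)R$. By \cref{lem:legal-shape-entanglement}, good turns contribute exactly $S((\psi_{\mathrm{output}})_A)$ to the first sum, while bad turns contribute at most $2n$. The crucial quantitative step is to show that the total weight on bad turns is small: bad turns lie entirely within the pre-padding time interval $[0, (3n-1)R]$, so
\begin{equation*}
\sum_{p \text{ bad}} \alpha_p^2 \leq \frac{(3n-1)R}{L+1} \leq \frac{R}{T} \leq \frac{1}{3n},
\end{equation*}
using the assumption $T \geq 3nR$. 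Correspondingly, the good turns carry weight at least $1 - \tfrac{1}{3n}$.

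Substituting these bounds gives the desired sandwich. For the upper bound,
\begin{equation*}
S((\psi_{\mathrm{ground}})_A) \leq S((\psi_{\mathrm{output}})_A) + \frac{1}{3n}\cdot 2n + O(\log n) \leq S((\psi_{\mathrm{output}})_A) + \poly\log n.
\end{equation*}
For the lower bound, dropping the nonnegative bad-turn contributions and $H(\alpha) \geq 0$,
\begin{equation*}
S((\psi_{\mathrm{ground}})_A) \geq \Bigl(1 - \tfrac{1}{3n}\Bigr) S((\psi_{\mathrm{output}})_A) \geq \Bigl(1 - \tfrac{1}{n}\Bigr) S((\psi_{\mathrm{output}})_A).
\end{equation*}

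The substantive ingredients are the cutwise orthogonality across turns and the invariance of entanglement within a padded turn, both already established. The main thing to be careful about is the accounting of ``bad'' turns: one must verify that every turn whose entanglement might differ from $S((\psi_{\mathrm{output}})_A)$ lies entirely in the unpadded prefix, and that the sum of their weights is indeed $O(R/T)$ regardless of where the cut is placed. This follows because for any horizontal cut, each turn occupies a contiguous interval of time-steps whose length is dictated only by how many downward/upward passes lie on each side of the cut (rather than on the cut's vertical position), so the partition of $[0,L]$ into turns refines the partition into rounds and the bad prefix is no longer than $(3n-1)R$.
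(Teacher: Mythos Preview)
Your proof is correct and follows essentially the same route as the paper's: decompose $\ket{\psi_{\mathrm{ground}}}$ into per-turn states $\ket{\zeta(p)}$, invoke cutwise orthogonality (\cref{lem:cutwise-orthogonality}) to apply \cref{lem:CO-entanglement}, then split into good and bad turns via \cref{lem:legal-shape-entanglement} and bound the bad weight using $T\ge 3nR$. One small caveat: your final assertion that the turn partition \emph{refines} the round partition is not literally true (a turn can straddle a round boundary, as the example in \cref{fig:turns} shows), but this only affects the bad-weight estimate by at most one extra turn of length $O(n)$, which is absorbed in the $O(R/T)$ bound and the stated inequalities still hold.
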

\begin{proof}
For any $p\in[1,T+1]$, we denote
\begin{align*}
\alpha_p\equiv\sqrt{\frac{t_e^{(p)}-t_s^{(p)}+1}{L+1}},
\end{align*}
where $t_s^{(p)}$ and $t_e^{(p)}$ separately denotes the number of starting shape and ending shape of the $p$-th turn. Then the ground state $\psiground$ defined in \cref{eqn:2d-groundstate} satisfies
\begin{align*}
\psiground=\sum_{p=1}^{T+1}\alpha_p\ket{\zeta(p)}.
\end{align*}
Since the states $\ket{\zeta(p)}$  are shown to be cutwise orthogonal by \cref{lem:cutwise-orthogonality}, we can derive that
\begin{align*}
S((\psi_{\mathrm{ground}})_A)=\sum_{p=1}^{T+1}|\alpha_p|^2S(\zeta_A(p))+2\sum_{p=1}^{T+1}|\alpha_p|^2\log(1/|\alpha_p|)
\end{align*}
according to \cref{lem:CO-entanglement}. Furthermore, by \cref{lem:legal-shape-entanglement} we can establish that
\begin{align*}
S(\zeta_A(p))=S((\psi_{\mathrm{output}})_A),\quad\forall p\in(R+1,T+1],
\end{align*}
which leads to
\begin{align}
S((\psi_{\mathrm{ground}})_A)=\sum_{p=R+2}^{T+1}|\alpha_p|^2S((\psi_{\mathrm{output}})_A)+2\sum_{p=1}^{T+1}|\alpha_p|^2\log(1/|\alpha_p|)+\sum_{p=1}^{R+1}|\alpha_p|^2S(\zeta_A(p))\label{eqn:ground-entanglement-decomposition}
\end{align}
Given that $T=\poly n$, we have
\begin{align*}
0\leq \sum_{p=1}^{T+1}|\alpha_p|^2\log(1/|\alpha_p|)\leq \poly\log n
\end{align*}
for any possible values of $\alpha_1,\ldots,\alpha_{T+1}$. As for the third term in \cref{eqn:ground-entanglement-decomposition}, we have
\begin{align*}
0\leq\sum_{p=1}^{R+1}|\alpha_p|^2S(\zeta_A(p))\leq 2n\cdot\sum_{p=1}^{R+1}|\alpha_p|^2,
\end{align*}
given that $0\leq S(\zeta_A(p))\leq 2n$ by \cref{lem:legal-shape-entanglement}. Furthermore, we have
\begin{align*}
\sum_{p=1}^{R+1}|\alpha_p|^2\leq\sum_{p=1}^{R+1}\frac{t_e^{(p)}-t_s^{(p)}+1}{L+1}\leq\frac{3nR}{2nT+1}\leq\frac{1}{n},
\end{align*}
and
\begin{align*}
\sum_{p=R+2}^{T+1}|\alpha_p|^2=1-\sum_{p=1}^{R+1}|\alpha_p|^2\geq1-\frac{1}{n}.
\end{align*}
Thus, we can conclude that
\begin{align*}
\left(1-\frac{1}{n}\right)\cdot S((\psi_{\mathrm{output}})_A)\leq S((\psi_{\mathrm{ground}})_A)\leq S((\psi_{\mathrm{output}})_A)+\poly\log n.
\end{align*}
\end{proof}

\subsubsection{Learning ground state entanglement structure for 2D geometrically local Hamiltonians}

We can now use the padded circuit-to-2D-Hamiltonian construction to convert the pseudoentangled states ensembles we constructed in \cref{sec:multicut} into families of 2D Hamiltonians for which it is difficult to decide whether the ground state has low entanglement (i.e., $\poly\log n$) or high entanglement (i.e., $\Omega(n)$) entanglement across horizontal cuts through the $n\times\poly(n)$ grid. Formally, we show the following.

\begin{theorem}
\label{thm:LGSES-hard-2d-formal}
For every $n \in \N$, there exist two families $\mathcal{H}^\lo_n$ and $\mathcal{H}^\hi_n$ of \emph{geometrically 2D-local} Hamiltonians on $(n \times \poly n)$ qudits arranged in an $n \times \poly(n)$ grid with spectral gap $\Omega(1/\poly(n))$, and there are efficient procedures that sample (classical descriptions of) Hamiltonians from either family (denoted $H \leftarrow \mathcal{H}^\lo_n$ and $H \leftarrow \mathcal{H}^\hi_n$) with the following properties:
\begin{enumerate}
\item Hamiltonians sampled according to $H \leftarrow \mathcal{H}^\lo_n$ and $H \leftarrow \mathcal{H}^\hi_n$ are computationally indistinguishable under \cref{lwe_assumption_subexp}. \label{item:ham2d_indist}
\item  With overwhelming probability, the ground states of Hamiltonians $H \leftarrow \mathcal{H}^\lo_n$ have $\poly\log n$ entanglement across horizontal cuts through the grid that are at least $\omega(\log n)$ far from the boundary, and Hamiltonians $H \leftarrow \mathcal{H}^\lo_n$ have $\Omega(n)$ entanglement across the same cuts.
\end{enumerate}
\end{theorem}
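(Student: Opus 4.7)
The plan is to combine three ingredients from earlier in the paper: the multi-cut pseudoentangled state families $\Psi^\lo_n, \Psi^\hi_n$ from \cref{thm:multi_cut}, the geometrically 2D-local history-state Hamiltonian $H_{2D}$ from \cref{sec:2D-Hamiltonian}, and the entanglement relationship between ground state and output state established in \cref{prop:2D-groundstate-entanglement}. First, for each key $k\in\cK^\lo_n\cup\cK^\hi_n$, consider the efficient preparation circuit $C_k$ for $\ket{\psi_k}$ guaranteed by \cref{item:def:eff_prep} of \cref{def:pk_entanglement_singlecut}. By inserting identity and SWAP gates, rewrite $C_k$ as a nearest-neighbor 1D circuit consisting of $R=\poly(n)$ rounds, each round composed of exactly $n$ gates acting on qubits $1,(1,2),(2,3),\dots$ as required by \cref{sec:2D-Hamiltonian}. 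Then pad with $M=\poly(n)$ further rounds of identity gates so that the total number of rounds $T=R+M$ satisfies $T\geq 3nR$, which is the hypothesis needed for \cref{prop:2D-groundstate-entanglement}.

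Next, apply the 2D clock construction from \cref{sec:2D-Hamiltonian} to this padded circuit to produce a geometrically 2D-local Hamiltonian $H_k\deq H_{2D}(C_k)$ on an $n\times(T+1)$ grid of $9$-state qudits. Define $\cH^\lo_n\deq\{H_k\}_{k\in\cK^\lo_n}$ and $\cH^\hi_n\deq\{H_k\}_{k\in\cK^\hi_n}$; both can be sampled efficiently because keys can be sampled efficiently (\cref{item:def:key_sampling}) and the map $k\mapsto H_k$ is efficiently computable. Geometric 2D locality and the $\Omega(1/\poly(n))$ spectral gap follow directly from the construction and from \cref{lem:H2D-ground-state}.

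For computational indistinguishability (item 1), I would argue by contradiction: any $\poly(n)$-time distinguisher $\cA$ for $\cH^\lo_n$ versus $\cH^\hi_n$ yields a distinguisher $\cA'$ for $\cK^\lo_n$ versus $\cK^\hi_n$ of the same advantage, simply by composing with the efficient map $k\mapsto H_k$. This would contradict \cref{item:def:com_indist} of \cref{def:pk_entanglement_singlecut}, which holds under \cref{lwe_assumption_subexp} by \cref{thm:multi_cut}. For the entanglement gap (item 2), fix any horizontal cut located at row $c$ with $\min(c,n-c)=\omega(\log n)$. By construction, such a horizontal cut corresponds to the 1D qubit bipartition into the first $c$ and last $n-c$ qubits of $\ket{\psi_k}=\ket{\psi_{\mathrm{output}}}$, which is a geometrically local cut on the 1D line at distance $\omega(\log n)$ from the boundary. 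By \cref{thm:multi_cut}, with overwhelming probability $S((\psi_{\mathrm{output}})_A)\leq O(\poly\log n)$ for $k\leftarrow\cK^\lo_n$ and $S((\psi_{\mathrm{output}})_A)\geq\Omega(\min(c,n-c))=\Omega(n)$ for $k\leftarrow\cK^\hi_n$ and cuts sufficiently far from the boundary. Applying \cref{prop:2D-groundstate-entanglement} yields
\[
\left(1-\tfrac{1}{n}\right)S((\psi_{\mathrm{output}})_A)\leq S((\psi_{\mathrm{ground}})_A)\leq S((\psi_{\mathrm{output}})_A)+\poly\log n,
\]
which transports both the upper and lower entanglement bounds from the output state to the ground state, completing the gap.

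The main technical worry is not any single step in isolation but making sure the correspondence between the 2D horizontal cut and the 1D qubit cut is exactly what \cref{prop:2D-groundstate-entanglement} and \cref{thm:multi_cut} require; in particular, the $\omega(\log n)$ boundary condition in \cref{prop:2D-groundstate-entanglement} is stated for the top/bottom of the 2D grid, while the $\omega(\log n)$ boundary condition in \cref{def:pk_entanglement_multicut} refers to the end of the 1D line of qubits, and under the row-to-qubit identification these coincide. A minor bookkeeping matter is choosing the padding factor $M$ large enough to satisfy both $T\geq 3nR$ (for the entanglement bound) and the approximate-ground-state closeness needed in \cref{lem:groundstates_all_close} with $\eps=1/\poly(n)$; both are achievable simultaneously with $M=\poly(n)$, and the resulting $O(1/\poly(n))$ trace-distance error changes the entropy only by $O(\poly\log n)$ via \cref{von Neumann continuity}, which is absorbed into the gap.
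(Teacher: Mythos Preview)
Your proposal is correct and follows essentially the same approach as the paper: build $H_k$ by applying the modified 2D circuit-to-Hamiltonian construction of \cref{sec:2D-Hamiltonian} to the padded preparation circuit for $\ket{\psi_k}$, inherit indistinguishability from the keys, obtain the spectral gap from \cref{lem:H2D-ground-state}, and derive the entanglement gap by combining \cref{prop:2D-groundstate-entanglement} with \cref{thm:multi_cut} and the continuity bound to pass from the history state to an arbitrary ground state. Your write-up is in fact somewhat more explicit than the paper's own proof (which defers most steps to the analogous argument in \cref{thm:formal-lgses-binary}); the one point where you are slightly loose, writing $\Omega(\min(c,n-c))=\Omega(n)$ for general cuts at distance $\omega(\log n)$ from the boundary, matches the level of precision in the paper's statement and proof.
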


\begin{proof}
We consider the same construction of $\cH_n^\lo$ and $\cH_n^\hi$ as in \cref{thm:formal-lgses-binary}, except that we now use the 2D circuit-to-Hamiltonian construction described in \cref{sec:2D-Hamiltonian}.
With this, we only need to show \cref{item:hamun_ent} as the condition on the spectral gap follows from~\cite{aharonov2008adiabatic} and the other properties follow in exactly the same way as in the proof of \cref{thm:formal-lgses-binary} 

Consider a Hamiltonian $H \leftarrow \cH^\lo_n$ or $H \leftarrow \cH^\hi_n$.
Let $|\psi_{\mathrm{ground}}\rangle$ be the history state of the circuit as defined in~\cref{eqn:2d-groundstate}.
By \cref{lem:groundstates_all_close} any ground state of $H$ is $\eps$-close for any $\eps = 1/\poly(n)$ that we can choose.
Therefore, choosing $\eps$ small enough and using the continuity bound for the von Neumann entropy (\cref{continuity_vonneumannentropy}), we can ensure that any ground state of $H$ only differs in entanglement entropy from $|\psi_{\mathrm{ground}}\rangle$ by $1/\poly(n)$.
It therefore suffices to show \cref{item:hamun_ent} for $|\psi_{\mathrm{ground}}\rangle$.
This follows directly from \cref{entanglement difference 2D} and the fact that the states $\ket{\psi_{\rm output}}$ are pseudoentangled with entanglement gap $O(\poly\log n)$ vs $\Omega(n)$ by our pseudoentanglement construction (\cref{thm:multi_cut}).
\end{proof}

\begin{remark}
Just as in \cref{first_remark}, under the standard LWE assumption (\cref{lwe_assumption}) \cref{thm:LGSES-hard-2d-formal} still holds, but with the smaller entanglement gap $O(n^\delta)$ vs $\Omega(n)$ for any $\delta > 0$.
\end{remark}

\bibliographystyle{alpha}
\bibliography{main}

\end{document}